\newtheorem{theorem}{Theorem}
\theoremstyle{plain}
\newtheorem{claim}{Claim}
\newtheorem{definition}{Definition}
\newtheorem{lemma}{Lemma}
\newtheorem{proposition}{Proposition}
\newtheorem{remark}{Remark}
\numberwithin{equation}{section}
\begin{document}
\title[3-body second order correction and uniform error estimate]{Second
Order Corrections to Mean Field Evolution for Weakly Interacting Bosons in
The Case of 3-body Interactions}
\author{Xuwen Chen}
\address{Department of Mathematics\\
University of Maryland\\
College Park, MD 20742}
\email{chenxuwen@math.umd.edu}
\date{06/24/2011 (v3 for ARMA)}
\subjclass[2010]{Primary 35Q55, 81Q05, 81V15, 81V70; Secondary 35B45, 35A23.}
\keywords{Many-Body Schr\"{o}dinger Equation, Mean Field, Fock Spaces,
Hartree Equation.}

\begin{abstract}
In this paper, we consider the Hamiltonian evolution of $N$ weakly
interacting Bosons. Assuming triple collisions, its mean field approximation
is given by a quintic Hartree equation. We construct a second order
correction to the mean field approximation using a kernel $k(t,x,y)\ $and
derive an evolution equation for $k$. We show the global existence for the
resulting evolution equation for the correction and establish an apriori
estimate comparing the approximation to the exact Hamiltonian evolution. Our
error estimate is global and uniform in time. Comparing with the work of
Rodnianski and Schlein\cite{RodnianskiAndSchlein}, and Grillakis, Machedon
and Margetis \cite{GMM1,GMM2}, where the error estimate grows in time, our
approximation tracks the exact dynamics for all time with an error of the
order $O\left( 1/\sqrt{N}\right) .$
\end{abstract}

\maketitle

\section{Introduction}

\label{section:intro}

In Bose-Einstein condensation (BEC), particles of integer spins
(\textquotedblleft Bosons\textquotedblright ) occupy a macroscopic quantum
state often called the \textquotedblleft condensate\textquotedblright . The
initial observation of Einstein and Bose was confirmed experimentally in
1995 and repeated later \cite{Anderson, Davis95, Stamper}. This phenomenon
has stimulated the study of the theory of many-body Boson systems.

In 3d, the dynamics of a system of $N$ interacting Bosons is governed by a
symmetric wave function which solves the $N-body$ Schr\"{o}dinger equation%
\begin{equation*}
i\partial _{t}\psi _{N}=(\triangle -V_{N})\psi _{N}\text{ in }\mathbb{R}%
^{3N+1}.
\end{equation*}%
Or written out explicitly, considering weak interactions and a condensation
initial, it is%
\begin{eqnarray}
i\partial _{t}\psi _{N} &=&\left( \sum_{i=1}^{N}\triangle _{x_{j}}-\frac{1}{%
N^{2}}\sum_{i<j<k}v_{N}(x_{i}-x_{j},x_{i}-x_{k})\right) \psi _{N}\text{ in }%
\mathbb{R}^{3N+1}  \label{Equation:3BodyPDE} \\
\psi _{N}|_{t=0} &=&\dprod\limits_{i=1}^{N}\phi _{0}(x_{i})\in L_{s}^{2}({%
\mathbb{R}}^{3N}),  \notag
\end{eqnarray}%
in the 3-body interaction case, and%
\begin{eqnarray}
i\partial _{t}\psi _{N} &=&\left( \sum_{i=1}^{N}\triangle _{x_{j}}-\frac{1}{N%
}\sum_{i<j}v_{N}(x_{i}-x_{j})\right) \psi _{N}\text{ in }\mathbb{R}^{3N+1}
\label{Equation:2BodyPDE} \\
\psi _{N}|_{t=0} &=&\dprod\limits_{i=1}^{N}\phi _{0}(x_{i})\in L_{s}^{2}({%
\mathbb{R}}^{3N}),  \notag
\end{eqnarray}%
in the 2-body interaction case. The reason that the initial $\psi
_{N}|_{t=0} $ is a tensor product for condensate is in \cite%
{Lieb1,Lieb2,Lieb3,Lieb4}.

However, as $N,$ the number of particles, increases, solving the $N-body$
Schr\"{o}dinger equation becomes unrealistic. To circumvent this difficulty,
we search for effective replacements or approximations which are
simultaneously easier to approach and accurate (in a suitable sense). We are
then led to the mean-field approximations in which the $N-body$ wave
function for the condensate is approximated by tensor products of a single
particle wave function satisfying an appropriate nonlinear Schr\"{o}dinger
equation.

Natural questions arise concerning the justification of the link between the
mean-field approximations and the actual many-body Hamiltonian evolutions.
Elgart, Erd\"{o}s, Schlein, and Yau~\cite%
{E-E-S-Y1,E-Y1,E-S-Y1,E-S-Y2,E-S-Y4,E-S-Y5, E-S-Y3} showed rigorously how
mean-field limits for Bosons can be extracted in the limit $N\rightarrow
\infty $ by using Bogoliubov-Born-Green-Kirkwood-Yvon (BBGKY) hierarchies
for reduced density matrices regarding equation \ref{Equation:2BodyPDE}. By
assuming a space-time estimate, Klainerman and Machedon gave another proof
of the uniqueness part of the argument in \cite{KlainermanAndMachedon}.
There was subsequent work by Kirkpatrick, Schlein, and Staffilani \cite%
{Kirpatrick}. In Chen and Pavlovi\'{c} \cite{TChenAndNP}, the reduced
density matrices approach of equation \ref{Equation:3BodyPDE} was studied.

\begin{remark}
It is currently an open question in this field to show directly that the
limit of the BBGKY hierarchy in 3d satisfies the space-time bound assumed by
Klainerman and Machedon. The answers in the 1d and 2d cases have been found.
See \cite{TChenAndNP} and \cite{Kirpatrick}.
\end{remark}

Concerning the convergence of the microscopic evolution towards the mean
field dynamics, Rodnianski and Schlein~provided estimates for the rate of
convergence in the case with Hartree dynamics by invoking the Fock space
formalism of equation \ref{Equation:2BodyPDE} in \cite{RodnianskiAndSchlein}.

Recently, following \cite{RodnianskiAndSchlein}, Grillakis, Machedon and
Margetis also studied the Fock space formalism of equation \ref%
{Equation:2BodyPDE} and found a second-order correction (GMM type
correction) inspired by Wu \cite{wuI, wuII} to the mean-field approximation 
\cite{GMM1, GMM2}. This paper aims to generalize their result to the Fock
space representation of equation \ref{Equation:3BodyPDE}.

The main motivation of this paper is to point out that when we apply the GMM
approximation to the Hamiltonian evolution of many-particle systems equipped
with 3-body interactions, the error between GMM approximation and the actual
many-body Hamiltonian evolutions can be controlled uniformly in time. (See
Knowles and Pickl \cite{KnowlesAndPickl} for another type of uniform error
bound.) We now state our problem mathematically. We will discuss the
difference bewteen the 2-body and 3-body cases in subsection \ref%
{subsection:comparison}.

\subsection{Fock Space Representation of The Problem}

First, we set up the Boson Fock space ${\mathcal{F}}$ following \cite{GMM1,
GMM2, RodnianskiAndSchlein}. One can also look at Folland \cite{Folland}.

\begin{definition}
\label{Def:FockSpace}The Hilbert space Boson Fock space ${\mathcal{F}}$
based on $L^{2}(\mathbb{R}^{3})$ contains vectors of the form ${\boldsymbol{%
\psi }}=(\psi _{0},\psi _{1}(x_{1}),\psi _{2}(x_{1},x_{2}),\cdots )$ where $%
\psi _{0}\in {\mathbb{C}}$ and $\psi _{n}\in L_{s}^{2}({\mathbb{R}}^{3n})$
are symmetric in $x_{1},\ldots ,x_{n}$. The Hilbert space structure of ${%
\mathcal{F}}$ is given by $\left( {\boldsymbol{\phi }},{\boldsymbol{\psi }}%
\right) =\sum_{n}\int \phi _{n}\overline{\psi _{n}}dx$.
\end{definition}

\begin{definition}
For $f\in L^{2}({\mathbb{R}}^{3}),$ we define the (unbounded, closed,
densely defined) creation operator $a^{\ast }(f):{\mathcal{F}}\rightarrow {%
\mathcal{F}}$ and annihilation operator $a(\bar{f}):{\mathcal{F}}\rightarrow 
{\mathcal{F}}$ by%
\begin{align*}
& \left( a^{\ast }(f)\psi _{n-1}\right) (x_{1},x_{2},\cdots ,x_{n})=\frac{1}{%
\sqrt{n}}\sum_{j=1}^{n}f(x_{j})\psi _{n-1}(x_{1},\cdots
,x_{j-1},x_{j+1},\cdots x_{n}), \\
& \left( a(\overline{f})\psi _{n+1}\right) (x_{1},x_{2},\cdots ,x_{n})=\sqrt{%
n+1}\int \psi _{(n+1)}(x,x_{1},\cdots ,x_{n})\overline{f}(x)dx.
\end{align*}%
The operator valued distributions $a_{x}^{\ast }$ and $a_{x}$ are defined by 
\begin{align*}
a^{\ast }(f)& =\int f(x)a_{x}^{\ast }dx, \\
a(\overline{f})& =\int \overline{f}(x)a_{x}dx.
\end{align*}%
These distributions satisfy the canonical commutation relations 
\begin{align}
\lbrack a_{x},a_{y}^{\ast }]& =\delta (x-y),
\label{Formula:CanonicalCommutation} \\
\lbrack a_{x},a_{y}]& =[a_{x}^{\ast },a_{y}^{\ast }]=0.  \notag
\end{align}
\end{definition}

Define the vacuum state $\Omega \in {\mathcal{F}}$ and the skew-Hermitian
unbounded operator $A$ by 
\begin{eqnarray}
\Omega &=&(1,0,0,\cdots )  \notag \\
A(\phi ) &=&a(\overline{\phi })-a^{\ast }(\phi ).  \label{Def:A}
\end{eqnarray}%
It is easy to check that%
\begin{equation*}
e^{-\sqrt{N}A(\phi _{0})}\Omega =e^{-N\Vert \phi \Vert ^{2}/2}\left(
1,\cdots ,\left( \frac{N^{n}}{n!}\right) ^{1/2}\phi _{0}(x_{1})\cdots \phi
_{0}(x_{n}),\cdots \right) .
\end{equation*}%
This renders the Fock space analogue of the initial data 
\begin{equation*}
\psi _{N}|_{t=0}=\dprod\limits_{i=1}^{N}\phi _{0}(x_{i})
\end{equation*}%
in equation \ref{Equation:3BodyPDE}.

In correspondence to equation \ref{Equation:3BodyPDE}, we consider the Fock
space $N$ particle Hamiltonian $H_{N}:{\mathcal{F}}\rightarrow {\mathcal{F}}$
with the 3-body interaction potential 
\begin{eqnarray}
H_{N} &=&\int a_{x}^{\ast }\Delta a_{x}dx-\frac{1}{6N^{2}}\int
v_{3}(x-y,x-z)a_{x}^{\ast }a_{y}^{\ast }a_{z}^{\ast }a_{x}a_{y}a_{z}dxdydz
\label{Formula:3BodyHamiltonian} \\
&=&H_{0}-\frac{1}{6N^{2}}V,  \notag
\end{eqnarray}%
where the triple interaction $v_{3}$ is assumed to be symmetric in $x$, $y$,
and $z$. As shown in \cite{TChenAndNP}, every translation invariant 3-body
potential can be written in the form $v_{3}(x-y,x-z)$.

\begin{remark}
Since we use $\triangle $ instead of $-\triangle $ in formula \ref%
{Formula:3BodyHamiltonian}, nonnegative $v_{3}$ corresponds to the
defocusing case.
\end{remark}

Whence the Fock space representation of equation \ref{Equation:3BodyPDE} is
the Hamiltonian evolution 
\begin{equation*}
e^{itH_{N}}e^{-\sqrt{N}A(\phi _{0})}\Omega .
\end{equation*}

Let the one-particle wave function $\phi (t,x)$ solve the quintic Hartree
equation 
\begin{equation}
i\frac{\partial }{\partial t}\phi +\triangle \phi -\frac{1}{2}\phi \int
v_{3}(x-y,x-z)\left\vert \phi (y)\right\vert ^{2}\left\vert \phi
(z)\right\vert ^{2}dydz=0  \label{Equation:HartreeInIntroduction}
\end{equation}%
subject to the initial condition $\phi (0,x)=\phi _{0}(x)$, then the mean
field approximation for $e^{itH_{N}}e^{-\sqrt{N}A(\phi _{0})}\Omega $ is the
tensor product of $\phi (t,x),$ or%
\begin{equation*}
{\boldsymbol{\psi }}_{MeanField}=e^{-\sqrt{N}A(\phi (t,\cdot ))}\Omega
\end{equation*}%
to be specific. A derivation of equation \ref{Equation:HartreeInIntroduction}
is given in Section \ref{section:ProofOf2ndOrderTheorem}.

When the Hamiltonian $H_{N}$ is the Hamiltonian subject to the two body
interaction 
\begin{eqnarray*}
H_{N,2} &=&\int a_{x}^{\ast }\Delta a_{x}dx-\frac{1}{2N}\int
v_{2}(x-y)a_{x}^{\ast }a_{y}^{\ast }a_{x}a_{y}dxdy \\
&=&H_{0}-\frac{1}{N}V_{2}
\end{eqnarray*}%
which is the Fock space formalism of equation \ref{Equation:2BodyPDE},
Rodnianski and Schlein~\cite{RodnianskiAndSchlein} derived a cubic Hartree
equation for $\phi (t,x)$ (equation \ref{Equation:CubicHartreeEquation}) and
showed that the mean field approximation works (under suitable assumptions
on $v$) in the sense that 
\begin{align*}
& \frac{1}{N}\Vert \left( e^{itH_{N,2}}{\boldsymbol{\psi }}%
_{0},\,a_{y}^{\ast }a_{x}e^{itH_{N,2}}{\boldsymbol{\psi }}_{0}\right)
-\left( e^{-\sqrt{N}A(\phi (t,\cdot ))}\Omega ,\,a_{y}^{\ast }a_{x}e^{-\sqrt{%
N}A(\phi (t,\cdot ))}\Omega \right) \Vert _{Tr} \\
& =O(\frac{e^{Ct}}{N})\qquad N\rightarrow \infty ~;
\end{align*}%
where $\Vert \cdot \Vert _{Tr}$ stands for the trace norm in $x\in {\mathbb{R%
}}^{3}$ and $y\in {\mathbb{R}}^{3},$ and ${\boldsymbol{\psi }}_{0}=e^{-\sqrt{%
N}A(\phi _{0})}\Omega $. For the precise statement of the problem and
details of the proof, see Theorem 3.1 of Rodnianski and Schlein \cite%
{RodnianskiAndSchlein}. Later, in \cite{GMM1, GMM2}, Grillakis, Machedon and
Margetis introduced a second-order correction (GMM type correction) to the
mean field approximation of $e^{itH_{N,2}}e^{-\sqrt{N}A(\phi _{0})}\Omega $
which greatly improved the error.

Instead of delving into the results in \cite{GMM1, GMM2}, we state our main
theorems first. This makes it easier to compare our results with the ones in 
\cite{GMM1, GMM2}.

In the main theorem of this paper, we consider the defocusing case:%
\begin{equation*}
v_{3}(x-y,x-z)=v(x-y,x-z)
\end{equation*}%
where%
\begin{equation}
v(x-y,x-z)=v_{0}(x-y)v_{0}(x-z)+v_{0}(x-y)v_{0}(y-z)+v_{0}(x-z)v_{0}(y-z)
\label{Formula:DefinitonOfv-epsilon}
\end{equation}%
built of a nonnegative regular potential $v_{0}$ which decays fast enough
away from the origin and has the property that 
\begin{equation*}
v_{0}(x)=v_{0}(-x).
\end{equation*}

We remark that our main theorems also work when $v_{0}$ has a proper
singularity at the origin. To be specific, if for some $\varepsilon \in (0,%
\frac{1}{2})$%
\begin{equation}
v_{0}(x)=\frac{\chi (\left\vert x\right\vert )}{\left\vert x\right\vert
^{1-\varepsilon }},\text{ or }G_{2+\varepsilon }(x)
\label{def:singular potential}
\end{equation}%
where $\chi \in C_{0}^{\infty }(\mathbb{R}^{+}\mathbb{\cup \{}0\mathbb{\}})$
is nonnegative and decreasing and $G_{\alpha }$ the kernel of Bessel
potential, then Theorems \ref{THM:MainTheorem} holds. Though we currently do
not know the physical meaning for such potentials if $\varepsilon \neq 0$,
we would like to understand the analysis when singularities appear since the
derivation of the quintic NLS uses an interacion which goes to a delta
function when $N\rightarrow \infty .$ Due to the technicality of treating
the singularities, we restrict to the case of smooth potentials so that the
differences between the 2-body and 3-body interactions are easier to see.

\begin{remark}
\label{Remark:Simplicity}For simplicity, let us write $A(\phi )$ as $A$, $%
A(\phi (t,\cdot ))$ as $A(t)$, $v_{3}(x-y,x-z)$ as $v_{3,1-2,1-3},$ and $%
\phi (y)$ as $\phi _{2}$ etc.
\end{remark}

\subsection{Statement of The Main Theorems}

\begin{theorem}
\label{THM:MainTheorem}Assume the defocusing case $v_{3}(x-y,x-z)=v(x-y,x-z)$
where $v(x-y,x-z)$ is defined in formula \ref{Formula:DefinitonOfv-epsilon}.

If $\phi _{0},$ the initial data, satisfies

(i) finite mass: 
\begin{equation*}
\left\Vert \phi _{0}\right\Vert _{L_{x}^{2}}=1,
\end{equation*}

(ii) finite energy:%
\begin{eqnarray*}
E_{0} &=&\frac{1}{2}\int \left\vert \nabla \phi _{0}\right\vert ^{2}dx+\frac{%
1}{6}\int v(x-y,x-z)\left\vert \phi _{0}(x)\right\vert ^{2}\left\vert \phi
_{0}(y)\right\vert ^{2}\left\vert \phi _{0}(z)\right\vert ^{2}dxdydz \\
&\leqslant &C_{1},
\end{eqnarray*}

(iii) finite variance:%
\begin{equation*}
\left\Vert \left\vert \cdot \right\vert \phi _{0}\right\Vert
_{L_{x}^{2}}\leqslant C_{2},
\end{equation*}%
then based on the tensor product approximation (mean-field), we can
construct ${\boldsymbol{\psi }}_{GMM}$, a GMM type approximation explained
in Theorem \ref{THM:2ndOrderCorrection}, to the Hamiltonian evolution $%
e^{itH_{N}}e^{-\sqrt{N}A(\phi _{0})}\Omega $ for $H_{N}$ defined in formula %
\ref{Formula:3BodyHamiltonian}, such that the following uniform in time
error estimate holds%
\begin{equation*}
\Vert {\boldsymbol{\psi }}_{GMM}-e^{itH_{N}}e^{-\sqrt{N}A(\phi _{0})}\Omega
\,\Vert _{{\mathcal{F}}}\leqslant \frac{C}{\sqrt{N}}
\end{equation*}%
where ${\mathcal{F}}$ is the Boson Fock space defined in Definition \ref%
{Def:FockSpace} and $C$ depends only on $v$, $C_{1}$ and $C_{2}$.
\end{theorem}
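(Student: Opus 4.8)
The plan is to follow the strategy pioneered by Rodnianski--Schlein and refined by Grillakis--Machedon--Margetis, now adapted to the quintic (3-body) setting. The starting point is to conjugate the Hamiltonian evolution by the mean-field operator $e^{-\sqrt{N}A(\phi(t,\cdot))}$ and by a second-order (Bogoliubov-type) unitary built from a pair kernel $k(t,x,y)$. First I would expand $e^{\sqrt{N}A(t)}H_Ne^{-\sqrt{N}A(t)}$ using the canonical commutation relations \eqref{Formula:CanonicalCommutation}; the shift $a_x\mapsto a_x+\sqrt{N}\phi(t,x)$ generates terms of order $N^2, N^{3/2}, N, \sqrt{N}, 1$ in the number of ``surviving'' creation/annihilation operators. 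The order-$N^{3/2}$ terms are forced to vanish precisely when $\phi$ solves the quintic Hartree equation \eqref{Equation:HartreeInIntroduction} — this is the derivation promised in Section \ref{section:ProofOf2ndOrderTheorem}. The remaining order-$N$ and order-$\sqrt N$ contributions are quadratic and cubic in $a^\ast,a$; the quadratic part determines the evolution equation for the correction kernel $k(t,x,y)$, chosen so that $e^{B(k)}$ (with $B(k)=\tfrac12\int (k(t,x,y)a_x^\ast a_y^\ast - \bar k(t,x,y)a_xa_y)\,dxdy$, say) cancels the pair-creation term. One then sets ${\boldsymbol{\psi}}_{GMM}=e^{-\sqrt N A(t)}e^{B(k)}\Omega$.

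Next I would set up the error vector. Writing ${\boldsymbol{\psi}}_{exact}=e^{itH_N}e^{-\sqrt N A(\phi_0)}\Omega$ and the reduced dynamics in the interaction-like picture, one obtains an equation of the schematic form $i\partial_t {\boldsymbol{\chi}} = \mathcal{L}(t){\boldsymbol{\chi}} + \mathcal{E}(t)$, where ${\boldsymbol{\chi}}$ measures the discrepancy, $\mathcal{L}(t)$ is the (self-adjoint) Bogoliubov generator, and $\mathcal{E}(t)$ collects the leftover terms not absorbed by the choice of $\phi$ and $k$ — these are the cubic-in-fluctuation terms with an explicit $1/\sqrt N$ (the order-$\sqrt N$ pieces relative to the $N^2$ scale) plus the terms coming from the fact that $e^{B(k)}$ only approximately diagonalizes the quadratic part. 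Since $\mathcal{L}(t)$ is skew-adjoint, a Duhamel/Grönwall argument gives
\begin{equation*}
\|{\boldsymbol{\chi}}(t)\|_{{\mathcal{F}}} \leqslant \|{\boldsymbol{\chi}}(0)\|_{{\mathcal{F}}} + \int_0^t \|\mathcal{E}(s)\|_{{\mathcal{F}}}\,ds,
\end{equation*}
and since ${\boldsymbol{\chi}}(0)=0$ the whole estimate reduces to showing $\int_0^\infty \|\mathcal{E}(s)\|_{{\mathcal{F}}}\,ds \leqslant C/\sqrt N$. The error term $\mathcal{E}$ is $1/\sqrt N$ times (powers of) the number operator $\mathcal{N}$ applied to $e^{B(k)}\Omega$, so what is really needed are uniform-in-time bounds $\big(e^{B(k)}\Omega, \mathcal{N}^m e^{B(k)}\Omega\big)\leqslant C$ for small $m$, which in turn reduce to uniform-in-time control of Sobolev-type norms of $k(t,\cdot,\cdot)$, e.g. $\sup_t \|k(t)\|_{H^1}, \|\partial_t k(t)\|_{L^2}$, and of $\phi(t)$.

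The main obstacle, and the conceptual heart of the paper, is exactly this uniform-in-time control. There are two pieces. (a) One needs global-in-time well-posedness and decay for the quintic Hartree equation \eqref{Equation:HartreeInIntroduction}: this is $H^1$-subcritical in $3$d (the quintic NLS with a smooth, decaying kernel), so finite mass + finite energy (hypotheses (i)--(ii)) give global $H^1$ solutions, and the finite-variance hypothesis (iii) together with a virial/pseudoconformal-type identity should yield the dispersive decay $\|\phi(t)\|_{L^\infty}$ or $\|v_3 \ast |\phi|^4\|$ decaying like $t^{-3}$ or so — fast enough to be integrable. This dispersive decay of the ``potential felt by the fluctuations'' is what makes the Bogoliubov dynamics $\mathcal{L}(t)$ converge to a free evolution and keeps $\|k(t)\|_{H^1}$ bounded uniformly, rather than growing exponentially as in \cite{RodnianskiAndSchlein,GMM1,GMM2}. (b) One must then prove global existence and the uniform bound for the kernel equation for $k$ itself — a matrix/operator Riccati-type PDE — using the decay from (a) to close a bootstrap. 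I would carry this out by writing the $k$-equation in Duhamel form against the free Schrödinger group $e^{it\Delta}$ acting in each variable, estimating the nonlinear/source term by the decaying Hartree potential, and running a continuity argument; the symmetric (defocusing) structure \eqref{Formula:DefinitonOfv-epsilon} and nonnegativity of $v_0$ are used to keep signs favorable in the energy estimates. Once (a) and (b) give $\sup_{t}(\|\phi(t)\|_{H^1}+\|k(t)\|_{H^1}) \leqslant C$ and integrability of the time-dependent coefficients, the Grönwall step closes and yields the stated $O(1/\sqrt N)$ bound, uniform in $t$.
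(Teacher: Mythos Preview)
Your architecture matches the paper's: conjugate by $e^{\sqrt N A(t)}$, derive the quintic Hartree equation by killing the linear-in-$a,a^\ast$ terms (these are order $\sqrt N$, not $N^{3/2}$, but that is cosmetic), choose $k$ so that $e^{B}$ removes the pair-creation part of the quadratic piece, and then reduce everything to $\int_0^\infty \|\mathcal E(s)\|_{\mathcal F}\,ds \leqslant C/\sqrt N$. The pseudoconformal argument for $\|\phi(t)\|_{L^6}\lesssim 1/t$ and the resulting $\|m\|_{L^1_tL^2}<\infty$ used to globalize the $k$-equation are also what the paper does (Theorem~\ref{THM:TheLongTimeBehaviorOfHartree} and Lemma~\ref{Lemma:EstimateForM}).

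The genuine gap is in your treatment of $\int_0^\infty\|\mathcal E(s)\|\,ds$. You propose to control $\mathcal E$ by uniform-in-time number-operator moments $\big(e^{B}\Omega,\mathcal N^m e^{B}\Omega\big)\leqslant C$, which boil down to $\sup_t\|k(t)\|$ bounds. That gives only $\sup_t\|\mathcal E(t)\|\leqslant C/\sqrt N$, hence a bound growing linearly in $t$ --- exactly the situation in \cite{GMM1,GMM2} that this paper improves. The Hartree decay of $\phi$ does not rescue you: the worst error terms, e.g.\ the contribution
\[
v(x_1-y_1,x_1-z_1)\,\bar u(x_2,x_1)\bar u(y_2,y_1)\bar u(z_2,z_1)
\]
sitting in $e^{B}Ve^{-B}\Omega$, contain no factor of $\phi$ at all, so the decay of the condensate wave function cannot make them integrable in time by itself.

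What the paper actually does is different and is the whole point of the 3-body setting. First one shows $(i\partial_t-\Delta_x-\Delta_y)u\in L^1_tL^2$ and likewise for $p$ (Theorem~\ref{THM:MorePrioriEstimateOfu}); this is where the Hartree decay enters, via $\|m\|_{L^1_tL^2}<\infty$. Then --- the key step you are missing --- Lemma~\ref{Lemma:KeyLemmaForErrorTerms} uses the \emph{endpoint Strichartz} estimates ($L^2_tL^6_x$ in $3$d and $L^2_tL^3_x$ in $6$d) to get $L^2_t$ control of each factor $u$, $p$, $\phi$, and the specific 3-body feature that every error term carries at least a \emph{pair} of such factors allows Cauchy--Schwarz in $t$ to turn two $L^2_t$ bounds into one $L^1_t$ bound. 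In the 2-body problem each error term carries only one such factor, which is why \cite{GMM2} is stuck with $t^{1/2}$ growth. Your outline should replace the number-operator step with this Strichartz-based $L^1_t$ mechanism.
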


\begin{remark}
The construction of ${\boldsymbol{\psi }}_{GMM}$ does not require $v_{3}$ to
have a definite sign. However, we take positive sign in Theorem \ref%
{THM:MainTheorem} because it leads to a defocusing Hartree equation whose
global behavior is controllable.
\end{remark}

We prove Theorem \ref{THM:MainTheorem} via Theorems \ref%
{THM:2ndOrderCorrection} and \ref{THM:MainTheoremErrorEstimate} stated
below. They deal with the construction of ${\boldsymbol{\psi }}_{GMM}$ and
the error estimate separately. However, it is worth pointing out that
Theorem \ref{THM:MainTheorem} is a special case of Theorems \ref%
{THM:2ndOrderCorrection} and \ref{THM:MainTheoremErrorEstimate}, which apply
to a more general setting beyond initial data of the form $e^{-\sqrt{N}%
A(\phi _{0})}\Omega $.

\begin{theorem}
\label{THM:2ndOrderCorrection}Let $\phi $ be a sufficiently smooth solution
of the quintic Hartree equation 
\begin{equation}
i\frac{\partial }{\partial t}\phi +\triangle \phi -\frac{1}{2}\phi \int
v_{3}(x-y,x-z)\left\vert \phi (y)\right\vert ^{2}\left\vert \phi
(z)\right\vert ^{2}dydz=0  \label{Equation:QuinticHartreeEquation}
\end{equation}%
with initial data $\phi _{0}$ and the 3-body interaction potential $v_{3}$
being symmetric in $x$, $y$, and $z$. Assume the following:

(1) Let a complex kernel $k(t,x,y)\in L_{s}^{2}(dxdy)$ for almost all $t$,
solve the equation 
\begin{equation}
iu_{t}+ug^{T}+gu-(I+p)m=\left( ip_{t}+[g,p]+u\overline{m}\right) \left(
I+p\right) ^{-1}u,  \label{Equation:EquationOfk}
\end{equation}%
with 
\begin{align*}
u(t,x,y)& :=\sinh (k):=k+\frac{1}{3!}k\overline{k}k+\ldots ~, \\
\cosh (k)& :=I+p(t,x,y):=\delta (x-y)+\frac{1}{2!}k\overline{k}+\ldots ~, \\
g(t,x,y)& :=-\triangle \delta (x-y)+\left( \int v_{3}(x-y,x-z)\left\vert
\phi (z)\right\vert ^{2}dz\right) \overline{\phi }(x)\overline{\phi }(y) \\
& +\frac{1}{2}\left( \int v_{3}(x-y,x-z)\left\vert \phi (y)\right\vert
^{2}\left\vert \phi (z)\right\vert ^{2}dydz\right) \delta (x-y), \\
m(t,x,y)& :=-\left( \int v_{3}(x-y,x-z)\left\vert \phi (z)\right\vert
^{2}dz\right) \overline{\phi }(x)\overline{\phi }(y),
\end{align*}%
where the products $ug^{T}$, $k\overline{k}$ etc. stand for compositions of
operators.

(2) For $V$ defined as in formula \ref{Formula:3BodyHamiltonian}, the
functions 
\begin{equation*}
\Vert e^{B}Ve^{-B}\Omega \Vert _{{\mathcal{F}}},\Vert e^{B}[A,V]e^{-B}\Omega
\Vert _{{\mathcal{F}}},\Vert e^{B}[A,[A,V]]e^{-B}\Omega \Vert _{{\mathcal{F}}%
},\text{and }\Vert e^{B}[A,[A,[A,V]]]e^{-B}\Omega \Vert _{{\mathcal{F}}}
\end{equation*}%
are locally integrable in time, where 
\begin{equation}
B(t):=\frac{1}{2}\int \left( k(t,x,y)a_{x}a_{y}-\overline{k}%
(t,x,y)a_{x}^{\ast }a_{y}^{\ast }\right) dxdy.  \label{Formula:DefinitionOfB}
\end{equation}

(3) $\int d(t,x,x)\ dx$ is also locally integrable in time, where 
\begin{align}
d(t,x,y):=& \left( i\sinh (k)_{t}+\sinh (k)g^{T}+g\sinh (k)\right) \overline{%
\sinh (k)}  \label{Formula:DefinitionOFd} \\
-& \left( i\cosh (k)_{t}+[g,\cosh (k)]\right) \cosh (k)  \notag \\
-& \sinh (k)\overline{m}\cosh (k)-\cosh (k)m\overline{\sinh (k)}.  \notag
\end{align}%
Then we define%
\begin{equation*}
{\boldsymbol{\psi }}_{GMM}:=e^{-\sqrt{N}A(\phi (t,\cdot
))}e^{-B(t)}e^{-i\int_{0}^{t}(N\chi _{0}(s)+\chi _{1}(s))ds}\Omega
\end{equation*}%
where%
\begin{eqnarray*}
\chi _{0}(t) &:&=-\frac{1}{3}\int v_{3}(x-y,x-z)\left\vert \phi
(x)\right\vert ^{2}\left\vert \phi (y)\right\vert ^{2}\left\vert \phi
(z)\right\vert ^{2}dxdydz, \\
\chi _{1}(t) &:&=\frac{1}{2}\int d(t,x,x)dx.
\end{eqnarray*}%
This definition of ${\boldsymbol{\psi }}_{GMM}$ yields the error estimate%
\begin{align*}
& \Vert {\boldsymbol{\psi }}_{GMM}-e^{itH_{N}}e^{-\sqrt{N}A(\phi
_{0})}e^{-B(0)}\Omega \,\Vert _{{\mathcal{F}}} \\
& \leqslant \frac{\int_{0}^{t}\Vert e^{B}Ve^{-B}\Omega \Vert _{{\mathcal{F}}%
}ds}{6N^{2}}+\frac{\int_{0}^{t}\Vert e^{B}[A,V]e^{-B}\Omega \Vert _{{%
\mathcal{F}}}ds}{6N^{\frac{3}{2}}} \\
& +\frac{\int_{0}^{t}\Vert e^{B}[A,[A,V]]e^{-B}\Omega \Vert _{{\mathcal{F}}%
}ds}{12N}+\frac{\int_{0}^{t}\Vert e^{B}[A,[A,[A,V]]]e^{-B}\Omega \Vert _{{%
\mathcal{F}}}ds}{36N^{\frac{1}{2}}}.
\end{align*}
\end{theorem}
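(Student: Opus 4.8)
The plan is to compare the trajectory $\boldsymbol{\psi}_{GMM}(t)$ against the exact evolution $e^{itH_N}\boldsymbol{\psi}_{GMM}(0)$ by means of a Duhamel/energy argument in $\mathcal{F}$. Write $\boldsymbol{\psi}_{app}(t):=e^{-\sqrt N A(t)}e^{-B(t)}e^{-i\int_0^t(N\chi_0+\chi_1)ds}\Omega$ and define the difference $\boldsymbol{\xi}(t):=\boldsymbol{\psi}_{app}(t)-e^{itH_N}\boldsymbol{\psi}_{app}(0)$. Since $e^{itH_N}$ is unitary on $\mathcal F$, the estimate follows from $\|\boldsymbol{\xi}(t)\|_{\mathcal F}\le\int_0^t\|(i\partial_s+H_N)\boldsymbol{\psi}_{app}(s)\|_{\mathcal F}\,ds$ once I show that the right-hand side is exactly the claimed sum of four terms. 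So the heart of the matter is computing $(i\partial_t+H_N)\boldsymbol{\psi}_{app}$ and showing that the choice of $\phi$ (solving the quintic Hartree equation), of $k$ (solving \eqref{Equation:EquationOfk}), and of the phases $\chi_0,\chi_1$ force the "main" contributions to cancel, leaving only remainders controlled by $V$ and its iterated commutators with $A$.

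The key computational steps, in order: First, conjugate $H_N$ by $e^{-\sqrt N A(t)}$. Using the Weyl-type relations $e^{\sqrt N A}a_x e^{-\sqrt N A}=a_x-\sqrt N\phi(x)$ (and its adjoint), the quintic interaction term expands into a polynomial in $\sqrt N$ with coefficients that are normal-ordered monomials in $a,a^*$; the $N^3$ and $N^{5/2}$ pieces are scalars (absorbed by $N\chi_0$ via the Hartree energy density) or are linear in creation/annihilation operators and vanish precisely because $\phi$ solves \eqref{Equation:QuinticHartreeEquation}; the $N^2$ and $N^{3/2}$ pieces are quadratic; the genuinely lower-order remainders are $\tfrac{1}{6N^2}V$ plus terms of the form $\tfrac{1}{N^{3/2}}[A,V]$, $\tfrac{1}{N}[A,[A,V]]$, $\tfrac{1}{N^{1/2}}[A,[A,[A,V]]]$ arising from the higher-order terms in the Weyl expansion. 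The time-derivative $i\partial_t$ acting on $e^{-\sqrt N A(t)}$ produces $-\sqrt N\,i\dot A$, i.e. $\sqrt N$ times a linear operator in $a,a^*$; together with the $H_0$ part $\int a_x^*\Delta a_x$ this again reduces, via the Hartree equation, to cancellation of the $\sqrt N$-linear terms and leaves a quadratic operator. Second, conjugate the residual quadratic operator by $e^{-B(t)}$. Here one uses that $e^{B}$ implements a Bogoliubov transformation with kernels $\cosh(k),\sinh(k)$, and that $i\partial_t e^{-B(t)}$ contributes the $p_t,\,\sinh(k)_t$ terms; the requirement that the total quadratic operator annihilate $\Omega$ is, after transporting through $e^{B}$, exactly equation \eqref{Equation:EquationOfk} for $k$, while the leftover scalar is exactly $\chi_1=\tfrac12\int d(t,x,x)dx$ built from $d$ in \eqref{Formula:DefinitionOFd}. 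After these cancellations, $(i\partial_t+H_N)\boldsymbol{\psi}_{app}=e^{-\sqrt N A}e^{-B}e^{-i\int(\cdots)}\big(\tfrac{1}{6N^2}V+\tfrac{1}{6N^{3/2}}[A,V]+\tfrac{1}{12N}[A,[A,V]]+\tfrac{1}{36N^{1/2}}[A,[A,[A,V]]]\big)\Omega$ up to the phase; taking $\mathcal F$-norms, using unitarity of $e^{-\sqrt N A}$ and the fact that $e^{B}=(e^{-B})^{-1}$ but only $\|e^B\,(\cdot)\,e^{-B}\Omega\|$ appears after moving $e^{-B}$ to the inside, gives precisely the four integrand terms with denominators $6N^2$, $6N^{3/2}$, $12N$, $36N^{1/2}$. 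Integrating in time via Duhamel finishes the proof, with local integrability of those four functions (hypothesis (2)) and of $\int d(t,x,x)dx$ (hypothesis (3)) guaranteeing everything is well defined.

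The main obstacle I expect is bookkeeping rather than a single hard estimate: carrying out the Weyl conjugation of the \emph{cubic} interaction $V$ (six creation/annihilation operators with a two-argument kernel $v_3$) cleanly enough that the $N^{5/2}$, $N^2$, $N^{3/2}$ pieces are unambiguously identified and matched against the Hartree nonlinearity, the Bogoliubov kernels $g,m$, and the phases $\chi_0,\chi_1$; this is where the 3-body case genuinely differs from the 2-body computation of \cite{GMM1,GMM2}, since the quadratic operator $g$ and the pairing kernel $m$ now carry the extra convolution in $v_3(x-y,x-z)|\phi(z)|^2$. A secondary point of care is justifying the formal manipulations — that $B(t)$ and $\sqrt N A(t)$ generate strongly continuous one-parameter families on a common dense domain, that the iterated commutators $[A,[A,[A,V]]]$ make sense as operators applied to $e^{-B}\Omega$, and that one may differentiate $\boldsymbol{\psi}_{app}(t)$ in $\mathcal F$ — but these are underwritten by the local-integrability hypotheses (2)–(3) and by the smoothness of $\phi$ assumed in the statement, so no further input is needed.
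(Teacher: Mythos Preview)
Your approach is correct and is essentially the paper's: conjugate by $e^{\sqrt N A}$ then by $e^{B}$, use the Hartree equation to kill the order-$N^{1/2}$ linear piece, use equation \eqref{Equation:EquationOfk} to kill the pure-creation part of the order-$N^{0}$ quadratic piece, absorb the two resulting scalars into $\chi_0,\chi_1$, and bound the four leftover error terms by an energy/Duhamel argument (the paper differentiates $\|e^{i\int(N\chi_0+\chi_1)}\Psi-\Omega\|^2$ with $\Psi=e^{B}e^{\sqrt N A}e^{itH_N}e^{-\sqrt N A(0)}e^{-B(0)}\Omega$, which is your Duhamel estimate after moving the unitaries inside). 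Two small bookkeeping corrections: only the four-fold commutator $[A,[A,[A,[A,V]]]]$ is quadratic --- the three-fold commutator is cubic and is precisely your fourth error term at order $N^{-1/2}$, as you then correctly list --- and your displayed formula for $(i\partial_t+H_N)\boldsymbol\psi_{app}$ should carry $e^{B}(\cdot)e^{-B}$ on each of $V,[A,V],\ldots$ before they hit $\Omega$, which you yourself note in the next sentence.
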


\begin{theorem}
\label{THM:MainTheoremErrorEstimate}Assume $v_{3}(x-y,x-z)=v(x-y,x-z)$ i.e.
equation \ref{Equation:QuinticHartreeEquation} becomes%
\begin{equation}
i\frac{\partial }{\partial t}\phi +\triangle \phi -\frac{1}{2}\phi \int
v(x-y,x-z)\left\vert \phi (y)\right\vert ^{2}\left\vert \phi (z)\right\vert
^{2}dydz=0.  \label{Equation:DefocusingHartreeEquation}
\end{equation}%
If $\phi _{0},$ the initial data of quintic Hartree equation \ref%
{Equation:DefocusingHartreeEquation}, satisfies (i), (ii), and (iii), then
the hypotheses in Theorem \ref{THM:2ndOrderCorrection} are satisfied
globally in time. Moreover, we have the error estimate uniformly in time that%
\begin{equation*}
\Vert {\boldsymbol{\psi }}_{GMM}-e^{itH_{N}}e^{-\sqrt{N}A(\phi
_{0})}e^{-B(0)}\Omega \,\Vert _{{\mathcal{F}}}\leqslant \frac{C}{\sqrt{N}}
\end{equation*}%
where $C$ depends only on $v$, $C_{1}$, $C_{2}$ and $\left\Vert u(0,\cdot
,\cdot )\right\Vert _{L_{(x,y)}^{2}}$.
\end{theorem}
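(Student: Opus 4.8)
The plan is to reduce the statement to two global a~priori inputs — control of the Hartree solution $\phi$ and control of the kernel $k$ — and then to feed these into the abstract error bound of Theorem~\ref{THM:2ndOrderCorrection}, noting that the integrands appearing there are $N$-independent, so that it suffices to bound the four time integrals uniformly in $t$.

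\emph{Step 1: global well-posedness and decay for the quintic Hartree equation.} Because $v$ as in \ref{Formula:DefinitonOfv-epsilon} is nonnegative and built from the smooth, rapidly decaying $v_0$, equation~\ref{Equation:DefocusingHartreeEquation} is defocusing with a nonlocal, hence smoothing, nonlinearity. Conservation of mass and energy together with hypotheses (i)--(ii) give $\sup_t\|\phi(t)\|_{H^1_x}\le C(C_1)$. I would then (a)~propagate the higher Sobolev regularity of $\phi$ needed in Step~3 by a Gronwall argument on the already-controlled lower norms; (b)~propagate the first moment, $\||\cdot|\phi(t)\|_{L^2_x}\le C(C_2)(1+t)$, from the virial identity using hypothesis (iii); and (c)~establish global-in-time Strichartz bounds and dispersive decay for $\phi$, in particular $\|\phi(t)\|_{L^p_x}\to 0$ as $t\to\infty$ for $p>2$ with a rate that is integrable in $t$ after one or two powers. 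Point (c) is where the defocusing sign is essential; the smoothness of $v_0$ renders the problem effectively energy-subcritical, so this is of a known type (interaction Morawetz together with Strichartz).

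\emph{Step 2: global well-posedness and a uniform bound for the $k$-equation.} I would recast equation~\ref{Equation:EquationOfk} as an evolution system for the operator kernels $u=\sinh(k)$ and $p$, with $\cosh(k)=I+p$. The unbounded part of $g$ is $-\triangle\delta(x-y)$, which generates the $L^2_{(x,y)}$-isometry $u\mapsto e^{-it\triangle}\,u\,e^{-it\triangle}$; every remaining term of $g$ and all of $m$ carry at least two factors of $\phi$ and hence, by Step~1(c), are time-integrable perturbations in the relevant norms. Differentiating $\|u(t)\|_{L^2_{(x,y)}}^2$ in time, the leading terms cancel by the skew/self-adjoint structure built into \ref{Equation:EquationOfk}, leaving a remainder controlled by $\|\phi(t)\|_{L^\infty_x}^2$ (or an appropriate $L^p_x$ norm) times lower-order quantities; a continuity/Gronwall argument then gives global existence and $\sup_t\|u(t)\|_{L^2_{(x,y)}}\le C(1+\|u(0,\cdot,\cdot)\|_{L^2_{(x,y)}})$, together with the analogous uniform bounds on the finitely many derivatives of $k$ and on $p$ in the norms used below. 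The same estimates show that $\int d(t,x,x)\,dx$ from \ref{Formula:DefinitionOFd} is integrable in time, so hypothesis (3) of Theorem~\ref{THM:2ndOrderCorrection} holds globally.

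\emph{Step 3: verification of hypothesis (2) and estimation of the four time integrals, then conclusion.} Using the Bogoliubov conjugation formulas for $e^{B}a_xe^{-B}$ in terms of $\cosh(k)$ and $\sinh(k)$, with $B$ from \ref{Formula:DefinitionOfB}, each of $e^{B}Ve^{-B}\Omega$, $e^{B}[A,V]e^{-B}\Omega$, $e^{B}[A,[A,V]]e^{-B}\Omega$, $e^{B}[A,[A,[A,V]]]e^{-B}\Omega$ is an explicit finite Fock vector, supported in the sectors $n\le 6$, whose components are finite sums of multilinear expressions in $v_3$, $\phi$, $\cosh(k)$, $\sinh(k)$ and a bounded number of their derivatives — where every commutator with $A=a(\bar\phi)-a^*(\phi)$ contracts one creation or annihilation operator against a factor of $\phi$. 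Bounding the $\mathcal{F}$-norm by the $L^2$-norms of these components and inserting Step~1 for the $\phi$-factors (including their $L^p_x$-decay) and Step~2 for the $k$-factors, one estimates the four integrands by a uniform constant times, respectively, $1$, $\|\phi(t)\|_{*}$, $\|\phi(t)\|_{*}^{2}$, $\|\phi(t)\|_{*}^{3}$ for suitable $L^p_x$-type norms of $\phi$. Since the higher commutators carry more $\phi$-factors and hence more decay, all four integrands lie in $L^1_t(\mathbb{R}^+)$; this yields hypothesis (2) globally and, in particular, $\int_0^t\|e^{B}[A,[A,[A,V]]]e^{-B}\Omega\|_{\mathcal{F}}\,ds\le C$ uniformly in $t$, and likewise for the other three (with powers of $N$ to spare). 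Substituting into Theorem~\ref{THM:2ndOrderCorrection}, the four contributions become $O(N^{-2})$, $O(N^{-3/2})$, $O(N^{-1})$ and $O(N^{-1/2})$ uniformly in $t$, so their sum is at most $C/\sqrt{N}$ with $C=C(v,C_1,C_2,\|u(0,\cdot,\cdot)\|_{L^2_{(x,y)}})$, as claimed. I expect Step~2 to be the main obstacle: equation~\ref{Equation:EquationOfk} is a genuinely new nonlinear operator-valued PDE, and extracting a bound on $\|u(t)\|_{L^2_{(x,y)}}$ that is \emph{uniform} in $t$ — rather than merely local or exponentially growing — forces one to use precisely the dispersive decay of $\phi$ from Step~1 together with the algebraic cancellations encoded in \ref{Equation:EquationOfk}; securing the scattering/decay theory of the defocusing quintic Hartree equation in Step~1(c) in exactly the form needed is the secondary difficulty.
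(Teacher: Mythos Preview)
Your overall architecture matches the paper's --- Hartree decay, then control of $u=\sinh(k)$, then feed into Theorem~\ref{THM:2ndOrderCorrection} --- but Step~3 has a genuine gap that your own bookkeeping exposes. You estimate the four integrands by $1,\|\phi\|_*,\|\phi\|_*^2,\|\phi\|_*^3$ and then assert that ``all four integrands lie in $L^1_t(\mathbb{R}^+)$''. The first one does not: $e^{B}Ve^{-B}\Omega$ carries no $\phi$-factors at all (there are no commutators with $A$), and after the Bogoliubov conjugation its components are built purely from $v$, $u$, and $p$. The worst piece is, up to symmetry,
\[
v(x_1-y_1,x_1-z_1)\,\bar u(x_2,x_1)\,\bar u(y_2,y_1)\,\bar u(z_2,z_1),
\]
whose $L^2$ norm you can bound by a constant using only $\sup_t\|u\|_{L^2}$ from Step~2 --- but that gives $\int_0^t(\cdots)\,ds\sim t$, and $t/N^2$ is not $\le C/\sqrt N$ uniformly in $t$. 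The same issue hits $e^{B}[A,V]e^{-B}\Omega$: a single $\|\phi\|_*$ factor will typically decay no better than $1/t$, again not $L^1_t$.

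What the paper does, and what you are missing, is to extract \emph{dispersive} (not merely $L^\infty_tL^2$) control of $u$ and $p$ themselves. Concretely, the paper shows (Theorem~\ref{THM:MorePrioriEstimateOfu}) that
\[
\bigl\|(i\partial_t-\Delta_x-\Delta_y)u\bigr\|_{L^1_t(\mathbb{R}^+)L^2_{(x,y)}}\le C,\qquad
\bigl\|(i\partial_t-\Delta_x+\Delta_y)p\bigr\|_{L^1_t(\mathbb{R}^+)L^2_{(x,y)}}\le C,
\]
using the decay $\|\phi(t)\|_{L^6}\lesssim 1/t$ (obtained in Section~\ref{section:AppendixHartree} via the pseudoconformal energy, which is where hypothesis~(iii) is actually used --- not interaction Morawetz). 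With this in hand, the Key Lemma (Lemma~\ref{Lemma:KeyLemmaForErrorTerms}) applies the \emph{endpoint} Strichartz estimates $L^2_tL^6_x$ in $3$d and $L^2_tL^3_x$ in $6$d to a \emph{pair} of such functions and Cauchy--Schwarz in $t$ to produce an $L^1_t$ bound:
\[
\int_0^\infty\Bigl(\int v^2(x_1-y_1,x_1-y_2)|f|^2|g|^2\Bigr)^{1/2}dt\le C.
\]
The point (stressed in Section~\ref{subsection:comparison}) is that every $3$-body error term, including those with no $\phi$, carries at least a \emph{pair} of $u$, $p$, or $\phi$; two $L^2_t$ Strichartz bounds then combine to one $L^1_t$ bound. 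Your Step~2 gives only $\sup_t\|u\|_{L^2}$, and your Step~3 never invokes Strichartz on $u,p$, so the mechanism that makes the estimate uniform is absent.
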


We deduce Theorem \ref{THM:MainTheorem} from Theorems \ref%
{THM:2ndOrderCorrection} and \ref{THM:MainTheoremErrorEstimate} by setting 
\begin{equation*}
k(0,x,y)=0.
\end{equation*}

The proof of Theorem \ref{THM:MainTheoremErrorEstimate} relies on the
following theorem regarding the long time behavior of the solution to the
Hartree equation.

\begin{theorem}
\label{THM:TheLongTimeBehaviorOfHartree}If $\phi $ solve the Hartree
equation \ref{Equation:DefocusingHartreeEquation} subject to (i), (ii), and
(iii), then%
\begin{equation*}
\left\Vert \phi \right\Vert _{L_{x}^{6}}\leqslant \frac{C}{t},\text{ }for%
\text{ }t\geqslant 1,
\end{equation*}%
where $C$ is a function of $v$, $C_{1}$ and $C_{2}$ only.
\end{theorem}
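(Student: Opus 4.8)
The plan is to use the pseudo-conformal (conformal energy) method. Write $J(t):=x+2it\nabla$ and set $\psi(t,x):=e^{-i|x|^{2}/4t}\phi(t,x)$. Since $\nabla\psi=\tfrac{1}{2it}e^{-i|x|^{2}/4t}J(t)\phi$ and $|\psi|=|\phi|$ pointwise, mass conservation ($\|\psi(t)\|_{L^{2}}=1$) and the Sobolev embedding $\dot H^{1}(\mathbb{R}^{3})\hookrightarrow L^{6}(\mathbb{R}^{3})$ give
\[
\|\phi(t)\|_{L^{6}}=\|\psi(t)\|_{L^{6}}\le C\|\nabla\psi(t)\|_{L^{2}}=\frac{C}{2t}\|J(t)\phi(t)\|_{L^{2}} .
\]
So it suffices to prove the uniform weighted bound $\sup_{t\ge 1}\|J(t)\phi(t)\|_{L^{2}}\le C(v,C_{1},C_{2})$.

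For that I would monitor
\[
\Sigma(t):=\tfrac12\|J(t)\phi(t)\|_{L^{2}}^{2}+4t^{2}P(t),\qquad P(t):=\tfrac16\int v(x-y,x-z)|\phi(x)|^{2}|\phi(y)|^{2}|\phi(z)|^{2}\,dx\,dy\,dz ,
\]
where $P(t)\ge 0$ is exactly the defocusing hypothesis. Differentiating along equation \ref{Equation:DefocusingHartreeEquation}, using conservation of mass and of energy (so $\|\nabla\phi(t)\|_{L^{2}}^{2}=2(E_{0}-P(t))$) together with the virial identity $\frac{d^{2}}{dt^{2}}\int|x|^{2}|\phi|^{2}=8\|\nabla\phi\|_{L^{2}}^{2}+\mathcal V_{NL}(t)$, the $P'$-terms cancel (the coefficient $4$ is chosen for this) and one is left with $\Sigma'(t)=2t\,(R(t)+4P(t))$, where $R(t)=\int|\phi(t,x)|^{2}\,x\cdot\nabla W(t,x)\,dx$ and $W=\tfrac12\int v(x-y,x-z)|\phi(y)|^{2}|\phi(z)|^{2}\,dy\,dz$. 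Using the translation-invariant form of $v$ and symmetrising in $x,y,z$,
\[
R(t)+4P(t)=\tfrac16\iiint\tilde v(x-y,x-z)\,|\phi(x)|^{2}|\phi(y)|^{2}|\phi(z)|^{2},\qquad \tilde v(a,b):=(a\cdot\nabla_{a}+b\cdot\nabla_{b})v(a,b)+4v(a,b),
\]
and $\tilde v$ is bounded and integrable (because $v_{0}$ is smooth and rapidly decaying), with $\iint\tilde v=-2\iint v=-6\|v_{0}\|_{L^{1}}^{2}<0$. Writing $G_{t}(a,b):=\int|\phi(t,x)|^{2}|\phi(t,x-a)|^{2}|\phi(t,x-b)|^{2}\,dx$, so $0\le G_{t}\le G_{t}(0,0)=\|\phi(t)\|_{L^{6}}^{6}$, this becomes
\[
\Sigma'(t)=-2\|v_{0}\|_{L^{1}}^{2}\,t\,\|\phi(t)\|_{L^{6}}^{6}+\frac{t}{3}\iint\tilde v(a,b)\,\big(G_{t}(a,b)-G_{t}(0,0)\big)\,da\,db ,
\]
a strictly negative ``diagonal'' term plus an error coming from $v_{0}$ being a genuine smooth potential rather than a point interaction.

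I would then assemble three facts. (a) \emph{Crude bound:} by Young's and H\"older's inequalities and $\|\phi\|_{L^{2}}=1$, $|\iint\tilde v\,(G_{t}-G_{t}(0,0))|\le C\|\phi(t)\|_{L^{6}}^{6}$; with the first paragraph and $\tfrac12\|J\phi\|_{L^{2}}^{2}\le\Sigma$ this gives $|\Sigma'(t)|\le C\,t\,\|\phi(t)\|_{L^{6}}^{6}\le C\,t^{-5}\Sigma(t)^{3}$, i.e.\ $|\frac{d}{dt}\Sigma(t)^{-2}|\le C t^{-5}$, integrable on $[1,\infty)$. (b) \emph{Initial bound:} on $[0,1]$ the virial identity with $|\mathcal V_{NL}(t)|\le C$ (from boundedness of the relevant derivatives of $v$ and mass conservation) and $\|\nabla\phi\|_{L^{2}}^{2}\le 2C_{1}$ gives $\int|x|^{2}|\phi(1,x)|^{2}\le C$, hence $\Sigma(1)\le C(v,C_{1},C_{2})$, using the finite-variance hypothesis (iii) on the data. (c) \emph{Dispersion:} once $\phi(t)$ has spread to scale $\gg 1$ (equivalently $\|J(t)\phi(t)\|_{L^{2}}\ll t$), $G_{t}(0,0)-G_{t}(a,b)$ is small relative to $G_{t}(0,0)$ uniformly for $(a,b)$ in the fixed $O(1)$ effective support of $\tilde v$, so the error term is lower order and $\Sigma'(t)\le 0$ for $t$ large.

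Finally, a continuity/bootstrap argument on $[1,\infty)$ combines (a), (b) and the good sign from (c): $\Sigma$ cannot grow past a fixed multiple of $\Sigma(1)$, and since $\int_{1}^{\infty}t^{-5}\,dt<\infty$ the inequality for $\Sigma^{-2}$ then forces $\sup_{t\ge 1}\Sigma(t)<\infty$; by the first paragraph this gives $\|\phi(t)\|_{L^{6}}\le C/t$ for $t\ge 1$ with $C=C(v,C_{1},C_{2})$. The hard part is (c): a priori the error is of the \emph{same} size $O(t\|\phi(t)\|_{L^{6}}^{6})$ as the good term, so one cannot simply declare $\Sigma$ monotone — one must exploit large-time dispersion of $\phi$ on a scale much larger than the range of $v_{0}$ and quantify it by a modulus-of-continuity estimate for $|\phi(t)|^{2}$ in a norm adapted to the available $L^{2}\cap\dot H^{1}$ (and $L^{6}$) information, keeping in mind that $H^{1}(\mathbb{R}^{3})$ embeds neither into $L^{\infty}$ nor into $\dot W^{1,6}$; this is where the precise product structure of $v$ in formula \ref{Formula:DefinitonOfv-epsilon} and hypothesis (iii) enter.
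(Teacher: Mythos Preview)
Your overall framework is exactly the one the paper uses: the pseudo-conformal (the paper calls it ``conformal'') energy built from $J(t)\phi$ and the potential energy $P(t)$, together with the Sobolev bound $\|\phi\|_{L^6}\le Ct^{-1}\|J(t)\phi\|_{L^2}$. The discrepancy is in how $\Sigma'(t)$ is handled.

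The paper does \emph{not} bootstrap. It pushes the symmetrisation of $\Sigma'$ further than you do, exploiting the specific product structure $v(a,b)=v_0(a)v_0(b)+v_0(a)v_0(a-b)+v_0(b)v_0(a-b)$ and the evenness of $v_0$. After several exchanges of the roles of $x,y,z$ (in particular cancelling the ``$(x+y)$'' contributions), the derivative reduces to a sum of three integrals of the form
\[
-\tfrac{8}{3}\,t\!\int \rho_1\rho_2\rho_3\,v_0(\cdot)\bigl\{\,v_0(w)+w\cdot\nabla v_0(w)\,\bigr\},
\]
each with a nonnegative prefactor $v_0(\cdot)\rho_1\rho_2\rho_3$. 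Under the standing hypothesis ``$v_0$ decays fast enough'' (interpreted so that $v_0(w)+w\cdot\nabla v_0(w)\le 0$), every summand has the right sign and one gets $\dot E_c\le 0$ \emph{directly} for $t\ge 1$. No bootstrap, no step (c), no modulus-of-continuity argument.

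Your $\tilde v=(a\cdot\nabla_a+b\cdot\nabla_b)v+4v$ is the generic form before this symmetrisation; as you note, it does not have a sign. Two related issues then arise. First, your normalisation $\Sigma=\tfrac12\|J\phi\|^2+4t^2P$ is not a multiple of the paper's $E_c=\tfrac14\|J\phi\|^2+t^2P$; the paper's choice is the one for which the algebra collapses to the clean single-factor expression above, so you may want to redo the computation with that weight. Second, and more importantly, your bootstrap does not close as written: the inequality from (a), namely $|(\Sigma^{-2})'|\le Ct^{-5}$, only prevents blow-up of $\Sigma$ if $\Sigma(1)^{-2}$ dominates the constant $C\!\int_1^\infty t^{-5}\,dt$, i.e.\ only for small data. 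You invoke (c) to rescue the large-data case, but (c) presupposes that $\phi$ has already dispersed to scale $\gg 1$, which is essentially the conclusion $\|\phi\|_{L^6}\lesssim t^{-1}$ you are after; the argument is circular unless you supply the quantitative modulus-of-continuity estimate you allude to, and that is not provided.

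So the gap is concrete: you are missing the final symmetrisation that turns $\tilde v$ into $v_0(\cdot)\{v_0(w)+w\cdot\nabla v_0(w)\}$, which is where the product form of $v$ is actually used and which renders your steps (a) and (c) unnecessary.
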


\subsection{Comparison with Results in \protect\cite{GMM1, GMM2}\label%
{subsection:comparison}}

In Theorem \ref{THM:2ndOrderCorrection}, if we change $H_{N}$ to $H_{N,2}$,
and equation \ref{Equation:QuinticHartreeEquation} to%
\begin{equation}
i\frac{\partial }{\partial t}\phi +\triangle \phi -\phi \int
v_{2}(x-y)\left\vert \phi (y)\right\vert ^{2}dy=0,
\label{Equation:CubicHartreeEquation}
\end{equation}%
and we let%
\begin{align*}
g(t,x,y)& =-\Delta \delta (x-y)+v_{2}(x-y)\phi (t,x)\overline{\phi }%
(t,y)+(v_{2}\ast |\phi |^{2})(t,x)\delta (x-y), \\
m(t,x,y)& =-v_{2}(x-y)\overline{\phi }(t,x)\overline{\phi }(t,y), \\
\chi _{0}(t)& =-\frac{1}{2}\int v_{2}(x-y)\left\vert \phi (x)\right\vert
^{2}\left\vert \phi (y)\right\vert ^{2}dxdy,
\end{align*}%
then Theorem 1 in \cite{GMM1} reads%
\begin{align*}
& \Vert e^{-\sqrt{N}A(t)}e^{-B(t)}e^{-i\int_{0}^{t}(N\chi _{0}(s)+\chi
_{1}(s))ds}\Omega -e^{itH_{N,2}}e^{-\sqrt{N}A(0)}e^{-B(0)}\Omega \,\Vert _{{%
\mathcal{F}}} \\
& \leqslant \frac{\int_{0}^{t}\Vert e^{B}V_{2}e^{-B}\Omega \Vert _{{\mathcal{%
F}}}ds}{N}+\frac{\int_{0}^{t}\Vert e^{B}[A,V_{2}]e^{-B}\Omega \Vert _{{%
\mathcal{F}}}ds}{N^{\frac{1}{2}}}.
\end{align*}%
If $v_{2}(x)=\frac{\chi (\left\vert x\right\vert )}{\left\vert x\right\vert }
$, the above error estimate becomes%
\begin{align}
& \Vert e^{-\sqrt{N}A(t)}e^{-B(t)}e^{-i\int_{0}^{t}(N\chi _{0}(s)+\chi
_{1}(s))ds}\Omega -e^{itH_{N,2}}e^{-\sqrt{N}A(0)}e^{-B(0)}\Omega \,\Vert _{{%
\mathcal{F}}}  \label{estimate:mateilongtimeestimate} \\
& \leqslant \frac{C(1+t)^{\frac{1}{2}}}{\sqrt{N}}  \notag
\end{align}%
in \cite{GMM2}.

Compared with the above long time estimate, Theorem \ref{THM:MainTheorem}
demonstrates that there is a substantial difference between the 3-body
interaction case and 2-body interaction. Technically speaking, the main
difference between the 2-body and 3-body interactions lies in their error
terms. Though the analysis is more involved even if we assume smooth
potential and the formulas are considerably longer, the more complicated
error terms in the 3-body interaction case in fact allow more room to play.
On the one hand, an error term in the 3-body case carries at least a pair of 
$u$, $p$ or $\phi $ which satisfy some Schr\"{o}dinger equations, for
instance, the term%
\begin{equation*}
\left\Vert v_{3}(x_{1}-y_{1},x_{1}-z_{1})\bar{u}(t,x_{2},x_{1})\bar{u}%
(t,y_{1},z_{1})\right\Vert _{L_{t}^{1}L^{2}}
\end{equation*}%
in formula \ref{formula:Worst3-bodyErrorTerm}, which can be estimated by
Lemma \ref{Lemma:KeyLemmaForErrorTerms}. A typical error term in the 2-body
case can carry only one term of $u$, $p$ or $\phi $, for example, the term%
\begin{equation*}
\left\Vert v_{2}(x_{1}-y_{1})u(t,y_{1},x_{1})\right\Vert
_{L_{t}^{1}([0,T])L^{2}}
\end{equation*}%
implicitly inside formula 47 in \cite{GMM1}. On the other hand, the error
estimate in the construction of the second order correction involves $%
L_{t}^{1}$ and we have no $L_{t}^{1}$ dispersive estimates for the Schr\"{o}%
dinger equation. Therefore, due to the endpoint Strichartz estimates \cite%
{KeelAndTao}, we can construct a $L_{t}^{1}(\mathbb{R}^{+})$ estimate for
the 3-body case which is Lemma \ref{Lemma:KeyLemmaForErrorTerms}, without
having the $t^{\frac{1}{2}}$ in the 2-body case which is necessary to apply
the $L_{t}^{2}$ Kato estimate in \cite{GMM1, GMM2}. Or in other words, we do
Cauchy-Schwarz in time differently.

For the reason stated above, one can not employ the 3-body case error
estimate in the 2-body case. Furthermore, the tool of error estimate in the
2-body case \cite{GMM1, GMM2}, does not apply to the 3-body case, no matter
if $v_{3}$ is regular or singular like formula \ref{def:singular potential}.

\subsection{Outline of The Paper}

We prove Theorem \ref{THM:2ndOrderCorrection} in Section \ref%
{section:ProofOf2ndOrderTheorem}. The derivation of equations \ref%
{Equation:QuinticHartreeEquation} and \ref{Equation:EquationOfk} is also
included there. Section \ref{section:ProofOf2ndOrderTheorem} is similar to
Sections 3-5 in \cite{GMM1}. They share the same basic ideas though the
computation in this paper is more complicated. Therefore we refer the
readers to Sections 3-5 in \cite{GMM1} for details of the infinitesimal
metaplectic representation of symplectic matrices, the main tool of Section %
\ref{section:ProofOf2ndOrderTheorem}, and the rigorous definition of $e^{B}$
etc.

Sections \ref{section:EstimatesFor u} and \ref{section:ErrorEstimates} are
devoted to the proof of Theorem \ref{THM:MainTheoremErrorEstimate}. (1) and
(3) are verified in Section \ref{section:EstimatesFor u}. Error estimates
are sorted out in Section \ref{section:ErrorEstimates}. Like Section 3-4 in 
\cite{GMM2}, we first prove apriori estimates for $u$ which solves equation %
\ref{Equation:EquationOfk}, which are needed in the error estimates, Section %
\ref{section:ErrorEstimates}. Because the general scheme has been set up in 
\cite{GMM1,GMM2}, the details of some basic lemmas and theorems are omitted
e.g. Theorem \ref{THM:ExistenceOfU}\ in Section \ref{section:EstimatesFor u}%
. As mentioned before, our error estimates are done very differently from
the corresponding ones in \cite{GMM2}. The $L_{t}^{1}$ estimate, Lemma \ref%
{Lemma:KeyLemmaForErrorTerms}, is our key lemma for error estimates.

For smoothness of the presentation of Theorems \ref{THM:2ndOrderCorrection}
and \ref{THM:MainTheoremErrorEstimate}, we postpone the proof of Theorem \ref%
{THM:TheLongTimeBehaviorOfHartree} to Section \ref{section:AppendixHartree}.

\section{The Derivation of 2nd Order Corrections / Proof of Theorem \protect
\ref{THM:2ndOrderCorrection}}

\label{section:ProofOf2ndOrderTheorem}

\subsection{Derivation of The Quintic Hartree equation}

We first derive the quintic Hartree equation \ref%
{Equation:QuinticHartreeEquation} for the one-particle wave function $\phi $
as needed in Theorem \ref{THM:2ndOrderCorrection}.

\begin{lemma}
\label{Lemma:Commutators} The following commutating relations hold, where $A$
denotes $A(\phi )$, and $A$ , $V$ are defined by formulas \ref{Def:A} and %
\ref{Formula:3BodyHamiltonian} : 
\begin{eqnarray*}
&&[A,V] \\
&=&3\int v_{3}(x-y,x-z)(\overline{\phi }(x)a_{y}^{\ast }a_{z}^{\ast
}a_{x}a_{y}a_{z}+\phi (x)a_{x}^{\ast }a_{y}^{\ast }a_{z}^{\ast
}a_{y}a_{z})dxdydz
\end{eqnarray*}%
\begin{eqnarray*}
&&[A,[A,V]] \\
&=&6\int v_{3}(x-y,x-z)(\overline{\phi }(x)\overline{\phi }(y)a_{z}^{\ast
}a_{x}a_{y}a_{z}+2\phi (x)\overline{\phi }(y)a_{x}^{\ast }a_{z}^{\ast
}a_{y}a_{z} \\
&&+\phi (x)\phi (y)a_{x}^{\ast }a_{y}^{\ast }a_{z}^{\ast }a_{z})dxdydz \\
&&+6\int v_{3}(x-y,x-z)\left\vert \phi (x)\right\vert ^{2}a_{y}^{\ast
}a_{z}^{\ast }a_{y}a_{z}dxdydz
\end{eqnarray*}%
\begin{eqnarray*}
&&[A,[A,[A,V]]] \\
&=&36\int v_{3}(x-y,x-z)\left\vert \phi (x)\right\vert ^{2}(\overline{\phi }%
(y)a_{z}^{\ast }a_{y}a_{z}+\phi (y)a_{y}^{\ast }a_{z}^{\ast }a_{z})dxdydz \\
&&+6\int v_{3}(x-y,x-z)(\overline{\phi }(x)\overline{\phi }(y)\overline{\phi 
}(z)a_{x}a_{y}a_{z}+\phi (x)\phi (y)\phi (z)a_{x}^{\ast }a_{y}^{\ast
}a_{z}^{\ast })dxdydz \\
&&+18\int v_{3}(x-y,x-z)(\overline{\phi }(x)\overline{\phi }(y)\phi
(z)a_{z}^{\ast }a_{x}a_{y}+\phi (x)\phi (y)\overline{\phi }(z)a_{x}^{\ast
}a_{y}^{\ast }a_{z})dxdydz
\end{eqnarray*}%
\begin{eqnarray*}
&&[A,[A,[A,[A,V]]]] \\
&=&72\int v_{3}(x-y,x-z)\left\vert \phi (x)\right\vert ^{2}(\overline{\phi }%
(y)\overline{\phi }(z)a_{y}a_{z}+\phi (y)\phi (z)a_{y}^{\ast }a_{z}^{\ast
})dxdydz \\
&&+144\int v_{3}(x-y,x-z)\left\vert \phi (x)\right\vert ^{2}\overline{\phi }%
(y)\phi (z)a_{z}^{\ast }a_{y}dxdydz \\
&&+72\int v_{3}(x-y,x-z)\left\vert \phi (x)\right\vert ^{2}\left\vert \phi
(y)\right\vert ^{2}a_{z}^{\ast }a_{z}dxdydz
\end{eqnarray*}%
\begin{eqnarray*}
&&[A,[A,[A,[A,[A,V]]]]] \\
&=&360\int v_{3}(x-y,x-z)\left\vert \phi (x)\right\vert ^{2}\left\vert \phi
(y)\right\vert ^{2}(\overline{\phi }(z)a_{z}+\phi (z)a_{z}^{\ast })dxdydz
\end{eqnarray*}%
\begin{eqnarray*}
&&[A,[A,[A,[A,[A,[A,V]]]]]] \\
&=&720\int v_{3}(x-y,x-z)\left\vert \phi (x)\right\vert ^{2}\left\vert \phi
(y)\right\vert ^{2}\left\vert \phi (z)\right\vert ^{2}dxdydz
\end{eqnarray*}
\end{lemma}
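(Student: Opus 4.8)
The plan is to treat $\mathrm{ad}_A := [A(\phi),\,\cdot\,]$ as a derivation on the algebra of creation and annihilation operators and to reduce everything to two elementary facts. First I would record, directly from the canonical commutation relations \ref{Formula:CanonicalCommutation}, that
\[
[A(\phi), a_w^*] = \bar{\phi}(w), \qquad [A(\phi), a_w] = \phi(w)
\]
as operator-valued distributions. The key structural observation is that both right-hand sides are multiples of the identity, hence central, so $\mathrm{ad}_A^2$ annihilates every $a_w$ and every $a_w^*$. Combined with the Leibniz rule $\mathrm{ad}_A(XY) = \mathrm{ad}_A(X)\,Y + X\,\mathrm{ad}_A(Y)$, this shows that $\mathrm{ad}_A$ applied to a normal-ordered monomial $a_{x_1}^*\cdots a_{x_p}^* a_{y_1}\cdots a_{y_q}$ is exactly the sum of the $p+q$ terms obtained by deleting one factor and multiplying by the corresponding $\bar{\phi}$ or $\phi$; crucially, no new contractions among the surviving operators are created, so the output is again normal-ordered. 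This is the only structural input; the rest is bookkeeping.

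I would then apply this rule to $V = \int v_3(x-y,x-z)\, a_x^* a_y^* a_z^* a_x a_y a_z\, dx\, dy\, dz$ and iterate. Each application of $\mathrm{ad}_A$ lowers the operator degree by one, so after $n$ steps one obtains a sum of normal-ordered monomials of degree $6-n$ with scalar coefficients assembled from $v_3$ and products of $\phi$ and $\bar{\phi}$; after six steps only the scalar term remains. A convenient way to organize the count is the identity $\mathrm{ad}_A^n(V) = n!\sum_{|S|=n}(\text{contract the slots in }S)$, the sum running over $n$-element subsets $S$ of the six operator-slots $\{a_x^*,a_y^*,a_z^*,a_x,a_y,a_z\}$; at each level I would then use the symmetry of $v_3(x-y,x-z)$ in $x,y,z$ (together with a relabeling of the integration variables) to merge the subsets lying in a common equivalence class, so that the coefficient of each surviving shape becomes $n!$ times the size of its class. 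Diagonal pairs $\{a_w^*,a_w\}$ contract to $\bar{\phi}(w)\phi(w) = |\phi(w)|^2$, which is the mechanism producing the $|\phi|^2$ factors; in particular the six-fold commutator contracts all six operators in $6! = 720$ orderings, each contributing $\bar{\phi}(x)\bar{\phi}(y)\bar{\phi}(z)\phi(x)\phi(y)\phi(z) = |\phi(x)|^2|\phi(y)|^2|\phi(z)|^2$, which gives the stated value.

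The main obstacle is not conceptual but the volume of terms: the four-, five- and six-fold commutators are built from roughly a dozen monomials, and one must be careful (i) to check that the $v_3$-symmetry genuinely identifies two terms before merging, since the surviving operator strings must be brought to a common labeling as well, and (ii) to track every combinatorial weight — the $n!$ from the orderings of the $\mathrm{ad}_A$'s and the multiplicity of each equivalence class — so that the coefficients $3, 6, 12, 18, 36, 72, 144, 360, 720$ emerge exactly. Since all single commutators carry a plus sign, every term produced is positive, which is a useful consistency check. I would write out the inductive step once in general and then simply record the output at each level $n = 1, \dots, 6$.
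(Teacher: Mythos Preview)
Your proposal is correct and is precisely the direct calculation the paper has in mind; the paper's own proof consists of the single sentence ``This is a direct calculation using the canonical commutation relation \ref{Formula:CanonicalCommutation},'' and your organization via $\mathrm{ad}_A^n(V)=n!\sum_{|S|=n}(\text{contract the slots in }S)$ together with the $v_3$-symmetry bookkeeping is exactly how one carries that out in practice. Your elementary identities $[A,a_w^\ast]=\bar\phi(w)$, $[A,a_w]=\phi(w)$ and the observation that both are central (so no reordering is ever needed) are the right starting point, and your spot-check of the $6!$ and the sign consistency are both valid.
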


\begin{proof}
This is a direct calculation using the canonical commutation relation \ref%
{Formula:CanonicalCommutation}.
\end{proof}

Now, we write $\Psi _{0}(t)=e^{\sqrt{N}A(t)}e^{itH_{N}}e^{-\sqrt{N}%
A(0)}e^{-B(0)}\Omega $ for which we carry out the calculation in the spirit
of equation~3.7 in Rodnianski and Schlein \cite{RodnianskiAndSchlein}.

\begin{proposition}
Let $\phi $ solve the Hartree equation 
\begin{equation*}
i\frac{\partial }{\partial t}\phi +\triangle \phi -\frac{1}{2}\phi \int
v_{3}(x-y,x-z)\left\vert \phi (y)\right\vert ^{2}\left\vert \phi
(z)\right\vert ^{2}dydz=0
\end{equation*}%
then $\Psi _{0}(t)$ satisfies 
\begin{align}
& \frac{1}{i}\frac{\partial }{\partial t}\Psi _{0}(t)=\bigg(H_{0}-\frac{1}{4!%
}\frac{1}{6}[A,[A,[A,[A,V]]]]-\frac{1}{6}N^{-2}V-\frac{1}{6}N^{-3/2}[A,V]
\label{Equation:r-s equation} \\
& -\frac{1}{12}N^{-1}[A,[A,V]]-\frac{1}{36}N^{-\frac{1}{2}}[A,[A,[A,V]]] 
\notag \\
& +\frac{N}{3}\int v_{3}(x-y,x-z)\left\vert \phi (x)\right\vert
^{2}\left\vert \phi (y)\right\vert ^{2}\left\vert \phi (z)\right\vert
^{2}dxdydz\bigg)\Psi _{0}(t)~.  \notag
\end{align}
\end{proposition}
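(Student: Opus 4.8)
The plan is to differentiate $\Psi_{0}(t)=e^{\sqrt{N}A(t)}e^{itH_{N}}e^{-\sqrt{N}A(0)}e^{-B(0)}\Omega$ in $t$ and identify the right-hand side piece by piece. Writing $\Phi(t)=e^{itH_{N}}e^{-\sqrt{N}A(0)}e^{-B(0)}\Omega$, so that $\tfrac{1}{i}\partial_{t}\Phi=H_{N}\Phi$, one gets
\begin{equation*}
\frac{1}{i}\partial_{t}\Psi_{0}=\Big(\big(\tfrac{1}{i}\partial_{t}e^{\sqrt{N}A(t)}\big)e^{-\sqrt{N}A(t)}\Big)\Psi_{0}+\big(e^{\sqrt{N}A(t)}H_{N}e^{-\sqrt{N}A(t)}\big)\Psi_{0}.
\end{equation*}
Because $A=A(\phi)$ is linear in the field operators, the canonical commutation relations \ref{Formula:CanonicalCommutation} force $[A(\phi),A(\dot{\phi})]$ to be a $c$-number, a purely imaginary multiple of $\operatorname{Im}\int\bar{\phi}\dot{\phi}\,dx$; hence the derivative of the exponential collapses to $\big(\partial_{t}e^{\sqrt{N}A(t)}\big)e^{-\sqrt{N}A(t)}=\sqrt{N}A(\dot{\phi})+\tfrac{N}{2}[A(\phi),A(\dot{\phi})]$, the last term being an $O(N)$ scalar. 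So the first group contributes one term linear in $a,a^{*}$, namely $-i\sqrt{N}A(\dot{\phi})$, together with an $O(N)$ scalar.

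Next I would expand $e^{\sqrt{N}A}H_{N}e^{-\sqrt{N}A}=e^{\sqrt{N}A}H_{0}e^{-\sqrt{N}A}-\tfrac{1}{6N^{2}}e^{\sqrt{N}A}Ve^{-\sqrt{N}A}$ by the series $e^{X}Ye^{-X}=\sum_{n\geq 0}\tfrac{1}{n!}\,\mathrm{ad}_{X}^{\,n}(Y)$ with $\mathrm{ad}_{X}(Y)=[X,Y]$. Since $H_{0}$ is quadratic and $A$ linear, the $H_{0}$ series terminates after the double commutator and $[A,[A,H_{0}]]$ is again a (real) scalar, a multiple of $\Vert\nabla\phi\Vert_{L^{2}}^{2}$; the $V$ series terminates after six commutators, which are exactly the ones tabulated in Lemma \ref{Lemma:Commutators}. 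Matching the weights $N^{n/2}/n!$ against those commutators, the terms $n=0,\dots,4$ of $-\tfrac{1}{6N^{2}}\mathrm{ad}_{A}^{\,n}(V)/n!$ reproduce precisely $-\tfrac{1}{6}N^{-2}V$, $-\tfrac{1}{6}N^{-3/2}[A,V]$, $-\tfrac{1}{12}N^{-1}[A,[A,V]]$, $-\tfrac{1}{36}N^{-1/2}[A,[A,[A,V]]]$, and $-\tfrac{1}{4!}\tfrac{1}{6}[A,[A,[A,[A,V]]]]$; the $n=5$ term is an $O(\sqrt{N})$ term linear in $a,a^{*}$; and the $n=6$ term is the scalar $-\tfrac{N}{6}\int v_{3}(x-y,x-z)|\phi(x)|^{2}|\phi(y)|^{2}|\phi(z)|^{2}\,dxdydz$.

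Finally I would use the quintic Hartree equation \ref{Equation:QuinticHartreeEquation} to see that everything else collapses. Substituting $i\dot{\phi}=-\Delta\phi+\tfrac{1}{2}\phi\int v_{3}(x-y,x-z)|\phi(y)|^{2}|\phi(z)|^{2}dydz$ into $A(\dot{\phi})$, a direct check shows that the three surviving $O(\sqrt{N})$ linear terms — $-i\sqrt{N}A(\dot{\phi})$, the single-commutator term $\sqrt{N}[A,H_{0}]$ from conjugating $H_{0}$, and the $n=5$ term from conjugating $V$ — cancel identically, using the symmetry of $v_{3}$ to recognize $\int v_{3}(x-y,x-z)|\phi(x)|^{2}|\phi(y)|^{2}dxdy$ as the Hartree potential. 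Pairing the Hartree equation with $\bar{\phi}$ and taking imaginary parts reduces $\operatorname{Im}\int\bar{\phi}\dot{\phi}\,dx$ to a combination of $\Vert\nabla\phi\Vert_{L^{2}}^{2}$ and $\int v_{3}|\phi|^{6}$, so the $O(N)$ scalars — the one from the exponential derivative, $\tfrac{N}{2}[A,[A,H_{0}]]$, and the $n=6$ term — add up with the $\Vert\nabla\phi\Vert^{2}$ contributions cancelling and $\tfrac{N}{2}\int v_{3}|\phi|^{6}-\tfrac{N}{6}\int v_{3}|\phi|^{6}=\tfrac{N}{3}\int v_{3}|\phi|^{6}$ left over, which is exactly the asserted scalar. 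What remains is then precisely \ref{Equation:r-s equation}. The main obstacle is purely bookkeeping: pairing each combinatorial constant $N^{n/2}/n!$ with the lengthy commutators of Lemma \ref{Lemma:Commutators}, and keeping straight the signs and factors of $i$ in $A(\dot{\phi})$, in $[A,[A,H_{0}]]$, and in the scalar correction from differentiating $e^{\sqrt{N}A(t)}$; the finite termination of the two $\mathrm{ad}_{A}$ series is the structural input that makes this computation closed.
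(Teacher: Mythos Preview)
Your argument is correct and follows essentially the same route as the paper: differentiate $\Psi_{0}$, expand $e^{\sqrt{N}A}H_{N}e^{-\sqrt{N}A}$ by the finite $\mathrm{ad}$-series, and use the Hartree equation to kill the $O(\sqrt{N})$ linear pieces. The only organizational difference is in how the $O(N)$ scalars are handled: you compute $\tfrac{1}{i}[A,\dot A]$, $[A,[A,H_{0}]]$, and the sixth commutator separately and then invoke the Hartree equation paired with $\bar\phi$ to cancel the $\Vert\nabla\phi\Vert^{2}$ contributions, whereas the paper observes more abstractly that once $\tfrac{1}{i}\dot A+[A,H_{0}]-\tfrac{1}{6\cdot 5!}\,\mathrm{ad}_{A}^{5}(V)=0$ holds as an operator identity, commuting it once more with $A$ immediately gives $\tfrac{1}{i}[A,\dot A]+[A,[A,H_{0}]]=\tfrac{1}{6\cdot 5!}\,\mathrm{ad}_{A}^{6}(V)$, so the $O(N)$ combination reduces to $(\tfrac{1}{2\cdot 6\cdot 5!}-\tfrac{1}{6\cdot 6!})\,\mathrm{ad}_{A}^{6}(V)=\tfrac{N}{3}\int v_{3}|\phi|^{6}$ without ever writing $\Vert\nabla\phi\Vert^{2}$ or $\mathrm{Im}\int\bar\phi\dot\phi$ explicitly. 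Your direct computation and the paper's commutator shortcut are two ways of doing the same bookkeeping and yield the same $\tfrac{N}{3}$.
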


\begin{proof}
Applying the formulas 
\begin{equation*}
\left( \frac{\partial }{\partial t}e^{C(t)}\right) \left( e^{-C(t)}\right) =%
\dot{C}+\frac{1}{2!}[C,\dot{C}]+\frac{1}{3!}\left[ C,[C,\dot{C}]\right]
+\ldots
\end{equation*}%
\begin{equation*}
e^{C}He^{-C}=H+[C,H]+\frac{1}{2!}\left[ C,[C,H]\right] +\ldots ~.
\end{equation*}%
to $C=\sqrt{N}A$ and $H=H_{N},$ we obtain

\begin{equation}
\frac{1}{i}\frac{\partial }{\partial t}\Psi _{0}(t)=L_{0}\Psi _{0}~,
\label{Equation:r-s}
\end{equation}%
where 
\begin{align*}
& L_{0}=\frac{1}{i}\left( \frac{\partial }{\partial t}e^{\sqrt{N}%
A(t)}\right) e^{-\sqrt{N}A(t)}+e^{\sqrt{N}A(t)}H_{N}e^{-\sqrt{N}A(t)} \\
=& \frac{1}{i}\left( N^{1/2}\dot{A}+\frac{N}{2}[A,\dot{A}]\right)
+H_{0}+N^{1/2}[A,H_{0}]+\frac{N}{2!}[A,[A,H_{0}]]-\frac{1}{6}\bigg(%
N^{-2}V+N^{-3/2}[A,V] \\
& +\frac{N^{-1}}{2!}[A,[A,V]]+\frac{N^{-\frac{1}{2}}}{3!}[A,[A,[A,V]]] \\
& +\frac{1}{4!}[A,[A,[A,[A,V]]]]+\frac{N^{\frac{1}{2}}}{5!}%
[A,[A,[A,[A,[A,V]]]]] \\
& +\frac{N}{6!}[A,[A,[A,[A,[A,[A,V]]]]]]\bigg)\text{ }.
\end{align*}

The Hartree equation \ref{Equation:QuinticHartreeEquation} is equivalent to
setting terms of order $\sqrt{N}$ to zero i.e.%
\begin{equation*}
\frac{1}{i}\dot{A}+[A,H_{0}]-\frac{1}{6}\frac{1}{5!}[A,[A,[A,[A,[A,V]]]]]=0~.
\end{equation*}%
Or more explicitly, the above equation is 
\begin{equation*}
a(\overline{i\phi _{t}}+\overline{\triangle \phi }-\frac{1}{2}\overline{\phi 
}\int v_{3,1-2,1-3}\left\vert \phi _{2}\right\vert ^{2}\left\vert \phi
_{3}\right\vert ^{2}dydz)+a^{\ast }(i\phi _{t}+\triangle \phi -\frac{1}{2}%
\phi \int v_{3,1-2,1-3}\left\vert \phi _{2}\right\vert ^{2}\left\vert \phi
_{3}\right\vert ^{2}dydz)=0~,
\end{equation*}%
via Lemma \ref{Lemma:Commutators} and the fact that $[\Delta
_{x}a_{x},a_{y}^{\ast }]=(\Delta \delta )(x-y)$.

Thus 
\begin{equation*}
\frac{1}{i}[A,\dot{A}]+[A,[A,H_{0}]]-\frac{1}{5!}\frac{1}{6}%
[A,[A,[A,[A,[A,[A,V]]]]]]=0~,
\end{equation*}%
i.e. equation \ref{Equation:r-s} simplifies to 
\begin{align*}
& \frac{1}{i}\frac{\partial }{\partial t}\Psi _{0}(t)=\bigg(H_{0}-\frac{1}{4!%
}\frac{1}{6}[A,[A,[A,[A,V]]]]-\frac{1}{6}N^{-2}V-\frac{1}{6}N^{-3/2}[A,V] \\
& -\frac{1}{12}N^{-1}[A,[A,V]]-\frac{1}{36}N^{-\frac{1}{2}}[A,[A,[A,V]]] \\
& +\frac{N}{3}\int v_{3,1-2,1-3}\left\vert \phi _{1}\right\vert
^{2}\left\vert \phi _{2}\right\vert ^{2}\left\vert \phi _{3}\right\vert
^{2}dxdydz\bigg)\Psi _{0}(t)~.
\end{align*}%
which is equation \ref{Equation:r-s equation}.
\end{proof}

Because $\int v_{3,1-2,1-3}\left\vert \phi _{1}\right\vert ^{2}\left\vert
\phi _{2}\right\vert ^{2}\left\vert \phi _{3}\right\vert ^{2}dxdydz$ only
contributes a phase when $\phi _{0}$ is sufficiently smooth, we write 
\begin{equation*}
\frac{N}{3}\int v_{3}(x-y,x-z)\left\vert \phi (x)\right\vert ^{2}\left\vert
\phi (y)\right\vert ^{2}\left\vert \phi (z)\right\vert ^{2}dxdydz:=-N\chi
_{0}~.
\end{equation*}%
Then the first two terms on the right-hand side of equation \ref%
{Equation:r-s equation} are the main ones we need to consider, since the
next four terms are at most $O\left( 1/\sqrt{N}\right) $.

In order to kill the terms involving "only creation operators" i.e. $%
a_{x}^{\ast }a_{y}^{\ast }$ in $\frac{1}{4!}\frac{1}{6}[A,[A,[A,[A,V]]]]$,
we introduce $B$ (see \ref{Formula:DefinitionOfB}) and denote 
\begin{equation}
\Psi =e^{B}\Psi _{0}=e^{B}e^{\sqrt{N}A(t)}e^{itH_{N}}e^{-\sqrt{N}%
A(0)}e^{-B(0)}\Omega .  \notag
\end{equation}%
Hence we have 
\begin{equation}
\frac{1}{i}\frac{\partial }{\partial t}\Psi =L\Psi ~,  \notag
\end{equation}%
where

\begin{align*}
& L=\frac{1}{i}\left( \frac{\partial }{\partial t}e^{B}\right)
e^{-B}+e^{B}L_{0}e^{-B} \\
& =L_{Q}-\frac{1}{6}N^{-2}e^{B}Ve^{-B}-\frac{1}{6}N^{-3/2}e^{B}[A,V]e^{-B}-%
\frac{1}{12}N^{-1}e^{B}[A,[A,V]]e^{-B} \\
& -\frac{1}{36}N^{-\frac{1}{2}}e^{B}[A,[A,[A,V]]]e^{-B}-N\chi _{0}~,
\end{align*}%
and the quadratic terms 
\begin{equation}
L_{Q}=\frac{1}{i}\left( \frac{\partial }{\partial t}e^{B}\right)
e^{-B}+e^{B}\left( H_{0}-\frac{1}{4!}\frac{1}{6}[A,[A,[A,[A,V]]]]\right)
e^{-B}\text{ }.  \label{Formula:DefinitonOfLQ}
\end{equation}

At this point, we proceed to seek a equation for $k$ s.t. the coefficient of 
$a_{x}^{\ast }a_{y}^{\ast }$ in $\frac{1}{4!}\frac{1}{6}%
e^{B}[A,[A,[A,[A,V]]]]e^{-B}$ is eliminated.

\begin{remark}
One might be concerned of the pure creation $a_{x}^{\ast }a_{y}^{\ast
}a_{z}^{\ast }$ in $[A,[A,[A,V]]].$ Lemma \ref{Lemma:KeyLemmaForErrorTerms}
and the factor $1/\sqrt{N}$ will take care of that. Note that \cite%
{GMM1,GMM2} do not have terms like this.
\end{remark}

\subsection{Equation for $k$}

\subsubsection{The infinitesimal metaplectic representation\protect\cite%
{GMM1}}

Let $sp$ be the infinite dimensional Lie algebra of matrices of the form%
\begin{equation*}
S(d,k,l)=\left( 
\begin{matrix}
d & k \\ 
l & -d^{T}%
\end{matrix}%
\right)
\end{equation*}%
where $k$ and $l$ are symmetric, and $Quad$ be the Lie algebra consisting of
homogeneous quadratics of the form 
\begin{align*}
Q(d,k,l):=& \frac{1}{2}\left( 
\begin{matrix}
a_{x} & a_{x}^{\ast }%
\end{matrix}%
\right) \left( 
\begin{matrix}
d & k \\ 
l & -d^{T}%
\end{matrix}%
\right) \left( 
\begin{matrix}
-a_{y}^{\ast } \\ 
a_{y}%
\end{matrix}%
\right) \\
& =-\int d(x,y)\frac{a_{x}a_{y}^{\ast }+a_{y}^{\ast }a_{x}}{2}dxdy+\frac{1}{2%
}\int k(x,y)a_{x}a_{y}dxdy \\
& -\frac{1}{2}\int l(x,y)a_{x}^{\ast }a_{y}^{\ast }dxdy
\end{align*}%
equipped with Poisson bracket. In the spirit of page 185, Folland \cite%
{Folland}, we define the infinitesimal metaplectic representation: a Lie
algebra isomorphism ${\mathcal{I}}:sp\rightarrow Quad$ by $Q(d,k,l)={%
\mathcal{I}}(S(d,k,l)).$ Then we see that%
\begin{equation*}
B={\mathcal{I}}(K)\text{ },
\end{equation*}%
for 
\begin{equation}
K=\left( 
\begin{matrix}
0 & k(t,x,y) \\ 
\overline{k}(t,x,y) & 0%
\end{matrix}%
\right) \text{,}  \label{Formula:DefinitionOfK}
\end{equation}%
and it follows that

(i) 
\begin{equation*}
{\mathcal{I}}\left( e^{S}Ce^{-S}\right) =e^{{\mathcal{I(S)}}}{\mathcal{I}}%
\left( C\right) e^{-{\mathcal{I(S)}}}
\end{equation*}%
if ${\mathcal{I}}\left( C\right) \in sp.$

(ii) 
\begin{equation*}
{\mathcal{I}}\left( \left( \frac{\partial }{\partial t}e^{S}\right)
e^{-S}\right) =\left( \frac{\partial }{\partial t}e^{{\mathcal{I(S)}}%
}\right) e^{-{\mathcal{I(S)}}}
\end{equation*}%
if ${\mathcal{I(S)}}$ is skew-Hermitian.

(iii) 
\begin{equation*}
e^{{\mathcal{I(S)}}}%
\begin{pmatrix}
a_{x} & a_{x}^{\ast }%
\end{pmatrix}%
\begin{pmatrix}
f \\ 
g%
\end{pmatrix}%
e^{-{\mathcal{I(S)}}}=%
\begin{pmatrix}
a_{x} & a_{x}^{\ast }%
\end{pmatrix}%
e^{S}%
\begin{pmatrix}
f \\ 
g%
\end{pmatrix}%
\end{equation*}%
if ${\mathcal{I(S)}}$ is skew-Hermitian.

\begin{remark}
Properties (i) and (ii) will be used below. (iii) will be used in Section %
\ref{section:ErrorEstimates}.
\end{remark}

\subsubsection{Derivation of Equation \protect\ref{Equation:EquationOfk}}

Use the simplifications noted in Remark \ref{Remark:Simplicity}, recall that 
\begin{eqnarray*}
&&\frac{1}{4!6}[A,[A,[A,[A,V]]]] \\
&=&\frac{1}{2}\int v_{3,1-2,1-3}\left\vert \phi _{1}\right\vert ^{2}(%
\overline{\phi }_{2}\overline{\phi }_{3}a_{y}a_{z}+\phi _{2}\phi
_{3}a_{y}^{\ast }a_{z}^{\ast })dxdydz \\
&&+\int v_{3,1-2,1-3}\left\vert \phi _{1}\right\vert ^{2}\overline{\phi }%
_{2}\phi _{3}a_{z}^{\ast }a_{y}dxdydz \\
&&+\frac{1}{2}\int v_{3,1-2,1-3}\left\vert \phi _{1}\right\vert
^{2}\left\vert \phi _{2}\right\vert ^{2}a_{z}^{\ast }a_{z}dxdydz\text{ ,}
\end{eqnarray*}%
and 
\begin{equation*}
H_{0}=\int a_{x}^{\ast }\Delta a_{x}dx
\end{equation*}%
we write%
\begin{equation*}
G=\left( 
\begin{matrix}
g & 0 \\ 
0 & -g^{T}%
\end{matrix}%
\right) \qquad \text{and}\qquad M=\left( 
\begin{matrix}
0 & m \\ 
-\overline{m} & 0%
\end{matrix}%
\right)
\end{equation*}%
with 
\begin{equation*}
g=-\triangle \delta _{1-2}+\left( \int v_{3,1-2,1-3}\left\vert \phi
_{3}\right\vert ^{2}dz\right) \overline{\phi }_{1}\phi _{2}+\frac{1}{2}%
\left( \int v_{3,1-2,1-3}\left\vert \phi _{2}\right\vert ^{2}\left\vert \phi
_{3}\right\vert ^{2}dydz\right) \delta _{1-2}
\end{equation*}%
and%
\begin{equation*}
m=-\left( \int v_{3,1-2,1-3}\left\vert \phi _{3}\right\vert ^{2}dz\right) 
\overline{\phi }_{1}\overline{\phi }_{2}\text{.}
\end{equation*}

Of course, we would like to be able to write

\begin{equation*}
H_{0}-\frac{1}{4!}\frac{1}{6}[A,[A,[A,[A,V]]]]={\mathcal{I}}\left( G\right) +%
{\mathcal{I}}\left( M\right) \text{ .}
\end{equation*}%
Unfortunately, the above equality is not true. For example%
\begin{equation*}
{\mathcal{I}}\left( \left( 
\begin{matrix}
-\triangle \delta _{1-2} & 0 \\ 
0 & \triangle \delta _{1-2}%
\end{matrix}%
\right) \right) =\int \frac{a_{x}^{\ast }\triangle a_{x}+a_{x}\triangle
a_{x}^{\ast }}{2}dx.
\end{equation*}%
However, the commutators of ${\mathcal{I}}\left( G\right) ,$ ${\mathcal{I}}%
\left( M\right) $ and $H_{0}-\frac{1}{4!}\frac{1}{6}[A,[A,[A,[A,V]]]]$ with $%
B$ are the same as in the discussion in page 287 in \cite{GMM1}. The same
idea applies here.

Split 
\begin{equation*}
H_{0}-\frac{1}{4!}\frac{1}{6}[A,[A,[A,[A,V]]]]=H_{G}+{\mathcal{I}}(\mathcal{M%
})
\end{equation*}%
where 
\begin{equation*}
H_{G}=H_{0}-\int v_{3,1-2,1-3}\left\vert \phi _{3}\right\vert ^{2}\overline{%
\phi }_{2}\phi _{1}a_{x}^{\ast }a_{y}dxdydz-\frac{1}{2}\int
v_{3,1-2,1-3}\left\vert \phi _{2}\right\vert ^{2}\left\vert \phi
_{3}\right\vert ^{2}a_{x}^{\ast }a_{x}dxdydz
\end{equation*}%
which has the property that%
\begin{equation*}
\lbrack H_{G},B]=[\mathcal{I}\left( G\right) ,B].
\end{equation*}

Now, $L_{Q}$ from formula \ref{Formula:DefinitonOfLQ} reads 
\begin{eqnarray*}
L_{Q} &=&\frac{1}{i}\left( \frac{\partial }{\partial t}e^{B}\right)
e^{-B}+e^{B}\left( H_{0}-\frac{1}{4!}\frac{1}{6}[A,[A,[A,[A,V]]]]\right)
e^{-B} \\
&=&{\mathcal{I}}\left( \left( \frac{1}{i}\frac{\partial }{\partial t}%
e^{K}\right) e^{-K}\right) +e^{B}H_{G}e^{-B}+{\mathcal{I}}\left(
e^{K}Me^{-K}\right) \\
&=&{\mathcal{I}}\left( \left( \frac{1}{i}\frac{\partial }{\partial t}%
e^{K}\right) e^{-K}\right) +H_{G}+\left( e^{B}H_{G}e^{-B}-H_{G}\right) +{%
\mathcal{I}}\left( e^{K}Me^{-K}\right) \\
&=&H_{G}+{\mathcal{I}}\left( \left( \frac{1}{i}\frac{\partial }{\partial t}%
e^{K}\right) e^{-K}\right) +[e^{B},H_{G}]e^{-B}+{\mathcal{I}}\left(
e^{K}Me^{-K}\right) \\
&=&H_{G}+{\mathcal{I}}\left( \left( \frac{1}{i}\frac{\partial }{\partial t}%
e^{K}\right) e^{-K}+[e^{K},G]e^{-K}+e^{K}Me^{-K}\right) \\
&=&H_{G}+{\mathcal{I}}(\mathcal{M}_{1}+\mathcal{M}_{2}+\mathcal{M}_{3})\text{
}.
\end{eqnarray*}

Then by the definition of the isomorphism ${\mathcal{I}}$, the coefficient
of $a_{x}a_{y}$ is $-(\mathcal{M}_{1}+\mathcal{M}_{2}+\mathcal{M}_{3})_{12},$
and the coefficient of $a_{x}^{\ast }a_{y}^{\ast }$ is $(\mathcal{M}_{1}+%
\mathcal{M}_{2}+\mathcal{M}_{3})_{21}.$ To write it explicitly:%
\begin{align}
& -(\mathcal{M}_{1}+\mathcal{M}_{2}+\mathcal{M}_{3})_{12}
\label{Formula:CoeffOfOnlyCreation} \\
& =\overline{(\mathcal{M}_{1}+\mathcal{M}_{2}+\mathcal{M}_{3})_{21}}  \notag
\\
& =(i\sinh (k)_{t}+\sinh (k)g^{T}+g\sinh (k))\overline{\cosh (k)}  \notag \\
& -(i\cosh (k)_{t}-[\cosh (k),g]\sinh (k)  \notag \\
& -\sinh (k)\overline{m}\sinh (k)-\cosh (k)m\overline{\cosh (k)}~.  \notag
\end{align}%
Setting formula \ref{Formula:CoeffOfOnlyCreation} to $0$ confers equation %
\ref{Equation:EquationOfk}. This implies that%
\begin{eqnarray*}
{\mathcal{I}}(\mathcal{M}_{1}+\mathcal{M}_{2}+\mathcal{M}_{3}) &=&-\int (%
\mathcal{M}_{1}+\mathcal{M}_{2}+\mathcal{M}_{3})_{11}\frac{a_{x}a_{y}^{\ast
}+a_{y}^{\ast }a_{x}}{2}dxdy \\
&=&-\int d(t,x,y)\frac{a_{x}a_{y}^{\ast }+a_{y}^{\ast }a_{x}}{2}dxdy \\
&=&-\int d(t,x,y)a_{y}^{\ast }a_{x}dxdy-\frac{1}{2}\int d(t,x,x)dx
\end{eqnarray*}%
where $d(t,x,y)$ is given by formula \ref{Formula:DefinitionOFd}.

\begin{remark}
$(\mathcal{M}_{1}+\mathcal{M}_{2}+\mathcal{M}_{3})_{ij}$ means the entry on
the $i$th row and the $j$th column of the matrix $(\mathcal{M}_{1}+\mathcal{M%
}_{2}+\mathcal{M}_{3}).$
\end{remark}

We summarize the computations we have done so far in this proposition:

\begin{proposition}
If $\phi $ and $k$ solve equations \ref{Equation:QuinticHartreeEquation} and %
\ref{Equation:EquationOfk}, then the coefficients of $a_{x}a_{y}$ and $%
a_{x}^{\ast }a_{y}^{\ast }$ in $e^{B}[A,[A,[A,[A,V]]]]e^{-B}$ are $0$ and $%
L_{Q}$ becomes 
\begin{align*}
L_{Q}=& H_{0}-\int v_{3}(x-y,y-z)\left\vert \phi (z)\right\vert ^{2}%
\overline{\phi }(y)\phi (x)a_{x}^{\ast }a_{y}dxdydz \\
& -\frac{1}{2}\int v_{3}(x-y,y-z)\left\vert \phi (y)\right\vert
^{2}\left\vert \phi (z)\right\vert ^{2}a_{x}^{\ast }a_{x}dxdydz \\
& -\int d(t,x,y)a_{y}^{\ast }a_{x}dxdy-\frac{1}{2}\int d(t,x,x)dx\text{ }.
\end{align*}%
Recall that $\Psi =e^{B}\Psi _{0}=e^{B}e^{\sqrt{N}A(t)}e^{itH_{N}}e^{-\sqrt{N%
}A(0)}e^{-B(0)}\Omega $ solves%
\begin{equation}
\frac{1}{i}\frac{\partial }{\partial t}\Psi =L\Psi .  \notag
\end{equation}%
We can now write out%
\begin{eqnarray*}
L &=&L_{Q}-\frac{1}{6}N^{-2}e^{B}Ve^{-B}-\frac{1}{6}N^{-3/2}e^{B}[A,V]e^{-B}-%
\frac{1}{12}N^{-1}e^{B}[A,[A,V]]e^{-B} \\
&&-\frac{1}{36}N^{-\frac{1}{2}}e^{B}[A,[A,[A,V]]]e^{-B}-N\chi _{0} \\
&=&H_{0}-\int v_{3}(x-y,y-z)\left\vert \phi (z)\right\vert ^{2}\overline{%
\phi }(y)\phi (x)a_{x}^{\ast }a_{y}dxdydz \\
&&-\frac{1}{2}\int v_{3}(x-y,y-z)\left\vert \phi (y)\right\vert
^{2}\left\vert \phi (z)\right\vert ^{2}a_{x}^{\ast }a_{x}dxdydz \\
&&-\int d(t,x,y)a_{y}^{\ast }a_{x}dxdy-\frac{1}{6}N^{-2}e^{B}Ve^{-B}-\frac{1%
}{6}N^{-3/2}e^{B}[A,V]e^{-B} \\
&&-\frac{1}{12}N^{-1}e^{B}[A,[A,V]]e^{-B}-\frac{1}{36}N^{-\frac{1}{2}%
}e^{B}[A,[A,[A,V]]]e^{-B}-\frac{1}{2}\int d(t,x,x)dx-N\chi _{0} \\
&=&\tilde{L}-\chi _{1}-N\chi _{0}
\end{eqnarray*}%
if we write 
\begin{eqnarray*}
\chi _{0}(t) &=&-\frac{1}{3}\int v_{3}(x-y,x-z)\left\vert \phi
(x)\right\vert ^{2}\left\vert \phi (y)\right\vert ^{2}\left\vert \phi
(z)\right\vert ^{2}dxdydz \\
\chi _{1}(t) &=&\frac{1}{2}\int d(t,x,x)dx.
\end{eqnarray*}
\end{proposition}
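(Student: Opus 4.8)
The plan is to assemble the Proposition from the metaplectic-representation computation already carried out: it is essentially a bookkeeping statement, so the work is to collect the pieces and normal-order. First I would record the block form of $e^{K}$. Since $K=\left(\begin{smallmatrix}0 & k\\ \overline{k} & 0\end{smallmatrix}\right)$ has even powers $\left(\begin{smallmatrix}(k\overline{k})^{n} & 0\\ 0 & (\overline{k}k)^{n}\end{smallmatrix}\right)$ and odd powers $\left(\begin{smallmatrix}0 & (k\overline{k})^{n}k\\ (\overline{k}k)^{n}\overline{k} & 0\end{smallmatrix}\right)$, summing the series gives $e^{K}=\left(\begin{smallmatrix}\cosh(k) & \sinh(k)\\ \overline{\sinh(k)} & \overline{\cosh(k)}\end{smallmatrix}\right)$ with $\cosh(k)=I+p$ and $\sinh(k)=u$ as in hypothesis (1). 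Feeding this, together with $G=\left(\begin{smallmatrix}g & 0\\ 0 & -g^{T}\end{smallmatrix}\right)$ and $M=\left(\begin{smallmatrix}0 & m\\ -\overline{m} & 0\end{smallmatrix}\right)$, into $\mathcal{M}_{1}+\mathcal{M}_{2}+\mathcal{M}_{3}=\left(\tfrac{1}{i}\partial_{t}e^{K}\right)e^{-K}+[e^{K},G]e^{-K}+e^{K}Me^{-K}$ turns this $2\times 2$ operator-matrix into explicit compositions of $u,p,g,m$ and $\partial_{t}u,\partial_{t}p$.

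Next I would read off the off-diagonal entries. Under $\mathcal{I}$ the $(1,2)$ block of a matrix in $sp$ becomes the coefficient of $\tfrac12 a_{x}a_{y}$ and the $(2,1)$ block the coefficient of $-\tfrac12 a_{x}^{\ast}a_{y}^{\ast}$; since $H_{G}$ contributes no such terms, the coefficients of $a_{x}a_{y}$ and $a_{x}^{\ast}a_{y}^{\ast}$ in $L_{Q}$ are, up to sign and conjugation, the $(1,2)$ entry just computed, which is formula \ref{Formula:CoeffOfOnlyCreation}. Setting it to zero is precisely equation \ref{Equation:EquationOfk} after post-multiplying by $(I+p)=\cosh(k)$; hence when $k$ solves \ref{Equation:EquationOfk} these two coefficients vanish, which proves the first assertion, because $\tfrac{1}{4!6}[A,[A,[A,[A,V]]]]$ is the only source of $a_{x}a_{y}$ and $a_{x}^{\ast}a_{y}^{\ast}$ in $L_{Q}$ beyond $\mathcal{I}(G)+\mathcal{I}(M)$.

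With the off-diagonal entries killed, $\mathcal{M}_{1}+\mathcal{M}_{2}+\mathcal{M}_{3}$ is diagonal, and since it lies in $sp$ its $(2,2)$ block equals minus the transpose of its $(1,1)$ block; calling the $(1,1)$ block $d(t,x,y)$ and expanding it gives formula \ref{Formula:DefinitionOFd}. Applying $\mathcal{I}$ to this diagonal matrix and normal-ordering via the canonical commutation relations \ref{Formula:CanonicalCommutation} yields $-\int d(t,x,y)a_{y}^{\ast}a_{x}\,dxdy-\tfrac12\int d(t,x,x)\,dx$; adding $H_{G}=H_{0}-\int v_{3}|\phi_{3}|^{2}\overline{\phi}_{2}\phi_{1}a_{x}^{\ast}a_{y}\,dxdydz-\tfrac12\int v_{3}|\phi_{2}|^{2}|\phi_{3}|^{2}a_{x}^{\ast}a_{x}\,dxdydz$ produces the displayed formula for $L_{Q}$. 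Finally, substituting this into $L=L_{Q}-\tfrac16 N^{-2}e^{B}Ve^{-B}-\tfrac16 N^{-3/2}e^{B}[A,V]e^{-B}-\tfrac1{12}N^{-1}e^{B}[A,[A,V]]e^{-B}-\tfrac1{36}N^{-1/2}e^{B}[A,[A,[A,V]]]e^{-B}-N\chi_{0}$ and peeling off the scalar pieces $-\tfrac12\int d(t,x,x)\,dx=-\chi_{1}$ and $-N\chi_{0}$, with everything else named $\tilde{L}$, gives $L=\tilde{L}-\chi_{1}-N\chi_{0}$.

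The delicate point throughout is that $H_{0}-\tfrac{1}{4!6}[A,[A,[A,[A,V]]]]$ is not literally $\mathcal{I}(G)+\mathcal{I}(M)$ — the diagonal part of $\mathcal{I}(G)$ differs from $H_{0}$ by a coincidence-point term, as the $-\triangle\delta_{1-2}$ example above shows — so one must work with the operator $H_{G}$ that has the same commutator with $B$ and verify $[H_{G},B]=[\mathcal{I}(G),B]$ before invoking property (i) of $\mathcal{I}$. Keeping track of the numerical factors and of the conjugates $\overline{\cosh(k)},\overline{\sinh(k)}$ as they move through the matrix products is where sign errors are easiest to make; this is exactly the step that mirrors pages 287--288 of \cite{GMM1}, to which one can appeal for the structural identities.
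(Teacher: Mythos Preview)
Your proposal is correct and follows essentially the same route as the paper: both compute $\mathcal{M}_{1}+\mathcal{M}_{2}+\mathcal{M}_{3}$ via the block form of $e^{K}$, identify the vanishing of the off-diagonal entries with equation \ref{Equation:EquationOfk}, read off the $(1,1)$ block as $d(t,x,y)$, normal-order, and handle the $H_{G}$ versus $\mathcal{I}(G)$ discrepancy exactly as you describe. One small imprecision: the passage from formula \ref{Formula:CoeffOfOnlyCreation} to equation \ref{Equation:EquationOfk} is not quite ``post-multiplying by $(I+p)$'' --- it is post-multiplying by $\overline{\cosh(k)}^{-1}$ and then using $\sinh(k)\overline{\cosh(k)}^{-1}=(I+p)^{-1}u$ to rewrite the right-hand side --- but the paper itself elides this step, so you are in good company.
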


\begin{remark}
Note that $\left( \tilde{L}\right) ^{\ast }=\tilde{L}$ and $\tilde{L}$
commutes with functions of time. This is needed in the proof of Theorem \ref%
{THM:2ndOrderCorrection} which is below.
\end{remark}

\subsubsection{The proof of Theorem \protect\ref{THM:2ndOrderCorrection}}

Applying the above proposition, we can give the proof of Theorem \ref%
{THM:2ndOrderCorrection} at this point.%
\begin{eqnarray*}
&&\Vert e^{-\sqrt{N}A(t)}e^{-B(t)}e^{-i\int_{0}^{t}(N\chi _{0}(s)+\chi
_{1}(s))ds}\Omega -e^{itH_{N}}e^{-\sqrt{N}A(0)}e^{-B(0)}\Omega \,\Vert _{{%
\mathcal{F}}} \\
&=&\Vert \Omega -e^{i\int_{0}^{t}(N\chi _{0}(s)+\chi _{1}(s))ds}e^{B(t)}e^{%
\sqrt{N}A(t)}e^{itH_{N}}e^{-\sqrt{N}A(0)}e^{-B(0)}\Omega \,\Vert _{{\mathcal{%
F}}} \\
&=&\Vert \Omega -e^{i\int_{0}^{t}(N\chi _{0}(s)+\chi _{1}(s))ds}\Psi \,\Vert
_{{\mathcal{F}}}
\end{eqnarray*}%
since $e^{-\sqrt{N}A(t)}$ and $e^{-B(t)}$ are unitary.

But%
\begin{eqnarray*}
&&\frac{\partial }{\partial t}\Vert \Omega -e^{i\int_{0}^{t}(N\chi
_{0}(s)+\chi _{1}(s))ds}\Psi \,\Vert _{{\mathcal{F}}}^{2} \\
&=&2\func{Re}\left( \frac{\partial }{\partial t}\left(
e^{i\int_{0}^{t}(N\chi _{0}(s)+\chi _{1}(s))ds}\Psi -\Omega \right)
,e^{i\int_{0}^{t}(N\chi _{0}(s)+\chi _{1}(s))ds}\Psi -\Omega \right) \\
&=&2\func{Re}\left( \left( \frac{\partial }{\partial t}-i\tilde{L}\right)
\left( e^{i\int_{0}^{t}(N\chi _{0}(s)+\chi _{1}(s))ds}\Psi -\Omega \right)
,e^{i\int_{0}^{t}(N\chi _{0}(s)+\chi _{1}(s))ds}\Psi -\Omega \right) \\
&=&2\func{Re}\left( i\tilde{L}\Omega ,e^{i\int_{0}^{t}(N\chi _{0}(s)+\chi
_{1}(s))ds}\Psi -\Omega \right) \\
&\leqslant &2\Vert \tilde{L}\Omega \,\Vert _{{\mathcal{F}}}\Vert
e^{i\int_{0}^{t}(N\chi _{0}(s)+\chi _{1}(s))ds}\Psi -\Omega \,\Vert _{{%
\mathcal{F}}}
\end{eqnarray*}%
due to the fact that%
\begin{equation*}
\left( \frac{1}{i}\frac{\partial }{\partial t}-\tilde{L}\right)
(e^{i\int_{0}^{t}(N\chi _{0}(s)+\chi _{1}(s))ds}\Psi -\Omega )=\tilde{L}%
\Omega .
\end{equation*}

Notice that 
\begin{eqnarray*}
\widetilde{L}\Omega &=&-\bigg(\frac{1}{6N^{2}}e^{B}Ve^{-B}+\frac{1}{6}%
N^{-3/2}e^{B}[A,V]e^{-B}+ \\
&&\frac{1}{12}N^{-1}e^{B}[A,[A,V]]e^{-B}+\frac{1}{36}N^{-\frac{1}{2}%
}e^{B}[A,[A,[A,V]]]e^{-B}\bigg)\Omega ,
\end{eqnarray*}%
we reach 
\begin{eqnarray*}
&&\frac{\partial }{\partial t}\Vert \Omega -e^{i\int_{0}^{t}(N\chi
_{0}(s)+\chi _{1}(s))ds}\Psi \,\Vert _{{\mathcal{F}}} \\
&\leqslant &\frac{\Vert e^{B}Ve^{-B}\Omega \Vert _{{\mathcal{F}}}}{6N^{2}}+%
\frac{\Vert e^{B}[A,V]e^{-B}\Omega \Vert _{{\mathcal{F}}}}{6N^{\frac{3}{2}}}
\\
&&+\frac{\Vert e^{B}[A,[A,V]]e^{-B}\Omega \Vert _{{\mathcal{F}}}}{12N}+\frac{%
\Vert e^{B}[A,[A,[A,V]]]e^{-B}\Omega \Vert _{{\mathcal{F}}}}{36N^{\frac{1}{2}%
}}
\end{eqnarray*}%
Whence we complete the proof of Theorem \ref{THM:2ndOrderCorrection} because 
$e^{-\sqrt{N}A(t)}e^{-B(t)}e^{-i\int_{0}^{t}(N\chi _{0}(s)+\chi
_{1}(s))ds}\Omega $ and $e^{itH_{N}}e^{-\sqrt{N}A(0)}e^{-B(0)}\Omega $ share
the same initial data $e^{-\sqrt{N}A(0)}e^{-B(0)}\Omega $.

\section{Solving Equation \protect\ref{Equation:EquationOfk} / Proof of
Theorem \protect\ref{THM:MainTheoremErrorEstimate} $\left( \text{Part I}%
\right) $}

\label{section:EstimatesFor u}

Starting from this section, we begin the proof of Theorem \ref%
{THM:MainTheoremErrorEstimate}. In other words, we are assuming that%
\begin{equation*}
v_{3}(x-y,x-z)=v(x-y,x-z)
\end{equation*}%
where $v$ is defined in formula \ref{Formula:DefinitonOfv-epsilon}.

We first study equation \ref{Equation:EquationOfk}. We prove an apriori
estimate for $u=\sinh (k)$ and use it in a Duhamel iteration argument to
show global existence. Finally we verify that $\int d(t,x,x)dx$ is locally
integrable in time.

Written in the notations in Remark \ref{Remark:Simplicity}, equation \ref%
{Equation:EquationOfk} reads%
\begin{equation*}
\left( iu_{t}+ug^{T}+gu-(I+p)m\right) =\left( ip_{t}+[g,p]+u\overline{m}%
\right) \left( I+p\right) ^{-1}u\text{,}
\end{equation*}%
where%
\begin{align*}
u(t,x,y)& =\sinh (k)=k+\frac{1}{3!}k\overline{k}k+\ldots ~, \\
\cosh (k)(t,x,y)& =I+p(t,x,y)=\delta _{1-2}+\frac{1}{2!}k\overline{k}+\ldots
~, \\
g(t,x,y)& =-\triangle \delta _{1-2}+\left( \int v_{1-2,1-3}\left\vert \phi
_{3}\right\vert ^{2}dz\right) \overline{\phi }_{1}\phi _{2} \\
& +\frac{1}{2}\left( \int v_{1-2,1-3}\left\vert \phi _{2}\right\vert
^{2}\left\vert \phi _{3}\right\vert ^{2}dydz\right) \delta _{1-2} \\
m(t,x,y)& =-\left( \int v_{1-2,1-3}\left\vert \phi _{3}\right\vert
^{2}dz\right) \overline{\phi }_{1}\overline{\phi }_{2}.
\end{align*}%
As mentioned in Theorem \ref{THM:2ndOrderCorrection}, we write composition
of kernels as products in the above e.g.%
\begin{equation*}
k\overline{k}(x,y)=\int k(x,z)\overline{k}(z,y)dz.
\end{equation*}%
Observe that $g(t,x,y)=\overline{g(t,y,x)}$, i.e. $g^{\ast }=g$; and $%
m(t,x,y)=m(t,y,x)$, i.e. $m^{T}=m$. Moreover, $u^{T}=u$, $p^{\ast }=p$
because $k\in L_{s}^{2}(dxdy).$

Via $e^{K}e^{-K}=I$ with $K$ defined in formula \ref{Formula:DefinitionOfK},
we obtain the trigonometric identity%
\begin{eqnarray*}
u\overline{u} &=&\cosh (k)\overline{\cosh (k)}-I \\
&=&2p+p^{2}
\end{eqnarray*}%
which is a relation between $u$ and $p.$

\subsection{An Apriori Estimate of $u$}

\begin{theorem}
\label{THM:CorollaryOfPrioriEstimateOfu}Let $v_{3}(x-y,x-z)=v(x-y,x-z).$ If $%
u=\sinh (k)$ is a solution of equation \ref{Equation:EquationOfk} on some
time interval $[0,T]$, then there exists a $C\geqslant 0$, independent of $%
T, $ s.t.%
\begin{equation*}
\Vert u(T)\Vert _{L_{(x,y)}^{2}}\leqslant C\left( 1+\Vert u(0)\Vert
_{L_{(x,y)}^{2}}\right) .
\end{equation*}
\end{theorem}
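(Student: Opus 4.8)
The plan is to derive a differential inequality for $\tfrac{d}{dt}\|u(t)\|_{L^2_{x,y}}^2$ and then integrate it in time, exploiting the smallness of $\phi$ in suitable $L^p$ norms that comes from the dispersive decay of the quintic Hartree equation. First I would take the equation \ref{Equation:EquationOfk} in the form $iu_t + ug^T + gu - (I+p)m = \left(ip_t + [g,p] + u\bar m\right)(I+p)^{-1}u$ and split $g = g_0 + g_1$, where $g_0 = -\triangle\delta_{1-2}$ is the free part and $g_1$ collects the potential terms (which are built from $\phi$ and hence decay). Pairing $iu_t$ with $u$ in $L^2_{x,y}$, the contribution of $g_0$ through $ug_0^T + g_0 u$ is skew-Hermitian and drops out of $\func{Re}\langle \cdot, u\rangle$, so the time derivative of $\|u\|_{L^2}^2$ is controlled by terms each carrying at least one factor of $g_1$, $m$, or $p_t$ — i.e.\ at least one factor of $\phi$ (recall $m$ and $g_1$ are quadratic or quartic in $\phi$), or a factor of $p$ which is itself $O(u^2)$.

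Next I would estimate each resulting term. The dangerous terms are those linear in $u$ with a coefficient involving $\phi$, since those could in principle produce exponential growth via Gr\"onwall; the terms quadratic or higher in $u$ (coming from $p$, from $(I+p)^{-1} = I - p + \dots$, and from $u\bar m u$, etc.) are better because near the a priori regime they are lower order, but one still needs to absorb them. The key point is that every term is bounded by $\|u\|_{L^2}^2$ (or $\|u\|_{L^2}$) times a coefficient controlled by a norm of $\phi$ of the schematic form $\|\phi\|_{L^6}^a \|\nabla\phi\|_{L^2}^b \cdot \|v_0\|$-type constants, using Hölder and Young's inequality for the convolution-type kernels in $g_1$ and $m$, together with the relation $u\bar u = 2p + p^2$ to control $p$ by $u$. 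By conservation of mass and energy for \ref{Equation:DefocusingHartreeEquation} the factors $\|\nabla\phi\|_{L^2}$ and $\|\phi\|_{L^2}$ are bounded by $C_1$; the factors of $\|\phi\|_{L^6}$ (or $\|\phi\|_{L^p}$ for suitable $p$) are the ones that must be made integrable in time. This is exactly where Theorem \ref{THM:TheLongTimeBehaviorOfHartree} enters: $\|\phi\|_{L^6} \leqslant C/t$ for $t \geqslant 1$, and $\|\phi\|_{L^6} \leqslant C$ (by Sobolev and energy conservation) for $t \leqslant 1$, so any coefficient of the form $\|\phi\|_{L^6}^2$ is integrable on $[0,\infty)$, while a single power $\|\phi\|_{L^6}$ is only in $L^2_t$ at infinity, so one must be careful that the linear-in-$u$ terms carry \emph{two} powers of $\phi$ (which they do, since the relevant pieces of $g_1$ and $m$ contain $\int v(x-y,x-z)|\phi(z)|^2 dz$ times $\bar\phi_1\phi_2$ — i.e.\ degree four in $\phi$ — so after Young/Hölder one extracts at least $\|\phi\|_{L^6}^2$ or $\|\phi\|_{L^3}^2 \|\phi\|_\infty$-type quantities, and $\|\phi\|_\infty$ is bounded by energy methods).

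Assembling these bounds gives an inequality of the form
\begin{equation*}
\frac{d}{dt}\|u(t)\|_{L^2_{x,y}}^2 \leqslant a(t)\,\|u(t)\|_{L^2_{x,y}}^2 + a(t)\,\|u(t)\|_{L^2_{x,y}}^3 + \dots,
\end{equation*}
with $a \in L^1_t(\mathbb{R}^+)$ and $\int_0^\infty a(t)\,dt \leqslant C(v,C_1,C_2)$. One then runs a Gr\"onwall / continuity argument: as long as $\|u(t)\|_{L^2}$ stays bounded (say $\leqslant 2(1+\|u(0)\|_{L^2})e^{\int a}$), the higher-order terms are dominated by the quadratic one up to a constant, and integrating the linear differential inequality yields $\|u(T)\|_{L^2}^2 \leqslant (1+\|u(0)\|_{L^2}^2)\exp(C\int_0^T a)$, which is bounded by $C(1 + \|u(0)\|_{L^2})^2$ uniformly in $T$; a standard bootstrap closes the continuity argument. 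The main obstacle I anticipate is bookkeeping: one must carefully expand $(I+p)^{-1}$, $\sinh(k)$, $\cosh(k)$ and verify that \emph{every} term on the right-hand side, after the skew-Hermitian free part is discarded, genuinely carries either two powers of $\phi$ or at least two extra powers of $u$, so that no term contributes a non-integrable (e.g.\ $O(1/t)^{1}$ only) coefficient to the linear-in-$\|u\|^2$ part; getting the correct Hölder/Young exponents on the three-fold convolution structure of $v(x-y,x-z)$ in \ref{Formula:DefinitonOfv-epsilon} so that the resulting $\phi$-norms are precisely those controlled by mass, energy, and Theorem \ref{THM:TheLongTimeBehaviorOfHartree} is the delicate point.
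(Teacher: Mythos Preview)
Your overall strategy---derive a differential inequality for $\|u\|_{L^2}^2$, feed in the dispersive decay of $\phi$ from Theorem~\ref{THM:TheLongTimeBehaviorOfHartree}, then Gr\"onwall---is exactly what the paper does. But there is a real gap in the plan as written: you list $p_t$ among the terms to be estimated, saying it carries ``a factor of $p$ which is itself $O(u^2)$.'' That conflates $p$ with $p_t$. Since $p=\sqrt{I+u\bar u}-I$, the time derivative $p_t$ involves $u_t$, and through the equation $u_t$ carries $\Delta u$, which is not controlled by $\|u\|_{L^2}$. A direct estimate of the $p_t$ term therefore cannot close.

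The paper handles this by an algebraic identity (Lemma~\ref{Lemma:KeyLemmaForPrioriEstimateOfU}): multiplying the equation by $\bar u$, subtracting the adjoint equation multiplied by $u$, and using $u\bar u=2p+p^2$, one obtains
\[
i(u\bar u)_t+[g,u\bar u]=m\bar u(I+p)-(I+p)u\bar m,
\]
with no $p_t$ and no $g$ on the right. Taking the trace kills the commutator and gives the clean inequality
\[
\frac{d}{dt}\|u\|_{L^2}^2\leqslant 2\|m\|_{L^2}\|u\|_{L^2}+2\|m\|_{L^2}\|u\|_{L^2}^2,
\]
after which Gr\"onwall and the bound $\|m\|_{L^1_t L^2_{x,y}}<\infty$ (Lemma~\ref{Lemma:EstimateForM}, which uses $\|m\|_{L^2}\lesssim\|\phi\|_{L^6}^3\lesssim t^{-3}$) finish the proof. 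Note two simplifications relative to your sketch: the \emph{entire} $g$ (not just the free part $g_0$) drops out via the commutator structure, so your split $g=g_0+g_1$ is unnecessary; and because the resulting inequality is already of Gr\"onwall type in $\|u\|$ with an $L^1_t$ coefficient $\|m\|_{L^2}$, no bootstrap or continuity argument for higher-order terms is needed.
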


The major observation is the following lemma which is also the cornerstone
to showing Theorem \ref{THM:ExistenceOfU}.

\begin{lemma}
\label{Lemma:KeyLemmaForPrioriEstimateOfU}\cite{GMM2}From equation \ref%
{Equation:EquationOfk}, we deduce%
\begin{equation}
(ip_{t}+[g,p]+u\overline{m})(I+p)^{-1}=-(I+p)^{-1}\left( ip_{t}+[g,p]-m\bar{u%
}\right)  \label{Formula:Eq1ForKeyLemmaOfPrioriEstimateOfU}
\end{equation}%
and consequently%
\begin{equation}
i(u\overline{u})_{t}+[g,u\overline{u}]=m\bar{u}(I+p)-(I+p)u\overline{m}.
\label{Formula:Eq2ForKeyLemmaOfPrioriEstimateOfU}
\end{equation}
\end{lemma}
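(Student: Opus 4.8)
The plan is to derive the two stated identities by purely algebraic manipulation of equation \ref{Equation:EquationOfk}, treating all kernels as operators on $L^2(\mathbb{R}^3)$ and using the symmetry relations $g^\ast = g$, $m^T = m$, $u^T = u$, $p^\ast = p$ together with the trigonometric identity $u\overline{u} = 2p + p^2 = (I+p)^2 - I$. The key structural observation is that equation \ref{Equation:EquationOfk} can be rewritten as
\begin{equation*}
iu_t + ug^T + gu - (I+p)m = \left(ip_t + [g,p] + u\overline{m}\right)(I+p)^{-1}u,
\end{equation*}
so the left-hand side, call it $F$, equals $\left(ip_t + [g,p] + u\overline{m}\right)(I+p)^{-1}u$. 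The first claim \ref{Formula:Eq1ForKeyLemmaOfPrioriEstimateOfU} asserts that the operator $\left(ip_t + [g,p] + u\overline{m}\right)(I+p)^{-1}$ is anti-symmetric-conjugate to $\left(ip_t + [g,p] - m\overline{u}\right)$ conjugated by $(I+p)^{-1}$ on the other side; I would verify this by taking the adjoint (complex-conjugate-transpose) of one side and comparing, exploiting that $p^\ast = p$ implies $(I+p)^{-1}$ is self-adjoint, that $(ip_t)^\ast = ip_t$ since $p$ is self-adjoint for all $t$ (so $p_t$ is too, giving $(ip_t)^\ast = -i p_t^\ast \cdot$... — care needed: $(ip_t)^\ast = -ip_t$, so one must track the sign), and that $[g,p]^\ast = [p^\ast,g^\ast] = [p,g] = -[g,p]$. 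The term $u\overline{m}$ has adjoint $m^T \overline{u}^T \cdot$; using $m^T = m$ and $u^T = u$ one gets $(u\overline{m})^\ast = \overline{m}^{\,T} u^T{}^\ast$... so I would carefully write $(u\overline m)^\ast = m^* u^* $ with $m^* = \overline{m}^T = \overline{m^T} = \overline{m}$ — again sign/conjugate bookkeeping is the whole content here.

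Once \ref{Formula:Eq1ForKeyLemmaOfPrioriEstimateOfU} is established, I would multiply equation \ref{Equation:EquationOfk} on the right by... actually the cleaner route is: from $F = \left(ip_t + [g,p] + u\overline{m}\right)(I+p)^{-1}u$ and the identity \ref{Formula:Eq1ForKeyLemmaOfPrioriEstimateOfU}, substitute to get $F = -(I+p)^{-1}\left(ip_t + [g,p] - m\overline{u}\right)u$, hence $(I+p)F = -\left(ip_t + [g,p] - m\overline{u}\right)u$. Then I would also use the "transposed" version of equation \ref{Equation:EquationOfk}: since $u^T = u$ and $g^T$ appears, taking the transpose of \ref{Equation:EquationOfk} (not adjoint) gives a companion relation, or equivalently write out $iu_t + ug^T + gu - (I+p)m$ and its mirror. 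The goal is to combine $(I+p)\cdot(\text{LHS of } \ref{Equation:EquationOfk})$ with $(\text{LHS})^T \cdot (I+p)$ and recognize $i(u\overline{u})_t + [g, u\overline{u}]$ on the left via the product rule $i(u\overline u)_t = (iu_t)\overline u + u\,\overline{(iu_t)}^{\text{?}}$ — more precisely $i(u\overline u)_t = i u_t \overline u + i u \overline{u}_t = i u_t \overline u - u \overline{i u_t}$, and $[g, u\overline u] = gu\overline u - u\overline u g = (gu + ug^T)\overline u - u\overline{(gu + ug^T)} $ using $g^\ast = g$ so $\overline{g} = g^T$ as kernels. Collecting the $\phi$-interaction terms, the $m$-terms assemble into $m\overline{u}(I+p) - (I+p)u\overline{m}$, which is exactly \ref{Formula:Eq2ForKeyLemmaOfPrioriEstimateOfU}.

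The main obstacle I anticipate is not conceptual but the relentless bookkeeping of adjoints versus transposes versus complex conjugates of operator kernels, and in particular getting the signs right in $(i\,\cdot)^\ast = -i(\cdot)^\ast$. The functions $g$, $m$ are not self-adjoint and self-transpose in the same way ($g^\ast = g$ but $g^T$ is something else; $m^T = m$ but $m^\ast = \overline{m}$), so every rearrangement must be done with these distinctions explicit. A secondary technical point is that $(I+p)^{-1}$ must be shown to exist and be bounded — but since $I + p = \cosh(k)$ with $k \in L^2_s$, we have $I+p \geq I$ as a positive operator (from $u\overline u = (I+p)^2 - I \geq 0$), so invertibility is immediate and needs only a one-line remark. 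I would present the derivation of \ref{Formula:Eq1ForKeyLemmaOfPrioriEstimateOfU} first as a short displayed computation isolating the adjoint symmetry, then deduce \ref{Formula:Eq2ForKeyLemmaOfPrioriEstimateOfU} by the multiply-and-add step, citing \cite{GMM2} for the parallel argument in the 2-body case and noting that only the explicit forms of $g$ and $m$ differ.
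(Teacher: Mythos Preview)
Your plan for \eqref{Formula:Eq1ForKeyLemmaOfPrioriEstimateOfU} has a genuine gap. Using $p^\ast=p$, $[g,p]^\ast=-[g,p]$, $(u\overline m)^\ast=m\overline u$ and the self-adjointness of $(I+p)^{-1}$, taking the adjoint of the left-hand side does produce the right-hand side --- but that only establishes $(\text{LHS})^\ast=\text{RHS}$, which is the tautology that \eqref{Formula:Eq1ForKeyLemmaOfPrioriEstimateOfU} is equivalent to the self-adjointness of $(ip_t+[g,p]+u\overline m)(I+p)^{-1}$. That self-adjointness is \emph{not} automatic from the symmetry hypotheses on $g,m,u,p$; it is a consequence of equation \eqref{Equation:EquationOfk}, which your argument for \eqref{Formula:Eq1ForKeyLemmaOfPrioriEstimateOfU} never invokes. (Concretely: with $g=m=0$ the identity would force $(u\overline u)_t=0$, which is false for generic symmetric time-dependent $u$.)

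The paper's route is exactly the multiply--adjoint--subtract manoeuvre you sketched later, but used to obtain \eqref{Formula:Eq1ForKeyLemmaOfPrioriEstimateOfU} rather than \eqref{Formula:Eq2ForKeyLemmaOfPrioriEstimateOfU}: multiply \eqref{Equation:EquationOfk} on the right by $\overline u$; take the \emph{adjoint} (not transpose) of \eqref{Equation:EquationOfk} and multiply on the left by $u$; subtract. The left sides combine via the product rule to $i(u\overline u)_t+[g,u\overline u]-(I+p)m\overline u+u\overline m(I+p)$. Now substitute $u\overline u=2p+p^2$ and use the elementary identities $(I+p)^{-1}u\overline u=u\overline u(I+p)^{-1}=(I+p)^{-1}p+p$ and $I-(I+p)^{-1}p=(I+p)^{-1}$ to collapse everything to \eqref{Formula:Eq1ForKeyLemmaOfPrioriEstimateOfU}. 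Identity \eqref{Formula:Eq2ForKeyLemmaOfPrioriEstimateOfU} then follows in one line by multiplying \eqref{Formula:Eq1ForKeyLemmaOfPrioriEstimateOfU} on the left and right by $(I+p)$ and rewriting $(ip_t+[g,p])(I+p)+(I+p)(ip_t+[g,p])=i(u\overline u)_t+[g,u\overline u]$. So the logical order is equation $\Rightarrow$ \eqref{Formula:Eq1ForKeyLemmaOfPrioriEstimateOfU} $\Rightarrow$ \eqref{Formula:Eq2ForKeyLemmaOfPrioriEstimateOfU}, and your ``combine $(I+p)\cdot(\text{LHS})$ with its mirror'' idea belongs at the first step, not the second.
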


\begin{proof}
Multiply equation \ref{Equation:EquationOfk} on the right by $\overline{u}$,
it reads 
\begin{equation}
\left( iu_{t}+ug^{T}+gu\right) \overline{u}-(I+p)m\overline{u}%
=(ip_{t}+[g,p]+u\overline{m})(I+p)^{-1}u\overline{u}.
\label{equation:TimesUbarInPrioriEstimateOfU}
\end{equation}

Take the adjoint in the operator kernel sense of equation \ref%
{Equation:EquationOfk}, multiply on the left by $u,$ i.e. 
\begin{equation}
u\left( -i\bar{u}_{t}+g^{T}\bar{u}+\bar{u}g\right) -u\overline{m}(I+p)=u%
\overline{u}(I+p)^{-1}\left( -ip_{t}-[g,p]+m\overline{u}\right) .
\label{equation:TimesUInPrioriEstimateOfU}
\end{equation}%
Subtracting equations \ref{equation:TimesUbarInPrioriEstimateOfU} and \ref%
{equation:TimesUInPrioriEstimateOfU}, we have%
\begin{eqnarray}
&&i(u\overline{u})_{t}+[g,u\overline{u}]-(I+p)m\overline{u}+u\overline{m}%
(I+p)  \label{equation:MiddleOfULemma} \\
&=&(ip_{t}+[g,p]+u\overline{m})(I+p)^{-1}u\overline{u}-u\overline{u}%
(I+p)^{-1}\left( -ip_{t}-[g,p]+m\overline{u}\right)  \notag
\end{eqnarray}%
With $u\overline{u}=\cosh (k)\overline{\cosh (k)}-I$ and $u\overline{u}%
=2p+p^{2},$ we compute%
\begin{equation*}
(I+p)^{-1}u\overline{u}-(I+p)=(I+p)^{-1}=u\overline{u}(I+p)^{-1}-(I+p)
\end{equation*}%
and 
\begin{equation*}
(I+p)^{-1}u\overline{u}=(I+p)^{-1}p+p=u\overline{u}(I+p)^{-1}
\end{equation*}%
which transform equation \ref{equation:MiddleOfULemma} to%
\begin{eqnarray*}
&&i(2p+p^{2})_{t}+[g,2p+p^{2}]+u\overline{m}(I+p)^{-1}-(I+p)^{-1}m\overline{u%
} \\
&=&(ip_{t}+[g,p])((I+p)^{-1}p+p)-((I+p)^{-1}p+p)\left( -ip_{t}-[g,p]\right)
\end{eqnarray*}%
i.e.%
\begin{eqnarray*}
&&2(ip_{t}+[g,p])+u\overline{m}(I+p)^{-1}-(I+p)^{-1}m\overline{u} \\
&=&(ip_{t}+[g,p])(I+p)^{-1}p+(I+p)^{-1}p\left( ip_{t}+[g,p]\right)
\end{eqnarray*}%
which is equation \ref{Formula:Eq1ForKeyLemmaOfPrioriEstimateOfU} due to $%
I-(I+p)^{-1}p=(I+p)^{-1}.$

Multiplying equation \ref{Formula:Eq1ForKeyLemmaOfPrioriEstimateOfU} on the
right and left by $(I+p)$ produces%
\begin{equation*}
(I+p)(ip_{t}+[g,p]+u\overline{m})=-\left( ip_{t}+[g,p]-m\bar{u}\right) (I+p)
\end{equation*}%
i.e. equation \ref{Formula:Eq2ForKeyLemmaOfPrioriEstimateOfU}: 
\begin{equation*}
i(u\overline{u})_{t}+[g,u\overline{u}]=m\bar{u}(I+p)-(I+p)u\overline{m}
\end{equation*}%
because $u\overline{u}=2p+p^{2}$.
\end{proof}

Taking the trace in formula \ref{Formula:Eq2ForKeyLemmaOfPrioriEstimateOfU}
yields 
\begin{equation*}
{\frac{d}{dt}}\Vert u\Vert _{L^{2}}^{2}=Tr\left[ (1/i)\big(m\overline{u}%
(1+p)-(1+p)u\overline{m}\big)\right] \ .
\end{equation*}%
Note that 
\begin{eqnarray*}
\Vert u\Vert _{L^{2}}^{2} &=&Tr\left( u\overline{u}\right) \\
&=&2Tr\left( p\right) +Tr\left( p^{2}\right) \\
&\geqslant &\Vert p\Vert _{L^{2}}^{2}
\end{eqnarray*}%
because $p(t,x,y)=\frac{1}{2!}k\overline{k}+\ldots $ must have a nonnegative
trace. So%
\begin{eqnarray*}
{\frac{d}{dt}}\Vert u\Vert _{L^{2}}^{2} &\leqslant &2\left( \Vert m\Vert
_{L^{2}}\Vert u\Vert _{L^{2}}+\Vert m\Vert _{L^{2}}\Vert u\Vert
_{L^{2}}\Vert p\Vert _{L^{2}}\right) \\
&\leqslant &2\left( \Vert m\Vert _{L^{2}}\Vert u\Vert _{L^{2}}+\Vert m\Vert
_{L^{2}}\Vert u\Vert _{L^{2}}^{2}\right) .
\end{eqnarray*}%
By a Gronwall's inequality, we deduce 
\begin{equation*}
\Vert u(T)\Vert _{L_{(x,y)}^{2}}\leqslant \left( \int_{0}^{T}\Vert m\Vert
_{L_{(x,y)}^{2}}dt+\Vert u(0)\Vert _{L_{(x,y)}^{2}}\right) \exp \left(
\int\limits_{0}^{T}\Vert m\Vert _{L_{(x,y)}^{2}}dt\right) .
\end{equation*}%
The following lemma gives us Theorem \ref{THM:CorollaryOfPrioriEstimateOfu}.

\begin{lemma}
\label{Lemma:EstimateForM}If $v_{3}(x-y,x-z)=v(x-y,x-z),$ then 
\begin{equation*}
\Vert m\Vert _{L_{t}^{1}(\mathbb{R}^{+})L_{(x,y)}^{2}}\leqslant C<\infty
\end{equation*}
\end{lemma}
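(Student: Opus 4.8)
The goal is to bound $\|m\|_{L^1_t(\mathbb{R}^+)L^2_{(x,y)}}$ where, after the substitution $v_3=v$ with $v(x-y,x-z)=v_0(x-y)v_0(x-z)+v_0(x-y)v_0(y-z)+v_0(x-z)v_0(y-z)$, we have
\[
m(t,x,y)=-\Big(\int v(x-y,x-z)|\phi(z)|^2\,dz\Big)\overline{\phi}(x)\overline{\phi}(y).
\]

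\textbf{Plan.} First I would expand the inner integral $\int v(x-y,x-z)|\phi(z)|^2dz$ into the three pieces coming from \eqref{Formula:DefinitonOfv-epsilon}. Each piece factors as a product of a function of $x$ alone, a function of $y$ alone, and a convolution, e.g.\ the first term gives $v_0(x-y)\big((v_0*|\phi|^2)(x)\big)$, the third gives $\big(v_0*(v_0(\cdot-y)|\phi|^2)\big)$ type expressions — more precisely $\int v_0(x-z)v_0(y-z)|\phi(z)|^2dz$. So $m$ is a sum of three terms, each a product of a bounded/convolution factor and $\overline\phi(x)\overline\phi(y)$. Next I would take the $L^2_{(x,y)}$ norm. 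For the typical term, $\|m_1(t)\|_{L^2_{xy}}\lesssim \|v_0*|\phi|^2\|_{L^\infty}\,\|v_0(x-y)\overline\phi(x)\overline\phi(y)\|_{L^2_{xy}}$. Young's inequality controls $\|v_0*|\phi|^2\|_{L^\infty}\lesssim \|v_0\|_{L^{3/2}}\|\phi\|_{L^6}^2$ (or $\|v_0\|_{L^3}\|\phi\|_{L^3}^2$, or $\|v_0\|_{L^1}\|\phi\|_{L^\infty}^2$ — whichever fits the assumed decay/smoothness of $v_0$), and $\|v_0(x-y)\overline\phi(x)\overline\phi(y)\|_{L^2_{xy}}^2=\int |v_0(x-y)|^2|\phi(x)|^2|\phi(y)|^2\le \|v_0^2*|\phi|^2\|_{L^\infty}\|\phi\|_{L^2}^2\lesssim \|v_0\|_{L^3}^2\|\phi\|_{L^6}^2$ after Young, using $\|\phi\|_{L^2}=1$. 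Putting these together, each term of $\|m(t)\|_{L^2_{xy}}$ is bounded by $C(v_0)\,\|\phi(t)\|_{L^6}^{a}$ for some power $a\ge 2$ (I expect $a=3$), with the constant depending only on Sobolev/Young norms of $v_0$.

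\textbf{Time integration.} Having reduced to $\|m(t)\|_{L^2_{xy}}\lesssim_v \|\phi(t)\|_{L^6}^3$ (say), I would split $\int_0^\infty = \int_0^1 + \int_1^\infty$. On $[0,1]$ the conserved mass and energy (hypotheses (i),(ii)) plus Sobolev embedding $\dot H^1(\mathbb{R}^3)\hookrightarrow L^6$ give $\|\phi(t)\|_{L^6}\le C\|\nabla\phi(t)\|_{L^2}\le C\sqrt{E_0}$ uniformly — in the defocusing case energy conservation bounds the kinetic part — so $\int_0^1\|\phi\|_{L^6}^3\,dt\le C(C_1)$. On $[1,\infty)$ I invoke Theorem \ref{THM:TheLongTimeBehaviorOfHartree}, which gives $\|\phi(t)\|_{L^6}\le C/t$ for $t\ge 1$; hence $\|\phi(t)\|_{L^6}^3\le C/t^3$, and $\int_1^\infty t^{-3}\,dt<\infty$. (Even a decay rate $\|\phi\|_{L^6}\lesssim t^{-1/2}$ would suffice if only the power $a=3$ is available; with $a=2$ one needs the full $t^{-1}$ decay to get $\int t^{-2}<\infty$.) Combining the two pieces yields $\|m\|_{L^1_t(\mathbb{R}^+)L^2_{xy}}\le C<\infty$ with $C$ depending only on $v$, $C_1$, $C_2$, which is the claim — and this also completes Theorem \ref{THM:CorollaryOfPrioriEstimateOfu} via the Gronwall bound already derived.

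\textbf{Main obstacle.} The routine part is the Young/Sobolev bookkeeping for the three terms of $m$; the essential input is the dispersive decay $\|\phi\|_{L^6}\lesssim 1/t$, i.e.\ Theorem \ref{THM:TheLongTimeBehaviorOfHartree}, whose proof is deferred to Section \ref{section:AppendixHartree} and rests on a Morawetz/virial-type interaction estimate using the finite-variance hypothesis (iii). So the real content here is localizing which norm of $v_0$ to use so that (a) the spatial estimates close and (b) the resulting power of $\|\phi(t)\|_{L^6}$ is integrable against the known decay; the delicate bookkeeping is matching $v_0$'s allowed singularity (as in \eqref{def:singular potential}) against the available Lebesgue exponents, but for smooth fast-decaying $v_0$ this is immediate and I would simply cite $v_0\in L^1\cap L^{3/2}\cap L^3$.
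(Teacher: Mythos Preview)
Your proposal is correct and follows essentially the same route as the paper: expand $m$ via the three-term structure of $v$, bound $\|m(t)\|_{L^2_{xy}}$ by $C\|\phi(t)\|_{L^6}^3$ using H\"older/Young, then integrate in time using energy conservation on $[0,1]$ and the decay of Theorem~\ref{THM:TheLongTimeBehaviorOfHartree} on $[1,\infty)$. The only wrinkle is that the third piece $\int v_0(x-z)v_0(y-z)|\phi(z)|^2\,dz$ does not factor into separate $x$- and $y$-functions as you first wrote; the paper handles it by an explicit Cauchy--Schwarz in the $z$-variables after squaring, but any of your Young/H\"older variants closes it just as well.
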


\begin{proof}
Because%
\begin{eqnarray*}
&&v(x-y,x-z) \\
&=&v_{0}(x-y)v_{0}(x-z)+v_{0}(x-y)v_{0}(y-z)+v_{0}(x-z)v_{0}(y-z),
\end{eqnarray*}%
we have 
\begin{eqnarray*}
\Vert m\Vert _{L_{(x,y)}^{2}}^{2} &=&\int \left( \int v(x-y,x-z)\left\vert
\phi _{3}\right\vert ^{2}dz\right) ^{2}|\phi _{1}|^{2}|\phi _{2}|^{2}dxdy \\
&\leqslant &C\int |\phi _{1}|^{2}|\phi _{2}|^{2}v_{0}^{2}(x-y)\left( \int
v_{0}(x-z)\left\vert \phi _{3}\right\vert ^{2}dz\right) ^{2}dxdy \\
&&+C\int |\phi _{1}|^{2}|\phi _{2}|^{2}v_{0}^{2}(x-y)\left( \int
v_{0}(y-z)\left\vert \phi _{3}\right\vert ^{2}dz\right) ^{2}dxdy \\
&&+C\int |\phi _{1}|^{2}|\phi _{2}|^{2}\left( \int
v_{0}(x-z)v_{0}(y-z)\left\vert \phi _{3}\right\vert ^{2}dz\right) ^{2}dxdy \\
&=&I+II+III.
\end{eqnarray*}%
A combination of H\"{o}lder and interpolation gives the following estimates%
\begin{eqnarray*}
I+II &=&2C\int |\phi _{1}|^{2}\left( \int v_{0}^{2}(x-y)|\phi
_{2}|^{2}dy\right) \left( \int v_{0}(x-z)\left\vert \phi _{3}\right\vert
^{2}dz\right) ^{2}dx \\
&\leqslant &C\left\Vert \phi \right\Vert _{L^{6}}^{2}\left\Vert \int
v_{0}^{2}(\cdot -y)|\phi _{2}|^{2}dy\right\Vert _{L^{\infty }}\left\Vert
\int v_{0}(\cdot -z)\left\vert \phi _{3}\right\vert ^{2}dz\right\Vert
_{L^{3}}^{2} \\
&\leqslant &C\left\Vert \phi \right\Vert _{L^{6}}^{2}\left\Vert \phi
_{0}\right\Vert _{L^{2}}^{2}\left\Vert \phi \right\Vert
_{L^{6}}^{4}\leqslant C\left\Vert \phi _{0}\right\Vert
_{L^{2}}^{2}\left\Vert \phi \right\Vert _{L^{6}}^{6}, \\
III &=&C\int v_{0}(x-z_{1})v_{0}(y-z_{1})v_{0}(x-z_{2})v_{0}(y-z_{2}) \\
&&|\phi _{1}|^{2}|\phi _{2}|^{2}\left\vert \phi (z_{1})\right\vert
^{2}\left\vert \phi (z_{2})\right\vert ^{2}dxdydz_{1}dz_{2} \\
&\leqslant &C\int dz_{1}dz_{2}\left\vert \phi (z_{1})\right\vert
^{2}\left\vert \phi (z_{2})\right\vert ^{2}\left( \int
v_{0}^{2}(x-z_{1})v_{0}^{2}(y-z_{1})|\phi _{1}|^{2}|\phi
_{2}|^{2}dxdy\right) ^{\frac{1}{2}} \\
&&\left( \int v_{0}^{2}(x-z_{2})v_{0}^{2}(y-z_{2})|\phi _{1}|^{2}|\phi
_{2}|^{2}dxdy\right) ^{\frac{1}{2}} \\
&=&C\left( \int dz\left\vert \phi (z)\right\vert ^{2}\left( \int
v_{0}^{2}(x-z)|\phi _{1}|^{2}dx\right) \right) ^{2} \\
&\leqslant &C\left\Vert |\phi |^{2}\right\Vert _{L^{3}}^{2}\left\Vert \int
v_{0}^{2}(x-z)|\phi _{1}|^{2}dx\right\Vert _{L^{\frac{3}{2}}}^{2} \\
&\leqslant &C\left\Vert \phi \right\Vert _{L^{6}}^{4}\left\Vert \phi
\right\Vert _{L^{3}}^{4}\leqslant C\left\Vert \phi _{0}\right\Vert
_{L^{2}}^{2}\left\Vert \phi \right\Vert _{L^{6}}^{6}.
\end{eqnarray*}%
i.e. $\Vert m\Vert _{L_{(x,y)}^{2}}\leqslant C\left\Vert \phi \right\Vert
_{L^{6}}^{3}\leqslant Ct^{-3}$, for $t\geqslant 1,$ by Theorem \ref%
{THM:TheLongTimeBehaviorOfHartree}. So we conclude the lemma.
\end{proof}

\begin{remark}
Theorem \ref{THM:CorollaryOfPrioriEstimateOfu} also has consequences on $p$
because $\Vert p\Vert _{L^{2}}\leqslant \Vert u\Vert _{L^{2}}.$
\end{remark}

\subsection{The Existence of $u$}

Because equation \ref{Equation:EquationOfk}%
\begin{equation*}
\left( iu_{t}+ug^{T}+gu-(I+p)m\right) =\left( ip_{t}+[g,p]+u\overline{m}%
\right) \left( I+p\right) ^{-1}u\text{,}
\end{equation*}%
is fully nonlinear in $k$, it is not easy to solve for $k$ directly from the
equation. However, if we put in%
\begin{equation*}
I+p=\cosh (k)=\sqrt{I+u\overline{u}}
\end{equation*}%
in the operator sense, equation \ref{Equation:EquationOfk} becomes a
quasilinear NLS equation in $u=\sinh (k)$. In fact, written out explicitly,
the left hand side of equation \ref{Equation:EquationOfk} is 
\begin{eqnarray}
iu_{t}+ug^{T}+gu &=&\left( i\frac{\partial }{\partial t}-\Delta _{x}-\Delta
_{y}\right) u(t,x,y)  \label{Formula:ExplicitGMMEq} \\
&&+\overline{\phi }_{1}\int \left( \int v(x-y_{1},x-z)\left\vert \phi
_{3}\right\vert ^{2}dz\right) \phi (y_{1})u(t,y_{1},y)dy_{1}  \notag \\
&&+\overline{\phi }_{2}\int u(t,x,y_{1})\left( \int v(y_{1}-y,x-z)\left\vert
\phi _{3}\right\vert ^{2}dz\right) \phi (y_{1})dy_{1}  \notag \\
&&+\frac{1}{2}\left( \int v_{\cdot -2,\cdot -3}\left\vert \phi
_{2}\right\vert ^{2}\left\vert \phi _{3}\right\vert ^{2}dydz\right)
(x)u(t,x,y)  \notag \\
&&+\frac{1}{2}\left( \int v_{\cdot -2,\cdot -3}\left\vert \phi
_{2}\right\vert ^{2}\left\vert \phi _{3}\right\vert ^{2}dydz\right)
(y)u(t,x,y)  \notag
\end{eqnarray}%
and the main term of the right hand side%
\begin{eqnarray}
ip_{t}+[g,p] &=&\left( i\frac{\partial }{\partial t}-\Delta _{x}+\Delta
_{y}\right) p(t,x,y)  \label{Formula:ExplicitGMMEqForP} \\
&&+\overline{\phi }_{1}\int \left( \int v(x-y_{1},x-z)\left\vert \phi
_{3}\right\vert ^{2}dz\right) \phi (y_{1})p(t,y_{1},y)dy_{1}  \notag \\
&&-\overline{\phi }_{2}\int p(t,x,y_{1})\left( \int v(y_{1}-y,x-z)\left\vert
\phi _{3}\right\vert ^{2}dz\right) \phi (y_{1})dy_{1}  \notag \\
&&+\frac{1}{2}\left( \int v_{\cdot -2,\cdot -3}\left\vert \phi
_{2}\right\vert ^{2}\left\vert \phi _{3}\right\vert ^{2}dydz\right)
(x)p(t,x,y)  \notag \\
&&-\frac{1}{2}\left( \int v_{\cdot -2,\cdot -3}\left\vert \phi
_{2}\right\vert ^{2}\left\vert \phi _{3}\right\vert ^{2}dydz\right)
(y)p(t,x,y).  \notag
\end{eqnarray}

For our purpose, obtaining some reasonable estimates of $u$ and $p=\cosh
(k)-I$ is enough. So we would like to get around solving for $k$ and go to $%
u $ directly.

But at first, we ask the following: $k$ certainly determines $u$, but does $%
u $ determine $k$? The proof of Theorem \ref{THM:2ndOrderCorrection}
actually needs a well-defined $k.$

We answer the above question by the following lemma:

\begin{lemma}
\cite{GMM2}The map 
\begin{equation*}
k\mapsto u=\sinh (k)
\end{equation*}%
is one to one, onto, continuous, with a continuous inverse, from symmetric
Hilbert-Schmidt kernels $k$ onto symmetric Hilbert-Schmidt kernels $u$.
\end{lemma}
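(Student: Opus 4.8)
The plan is to realize $\sinh(k)$ through the self-adjoint functional calculus and to invert it by an explicit formula. Since $k\in L_{s}^{2}$, symmetry of the kernel gives $k^{\ast}=\overline{k}$, so $k^{\ast}k=\overline{k}k$ and $kk^{\ast}=k\overline{k}$ are nonnegative self-adjoint trace-class operators. Regrouping the defining series,
\[
u=\sinh(k)=\sum_{n\geqslant 0}\frac{(k\overline{k})^{n}k}{(2n+1)!}=k\,F(k^{\ast}k),\qquad F(\lambda):=\frac{\sinh\sqrt{\lambda}}{\sqrt{\lambda}},
\]
where $F$ is entire (value $1$ at $\lambda=0$) and $F\geqslant 1$ on $[0,\infty)$. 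First I would check that $u$ is a genuine symmetric Hilbert--Schmidt kernel: the submultiplicativity $\Vert AB\Vert _{L_{(x,y)}^{2}}\leqslant \Vert A\Vert _{op}\Vert B\Vert _{L_{(x,y)}^{2}}$ together with $\Vert k\Vert _{op}\leqslant \Vert k\Vert _{L_{(x,y)}^{2}}$ makes the series converge absolutely in $L_{(x,y)}^{2}$, with $\Vert u\Vert _{L_{(x,y)}^{2}}\leqslant \sinh \Vert k\Vert _{L_{(x,y)}^{2}}$; and since $(k^{\ast}k)^{T}=kk^{\ast}$ for symmetric $k$, transposing the factorization and using the intertwining identity $F(kk^{\ast})k=kF(k^{\ast}k)$ gives $u^{T}=u$.

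Next I would invert the relation. Since $F(\lambda)^{2}\lambda =\sinh ^{2}\sqrt{\lambda}$, squaring yields $u^{\ast}u=\sinh ^{2}\sqrt{k^{\ast}k}$, which is the recorded identity $u\overline{u}=2p+p^{2}=\cosh (k)\overline{\cosh (k)}-I$ read in the present self-adjoint setting. As $\sqrt{k^{\ast}k}\geqslant 0$, this gives $\sqrt{k^{\ast}k}=\operatorname{arcsinh}\sqrt{u^{\ast}u}$, so $k^{\ast}k$ --- and hence the operator $F(k^{\ast}k)\geqslant I$, which is therefore boundedly invertible --- is determined by $u$; inverting $u=kF(k^{\ast}k)$ produces the explicit inverse map
\[
k=u\,H(u^{\ast}u),\qquad H(\lambda):=\frac{\operatorname{arcsinh}\sqrt{\lambda}}{\sqrt{\lambda}},
\]
with $H$ continuous, positive and $\leqslant 1$ on $[0,\infty)$ (value $1$ at $\lambda =0$). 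This formula already forces injectivity. Surjectivity follows by substituting it back: using $\sinh (\operatorname{arcsinh}t)=t$ and the intertwining relation $wG(w^{\ast}w)=G(ww^{\ast})w$, one computes that $\sinh$ of the right-hand side equals $u$; and, exactly as before, the recovered $k$ is symmetric (because $u$ is) and Hilbert--Schmidt (because $\Vert H(u^{\ast}u)\Vert _{op}\leqslant 1$).

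For the continuity statements I would note that both $k\mapsto u$ and its inverse $u\mapsto k$ have the shape $w\mapsto w\,\Phi (w^{\ast}w)$ with $\Phi \in \{F,H\}$ continuous on $[0,\infty)$, and factor this through three steps. The map $w\mapsto w^{\ast}w$ is continuous from $L_{(x,y)}^{2}$ to $L_{(x,y)}^{2}$ --- write $w_{n}^{\ast}w_{n}-w^{\ast}w=w_{n}^{\ast}(w_{n}-w)+(w_{n}-w)^{\ast}w$ and apply the submultiplicativity above --- hence a fortiori in operator norm. The map $A\mapsto \Phi (A)$ is operator-norm continuous on bounded subsets of nonnegative operators: approximate $\Phi$ uniformly on a window $[0,R]$ by a polynomial $P$, use that polynomial functional calculus is operator-norm continuous, and bound $\Vert \Phi (A)-P(A)\Vert _{op}\leqslant \sup_{[0,R]}|\Phi -P|$ for self-adjoint $A$. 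Finally $(A,w)\mapsto Aw$ is jointly continuous when the first factor carries the operator norm and the second the $L_{(x,y)}^{2}$ norm, by one more application of submultiplicativity, the operator-norm factor remaining bounded on bounded sets. Composing, $k\mapsto u$ (take $\Phi =F$) and $u\mapsto k$ (take $\Phi =H$) are continuous.

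The only step that needs real care is operator-norm continuity of the functional calculus $A\mapsto \Phi (A)$ for a merely continuous $\Phi$, but for self-adjoint operators (here nonnegative Hilbert--Schmidt, so compact) this is classical once one restricts to a bounded spectral window; everything else is bookkeeping with the submultiplicativity $\Vert AB\Vert _{L^{2}}\leqslant \Vert A\Vert _{op}\Vert B\Vert _{L^{2}}$ and the elementary properties of $F$ and $H$. Since $F$ is entire and $H$ is analytic in a neighborhood of $[0,\infty)$, the same argument can instead be run through absolutely convergent operator power series on bounded sets, which additionally upgrades ``continuous'' to ``real-analytic''; this strengthening is not needed here.
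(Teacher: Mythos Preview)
Your argument is correct. The factorization $u=kF(k^{\ast}k)$ with $F(\lambda)=\sinh\sqrt{\lambda}/\sqrt{\lambda}$, the identity $u^{\ast}u=\sinh^{2}\sqrt{k^{\ast}k}$, and the explicit inverse $k=uH(u^{\ast}u)$ with $H(\lambda)=\operatorname{arcsinh}\sqrt{\lambda}/\sqrt{\lambda}$ all check out, as does the continuity argument via polynomial approximation of $\Phi\in\{F,H\}$ on a spectral window together with the submultiplicativity $\Vert AB\Vert_{L^{2}}\leqslant\Vert A\Vert_{op}\Vert B\Vert_{L^{2}}$. The symmetry verifications using the intertwining $wG(w^{\ast}w)=G(ww^{\ast})w$ are clean.

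As for comparison: the paper does not actually prove this lemma---its ``proof'' is a one-line deferral to \cite{GMM2}. So there is nothing in the present paper to compare your argument against. Your write-up is in fact a self-contained proof along the lines of what \cite{GMM2} does (polar-decomposition/functional-calculus reasoning), and stands on its own. One small caveat on your closing parenthetical: the Taylor series of $H$ at $0$ has radius of convergence $1$, so the ``absolutely convergent power series on bounded sets'' alternative would need to be phrased via local expansions or a contour-integral representation rather than a single global series; since you correctly note this strengthening is not needed and your main route goes through uniform polynomial approximation, this does not affect the proof.
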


\begin{proof}
The proof of this lemma is in \cite{GMM2}.
\end{proof}

Now we consider the existence of $u$ satisfying equation \ref%
{Equation:EquationOfk}. As asserted, equation \ref{Equation:EquationOfk} is
a quasilinear NLS of $u.$ However, we can transform it into a semilinear
equation which is easier to deal with, through the following lemma.

\begin{lemma}
\label{Lemma:TransformQuasiToSemi}\cite{GMM2}The following equations are
equivalent for a symmetric, Hilbert-Schmidt $u$:%
\begin{eqnarray}
iu_{t}+ug^{T}+gu &=&(I+p)m+\left( ip_{t}+[g,p]+u\overline{m}\right) \left(
I+p\right) ^{-1}u  \notag \\
iu_{t}+ug^{T}+gu &=&(I+p)m+\frac{1}{2}\left( \left( I+p\right) ^{-1}m%
\overline{u}+u\overline{m}\left( I+p\right) ^{-1}\right) u  \notag \\
&&+\frac{1}{2}\left[ ip_{t}+[g,p],\left( I+p\right) ^{-1}\right] u
\label{Formula:Eq1ForQuasiToSemiLemma} \\
iu_{t}+ug^{T}+gu &=&(I+p)m+\frac{1}{2}\left( \left( I+p\right) ^{-1}m%
\overline{u}+u\overline{m}\left( I+p\right) ^{-1}\right) u  \notag \\
&&+\frac{1}{2}\left[ W,\left( I+p\right) ^{-1}\right] u
\label{Formula:Eq2ForQuasiToSemiLemma}
\end{eqnarray}%
if we set%
\begin{eqnarray*}
W &:&={\frac{1}{2\pi i}}\int\limits_{\Gamma }\left( u\overline{u}-z\right)
^{-1}F\left( u\overline{u}-z\right) ^{-1}\sqrt{I+z}dz \\
F &:&=m\overline{u}(I+p)-(I+p)u\overline{m}
\end{eqnarray*}%
Here, $\Gamma $ is a contour enclosing the spectrum of the non-negative
Hilbert-Schmidt operator $u\overline{u}$.
\end{lemma}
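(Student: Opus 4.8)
The plan is to prove Lemma \ref{Lemma:TransformQuasiToSemi} by establishing the two equivalences in turn, working from the original equation \ref{Equation:EquationOfk} to \ref{Formula:Eq1ForQuasiToSemiLemma} and then from \ref{Formula:Eq1ForQuasiToSemiLemma} to \ref{Formula:Eq2ForQuasiToSemiLemma}.

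\smallskip

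\textbf{Step 1: From the original equation to \ref{Formula:Eq1ForQuasiToSemiLemma}.} The right-hand side of equation \ref{Equation:EquationOfk} is $\left(ip_{t}+[g,p]+u\overline{m}\right)\left(I+p\right)^{-1}u$. The idea is to symmetrize this in a way that respects the transpose symmetry. Using Lemma \ref{Lemma:KeyLemmaForPrioriEstimateOfU}, namely equation \ref{Formula:Eq1ForKeyLemmaOfPrioriEstimateOfU}, we have the identity $(ip_{t}+[g,p]+u\overline{m})(I+p)^{-1}=-(I+p)^{-1}(ip_{t}+[g,p]-m\overline{u})$. Adding this to the original form of the operator acting on $u$ and dividing by $2$, I would rewrite $\left(ip_{t}+[g,p]+u\overline{m}\right)\left(I+p\right)^{-1}$ as $\tfrac12\left((ip_{t}+[g,p]+u\overline{m})(I+p)^{-1}-(I+p)^{-1}(ip_{t}+[g,p]-m\overline{u})\right)$. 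Grouping the $ip_{t}+[g,p]$ contributions into the commutator $\tfrac12[ip_{t}+[g,p],(I+p)^{-1}]$ and the $u\overline{m}$, $m\overline{u}$ contributions into $\tfrac12\left((I+p)^{-1}m\overline{u}+u\overline{m}(I+p)^{-1}\right)$ yields exactly equation \ref{Formula:Eq1ForQuasiToSemiLemma} after multiplying on the right by $u$. This step is purely algebraic once \ref{Formula:Eq1ForKeyLemmaOfPrioriEstimateOfU} is invoked.

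\smallskip

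\textbf{Step 2: From \ref{Formula:Eq1ForQuasiToSemiLemma} to \ref{Formula:Eq2ForQuasiToSemiLemma}.} Here the task is to show that on symmetric Hilbert-Schmidt $u$ one may replace $ip_{t}+[g,p]$ by $W$ inside the commutator $[\,\cdot\,,(I+p)^{-1}]$. The key point is equation \ref{Formula:Eq2ForKeyLemmaOfPrioriEstimateOfU}: $i(u\overline{u})_{t}+[g,u\overline{u}]=m\overline{u}(I+p)-(I+p)u\overline{m}=F$. Since $I+p=\cosh(k)=\sqrt{I+u\overline{u}}$ in the operator sense, I would use the Cauchy integral representation of the operator square root, $I+p=\tfrac{1}{2\pi i}\int_{\Gamma}\sqrt{I+z}\,(u\overline{u}-z)^{-1}dz$, where $\Gamma$ encloses the spectrum of the nonnegative Hilbert-Schmidt operator $u\overline{u}$. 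Differentiating under the integral sign (in $t$) and computing the commutator with $g$, and using the resolvent identity $\partial_t(u\overline{u}-z)^{-1}=-(u\overline{u}-z)^{-1}(\partial_t(u\overline{u}))(u\overline{u}-z)^{-1}$ together with the analogous identity for $[g,(u\overline{u}-z)^{-1}]$, one obtains $i p_t+[g,p]=\tfrac{1}{2\pi i}\int_{\Gamma}(u\overline{u}-z)^{-1}\big(i(u\overline{u})_t+[g,u\overline{u}]\big)(u\overline{u}-z)^{-1}\sqrt{I+z}\,dz = W$, using \ref{Formula:Eq2ForKeyLemmaOfPrioriEstimateOfU}. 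Strictly speaking what the commutator $[\,\cdot\,,(I+p)^{-1}]$ sees is $ip_t+[g,p]$, and replacing it by $W$ is legitimate because $W$ and $ip_t+[g,p]$ differ by something commuting with $(I+p)^{-1}$ (indeed they are equal as operators); so the two forms \ref{Formula:Eq1ForQuasiToSemiLemma} and \ref{Formula:Eq2ForQuasiToSemiLemma} coincide.

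\smallskip

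\textbf{Main obstacle.} The delicate part is Step 2: justifying the contour-integral manipulations rigorously when $u\overline{u}$ is only Hilbert-Schmidt and merely locally-in-time differentiable. One must check that $\Gamma$ can be chosen uniformly (the spectrum of $u\overline{u}$ lies in $[0,\|u\overline{u}\|]$ which is controlled by the a priori estimate of Theorem \ref{THM:CorollaryOfPrioriEstimateOfu}), that $\sqrt{I+z}$ is analytic in a neighborhood of $\Gamma$ and the region it encloses so the functional calculus applies, and that the resolvents $(u\overline{u}-z)^{-1}$ are bounded on $\Gamma$ with the differentiation-under-the-integral and the product rule for $\partial_t$ and for $[g,\cdot]$ all valid in the Hilbert-Schmidt (or bounded-operator) topology. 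Since this lemma is quoted from \cite{GMM2}, I would present the algebra in full and refer to \cite{GMM2} for the operator-theoretic technicalities, which are the same here as in the 2-body case because only the abstract structure of equations \ref{Equation:EquationOfk}, \ref{Formula:Eq1ForKeyLemmaOfPrioriEstimateOfU} and \ref{Formula:Eq2ForKeyLemmaOfPrioriEstimateOfU} is used, not the specific form of $g$ and $m$.
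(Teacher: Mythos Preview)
Your proposal is correct and follows essentially the same route as the paper's (sketch) proof: Step~1 is the ``suitable re-writing'' the paper alludes to, made explicit via identity \ref{Formula:Eq1ForKeyLemmaOfPrioriEstimateOfU}, and Step~2 is exactly the paper's keystone $ip_t+[g,p]=W$, obtained from the Cauchy integral representation of $\sqrt{I+u\overline{u}}$ together with \ref{Formula:Eq2ForKeyLemmaOfPrioriEstimateOfU}. One small cosmetic slip: with the resolvent written as $(u\overline{u}-z)^{-1}$ your contour formula for $I+p$ needs a minus sign (as in the paper), but this does not affect the argument.
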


\begin{proof}
(Sketch) Equation \ref{Formula:Eq1ForQuasiToSemiLemma} is the same as
equation \ref{Equation:EquationOfk}, suitably re-written. The keystone of
the proof is%
\begin{equation*}
ip_{t}+[g,p]=W.
\end{equation*}%
But%
\begin{eqnarray*}
ip_{t}+[g,p] &=&i\left( \sqrt{I+u\overline{u}}\right) _{t}+[g,\sqrt{I+u%
\overline{u}}] \\
&=&{\frac{1}{2\pi i}}\int\limits_{\Gamma }\left( u\overline{u}-z\right)
^{-1}(i(u\overline{u})_{t}+[g,u\overline{u}])\left( u\overline{u}-z\right)
^{-1}\sqrt{I+z}dz
\end{eqnarray*}%
because%
\begin{eqnarray*}
\sqrt{I+u\overline{u}} &=&-\frac{1}{2\pi i}\int\limits_{\Gamma }\left( u%
\overline{u}-z\right) ^{-1}\sqrt{I+z}dz \\
i\left( \left( u\overline{u}-z\right) ^{-1}\right) _{t}+[g,\left( u\overline{%
u}-z\right) ^{-1}] &=&-\left( u\overline{u}-z\right) ^{-1}(i(u\overline{u}%
)_{t}+[g,u\overline{u}])\left( u\overline{u}-z\right) ^{-1}.
\end{eqnarray*}%
The result follows from equation \ref%
{Formula:Eq2ForKeyLemmaOfPrioriEstimateOfU}%
\begin{equation*}
i(u\overline{u})_{t}+[g,u\overline{u}]=F=m\overline{u}(I+p)-(I+p)u\overline{m%
}.
\end{equation*}
\end{proof}

Whence, we only need to show the existence for equation \ref%
{Formula:Eq2ForQuasiToSemiLemma} which is of the form%
\begin{equation*}
iu_{t}+ug^{T}+gu=m+N(u)
\end{equation*}%
where the nonlinear part $N(u)$ involves no derivatives of $u.$ Via the
ordinary iteration procedure, we conclude the following existence theorem:

\begin{theorem}
\label{THM:ExistenceOfU}\cite{GMM2}Given $u_{0}\in L_{(x,y)}^{2}(\mathbb{R}%
^{6}\mathbb{)}$ symmetric, there exists $\varepsilon _{0}$ such that if 
\begin{equation*}
\Vert m\Vert _{L_{t}^{1}([0,T])L_{(x,y)}^{2}}\leqslant \varepsilon _{0}
\end{equation*}%
then there exists $u\in L_{t}^{\infty }([0,T])L_{(x,y)}^{2}$ solving
equation \ref{Formula:Eq2ForQuasiToSemiLemma} and hence equation \ref%
{Equation:EquationOfk} with prescribed initial condition $%
u(0,x,y)=u_{0}(x,y)\in L_{(x,y)}^{2}(\mathbb{R}^{6}\mathbb{)}$.
\end{theorem}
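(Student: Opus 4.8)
The plan is to run a standard contraction-mapping / Duhamel iteration argument on the semilinear reformulation \ref{Formula:Eq2ForQuasiToSemiLemma}, exploiting the fact that the nonlinearity $N(u)$ carries no derivatives of $u$, so that the free evolution generated by $u\mapsto ug^T+gu$ is all we need. First I would make precise the relevant propagator: since $g=-\triangle\delta_{1-2}+(\text{bounded multiplication/convolution operators built from }\phi)$, the operator acting on kernels by $u\mapsto i^{-1}(ug^T+gu)$ generates a strongly continuous unitary group on $L^2_{(x,y)}(\mathbb{R}^6)$ (the Laplacian part is a free Schr\"odinger evolution in $x$ and in $y$, the lower-order part is bounded with $L^1_t$-integrable operator norm by the estimates already available on $\phi$, e.g.\ as in Lemma \ref{Lemma:EstimateForM}); call it $U(t,s)$. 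Then \ref{Formula:Eq2ForQuasiToSemiLemma} with data $u_0$ is equivalent to the integral equation
\begin{equation*}
u(t)=U(t,0)u_0-i\int_0^t U(t,s)\big(m(s)+N(u(s))\big)\,ds.
\end{equation*}

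Next I would set up the iteration space $X_T=L^\infty_t([0,T])L^2_{(x,y)}(\mathbb{R}^6)$ restricted to symmetric kernels (symmetry is preserved by $U(t,s)$ and by $N$, since $g^T=\overline{g}$, $m^T=m$, and all the compositions defining $N$ respect $u^T=u$). Define the map $\Phi(u)(t)$ by the right-hand side above. The crux is to estimate $N(u)$ in $L^1_t L^2_{(x,y)}$. Here one uses $I+p=\sqrt{I+u\overline u}$, so that $\|(I+p)^{-1}\|_{op}\le 1$ and $\|I+p\|_{op}\le(1+\|u\|_{L^2}^2)^{1/2}$; together with $\|p\|_{L^2}\le\|u\|_{L^2}$ (the trigonometric identity $u\overline u=2p+p^2$ and nonnegativity of $p$, already noted in the excerpt) and the contour-integral representation of $W$, every term in $N(u)$ is seen to be a product of one factor of $m$, bounded multiplication operators, operator-norm-bounded resolvent-type factors, and Hilbert--Schmidt factors $u$ or $p$. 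Hence schematically $\|N(u)(s)\|_{L^2}\le C\|m(s)\|_{L^2}\,P(\|u(s)\|_{L^2})$ for a polynomial $P$ with $P(0)=0$, and similarly for the Lipschitz difference $\|N(u)-N(v)\|_{L^2}\le C\|m(s)\|_{L^2}\,Q(\|u\|_{L^2},\|v\|_{L^2})\|u-v\|_{L^2}$. Integrating in time and using the smallness hypothesis $\|m\|_{L^1_t([0,T])L^2}\le\varepsilon_0$ gives, on the ball $\{\|u\|_{X_T}\le 2\|u_0\|_{L^2}+1\}$, the bounds $\|\Phi(u)\|_{X_T}\le\|u_0\|_{L^2}+C\varepsilon_0(1+P(2\|u_0\|+1))$ and $\|\Phi(u)-\Phi(v)\|_{X_T}\le C\varepsilon_0\,Q\,\|u-v\|_{X_T}$; choosing $\varepsilon_0$ small (depending on $\|u_0\|_{L^2}$) makes $\Phi$ a contraction on that ball.

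The main obstacle I expect is the careful bookkeeping for the $W$-term: $W={\frac{1}{2\pi i}}\int_\Gamma(u\overline u-z)^{-1}F(u\overline u-z)^{-1}\sqrt{I+z}\,dz$ with $F=m\overline u(I+p)-(I+p)u\overline m$, and one must check that $[W,(I+p)^{-1}]u$ is genuinely $O(\|m\|_{L^2})$ in $L^2$ with no loss of derivatives and that it is Lipschitz in $u$; this requires choosing the contour $\Gamma$ uniformly for $u$ in the bounded ball (possible since $\|u\overline u\|_{op}\le\|u\|_{L^2}^2$ is controlled there), controlling $\|(u\overline u-z)^{-1}\|_{op}$ on $\Gamma$, and tracking how $\Gamma$ and the resolvents vary with $u$ — this is exactly the point where the argument is more delicate than a textbook semilinear NLS and where I would lean on the corresponding computation in \cite{GMM2}. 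Once the contraction is established, Banach fixed point gives the unique $u\in L^\infty_t([0,T])L^2_{(x,y)}$; by Lemma \ref{Lemma:TransformQuasiToSemi} it solves \ref{Equation:EquationOfk} as well, and by the one-to-one correspondence $k\mapsto\sinh(k)$ it corresponds to a well-defined symmetric Hilbert--Schmidt $k$, completing the proof.
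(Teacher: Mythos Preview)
Your proposal is correct and matches the paper's approach: the paper does not spell out a proof of this theorem at all, but simply notes that equation \ref{Formula:Eq2ForQuasiToSemiLemma} has the form $iu_t+ug^T+gu=m+N(u)$ with $N(u)$ containing no derivatives of $u$, invokes ``the ordinary iteration procedure,'' and cites \cite{GMM2} for the details. What you have written is precisely a fleshed-out version of that ordinary iteration, including the bookkeeping for the $W$-term that the paper leaves to \cite{GMM2}.
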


Since we have shown $\Vert m\Vert _{L_{t}^{1}(\mathbb{R}^{+})L_{(x,y)}^{2}}<%
\infty $ in Lemma \ref{Lemma:EstimateForM}, we can divide $\mathbb{R}^{+}$
into countably many time intervals $[T_{n},T_{n+1}]$ such that $\Vert m\Vert
_{L_{t}^{1}([T_{n},T_{n+1}])L_{(x,y)}^{2}}\leqslant \varepsilon _{0}$. So
the above existence theorem in fact implies the global existence of $u$ and
thus $p.$

Via Theorem \ref{THM:CorollaryOfPrioriEstimateOfu}, we have%
\begin{equation*}
\Vert u\Vert _{L_{t}^{\infty }(\mathbb{R}^{+})L_{(x,y)}^{2}}\leqslant C,
\end{equation*}%
which implies%
\begin{equation*}
\Vert p\Vert _{L_{t}^{\infty }(\mathbb{R}^{+})L_{(x,y)}^{2}}\leqslant C.
\end{equation*}%
Moreover, the following estimates hold.

\begin{theorem}
\label{THM:MorePrioriEstimateOfu}Let $u\in L_{t}^{\infty }(\mathbb{R}%
^{+})L_{(x,y)}^{2}$ be the solution of equation \ref{Equation:EquationOfk}
subject to $u_{0}\in L_{(x,y)}^{2}(\mathbb{R}^{6}\mathbb{)}$ described in
Theorem \ref{THM:ExistenceOfU}. Then $u$ satisfies the following additional
properties:%
\begin{eqnarray}
\Vert \left( i\frac{\partial }{\partial t}-\Delta _{x}-\Delta _{y}\right)
u\Vert _{L_{t}^{1}(\mathbb{R}^{+})L_{(x,y)}^{2}} &\leqslant &C
\label{Formula:Eq1ForExistenceOfU} \\
\Vert \left( i\frac{\partial }{\partial t}-\Delta _{x}+\Delta _{y}\right)
p\Vert _{L_{t}^{1}(\mathbb{R}^{+})L_{(x,y)}^{2}} &\leqslant &C
\label{Formula:Eq2ForExistenceOfU}
\end{eqnarray}%
where $C$ only depends on $v,$ $C_{1}$, $C_{2}$ and $\left\Vert
u_{0}\right\Vert _{L_{(x,y)}^{2}}$. See Theorem \ref{THM:MainTheorem} for $%
C_{1}$ and $C_{2}.$
\end{theorem}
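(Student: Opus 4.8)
The plan is to establish the two estimates in Theorem \ref{THM:MorePrioriEstimateOfu} by reading off the left-hand sides directly from the explicit forms \ref{Formula:ExplicitGMMEq} and \ref{Formula:ExplicitGMMEqForP} of equation \ref{Equation:EquationOfk}, and then bounding every term on the resulting right-hand side in $L_t^1(\mathbb{R}^+)L^2_{(x,y)}$. From \ref{Formula:ExplicitGMMEq} we have
\[
\Big(i\tfrac{\partial}{\partial t}-\Delta_x-\Delta_y\Big)u = (I+p)m+\tfrac12\big((I+p)^{-1}m\overline{u}+u\overline{m}(I+p)^{-1}\big)u+\tfrac12\big[W,(I+p)^{-1}\big]u,
\]
using the semilinear form \ref{Formula:Eq2ForQuasiToSemiLemma} from Lemma \ref{Lemma:TransformQuasiToSemi}, and similarly \ref{Formula:ExplicitGMMEqForP} together with $ip_t+[g,p]=W$ gives an expression for $\big(i\tfrac{\partial}{\partial t}-\Delta_x+\Delta_y\big)p$ in terms of $W$. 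So the whole problem reduces to $L_t^1 L^2$ control of $(I+p)m$, of the cubic terms in $u,\overline m,(I+p)^{-1}$, and of $W$ itself.

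First I would dispose of the easy pieces. Since $\|p\|_{L_t^\infty L^2}\le C$ and $\|u\|_{L_t^\infty L^2}\le C$ by Theorem \ref{THM:CorollaryOfPrioriEstimateOfu} and the discussion following Theorem \ref{THM:ExistenceOfU}, and since $(I+p)^{-1}$ is bounded as an operator (its operator norm is $\le 1$ because $I+p=\sqrt{I+u\overline u}\ge I$), the Hilbert--Schmidt norm of $(I+p)m$ is $\le C\|m\|_{L^2}$ and each cubic term has Hilbert--Schmidt norm $\le C\|m\|_{L^2}\|u\|_{L^2}^2\le C\|m\|_{L^2}$, where I use that composing a Hilbert--Schmidt operator with bounded operators keeps the Hilbert--Schmidt norm controlled, and that $m\overline u$ (a composition of two Hilbert--Schmidt operators) is trace class hence Hilbert--Schmidt with norm $\le\|m\|_{L^2}\|u\|_{L^2}$. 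Then Lemma \ref{Lemma:EstimateForM} gives $\|m\|_{L_t^1(\mathbb{R}^+)L^2}\le C<\infty$, so all these contributions are bounded in $L_t^1 L^2$ by a constant depending only on $v$, $C_1$, $C_2$, and $\|u_0\|_{L^2}$ (the last entering through Theorem \ref{THM:CorollaryOfPrioriEstimateOfu}).

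The main obstacle is the term $[W,(I+p)^{-1}]u$, equivalently controlling $W$ in Hilbert--Schmidt norm by $\|m\|_{L^2}$. Here I would use the contour-integral representation
\[
W={\frac{1}{2\pi i}}\int_{\Gamma}(u\overline u-z)^{-1}F(u\overline u-z)^{-1}\sqrt{I+z}\,dz,\qquad F=m\overline u(I+p)-(I+p)u\overline m,
\]
from Lemma \ref{Lemma:TransformQuasiToSemi}. The resolvents $(u\overline u-z)^{-1}$ are bounded operators for $z$ on a contour $\Gamma$ surrounding the spectrum $[0,\|u\overline u\|]\subset[0,C]$ of the nonnegative operator $u\overline u$, with operator norms and the factor $\sqrt{I+z}$ all controlled by a constant depending on $\|u\|_{L_t^\infty L^2}$; and $F$ has Hilbert--Schmidt norm $\le C\|m\|_{L^2}\|u\|_{L^2}(1+\|p\|_{L^2})\le C\|m\|_{L^2}$. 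Hence $\|W\|_{L^2_{(x,y)}}\le C\|m\|_{L^2}$ pointwise in $t$, and therefore $\|[W,(I+p)^{-1}]u\|_{L^2}\le 2\|W\|_{op}\|u\|_{L^2}$ — but here one must be a little careful, since we only control $W$ in Hilbert--Schmidt (not operator) norm; writing the commutator as $W(I+p)^{-1}u-(I+p)^{-1}Wu$ and moving the Hilbert--Schmidt factor onto $W$ (using $\|AB\|_{HS}\le\|A\|_{HS}\|B\|_{op}$ and $\|BA\|_{HS}\le\|B\|_{op}\|A\|_{HS}$ with $B=(I+p)^{-1}$, $\|B\|_{op}\le1$) gives $\|[W,(I+p)^{-1}]u\|_{L^2}\le 2\|W\|_{L^2}\|u\|_{op}$; and $\|u\|_{op}\le\|u\|_{L^2}\le C$. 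So this term is again $\le C\|m\|_{L^2}$, and integrating in time and invoking Lemma \ref{Lemma:EstimateForM} finishes \ref{Formula:Eq1ForExistenceOfU}. The estimate \ref{Formula:Eq2ForExistenceOfU} for $p$ follows the same way from $ip_t+[g,p]=W$ and $\|W\|_{L_t^1 L^2}\le C\|m\|_{L_t^1 L^2}\le C$. The one point needing genuine care throughout is keeping track of which norm (operator vs.\ Hilbert--Schmidt) is available on each factor, so that every product estimate is legitimate; everything else is bookkeeping on top of the already-established bounds $\|u\|_{L_t^\infty L^2}\le C$ and $\|m\|_{L_t^1 L^2}\le C$.
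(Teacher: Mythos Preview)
Your handling of the nonlinear right-hand side—$(I+p)m$, the cubic terms, and especially $W$ via the contour integral—is essentially the paper's Part I and is correct. But your displayed equation is wrong: formula \ref{Formula:ExplicitGMMEq} says
\[
iu_t+ug^T+gu=\Big(i\tfrac{\partial}{\partial t}-\Delta_x-\Delta_y\Big)u+\text{(lower-order potential terms from }g\text{)},
\]
and it is $iu_t+ug^T+gu$, not $\big(i\partial_t-\Delta_x-\Delta_y\big)u$, that equals the semilinear right-hand side \ref{Formula:Eq2ForQuasiToSemiLemma}. So when you write ``the whole problem reduces to $L_t^1L^2$ control of $(I+p)m$, of the cubic terms \ldots, and of $W$,'' you have dropped exactly the four potential terms displayed in \ref{Formula:ExplicitGMMEq} (and their analogues in \ref{Formula:ExplicitGMMEqForP} for $p$). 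These must be moved to the right and bounded separately; this is the paper's Part II.

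Of those missing terms, the off-diagonal ones $\overline\phi(x)\int\big(\int v(x-y_1,x-z)|\phi(z)|^2\,dz\big)\phi(y_1)u(y_1,y)\,dy_1$ are comparatively harmless: Cauchy--Schwarz in $y_1$ shows their $L^2_{(x,y)}$ norm is $\le\|m\|_{L^2}\|u\|_{L^2}$, hence $L_t^1$-bounded. The multiplication term $\tfrac12\big(\int v(x-\cdot,x-\cdot)|\phi|^2|\phi|^2\big)u$ is the one with genuine content: you need $\|\int v(x-y,x-z)|\phi(y)|^2|\phi(z)|^2\,dy\,dz\|_{L^\infty_x}$ to be time-integrable, and this is where the $L^6$ decay $\|\phi(t)\|_{L^6}\lesssim t^{-1}$ from Theorem \ref{THM:TheLongTimeBehaviorOfHartree} enters, giving a $t^{-4}$ bound for $t\ge1$ (estimate \ref{Estimate:CanBeUsedReplacingU=Fi} in the paper). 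Without this step your argument does not close. The same omission affects your treatment of $p$: the identity $ip_t+[g,p]=W$ again differs from $(i\partial_t-\Delta_x+\Delta_y)p$ by the potential terms of $g$, which must be estimated in the same way.
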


\begin{proof}
We will only show estimate \ref{Formula:Eq1ForExistenceOfU}. Estimate \ref%
{Formula:Eq2ForExistenceOfU} can be shown similarly from%
\begin{equation*}
ip_{t}+[g,p]=W.
\end{equation*}

The proof is separated into 2 parts.

On the one hand we show%
\begin{equation*}
\Vert iu_{t}+ug^{T}+gu\Vert _{L_{t}^{1}(\mathbb{R}^{+})L_{(x,y)}^{2}}%
\leqslant \Vert m\Vert _{L_{t}^{1}(\mathbb{R}^{+})L_{(x,y)}^{2}}+\Vert
N(u)\Vert _{L_{t}^{1}(\mathbb{R}^{+})L_{(x,y)}^{2}}\leqslant C_{\varepsilon
}.
\end{equation*}%
On the other hand we control the terms in $iu_{t}+ug^{T}+gu$ different from $%
\left( i\frac{\partial }{\partial t}-\Delta _{x}-\Delta _{y}\right) u$,
namely 
\begin{equation*}
\int \left( \int v(x-y_{1},x-z)\left\vert \phi _{3}\right\vert ^{2}dz\right) 
\overline{\phi }(x)\phi (y_{1})u(y_{1},y)dy_{1}
\end{equation*}%
and 
\begin{equation*}
\frac{1}{2}\left( \int v(x-y,x-z)\left\vert \phi _{2}\right\vert
^{2}\left\vert \phi _{3}\right\vert ^{2}dydz\right) u.
\end{equation*}%
One sees the above two terms from formula \ref{Formula:ExplicitGMMEq}.

Part I. Recall that%
\begin{equation*}
N(u)=pm+\frac{1}{2}\left( \left( I+p\right) ^{-1}m\overline{u}+u\overline{m}%
\left( I+p\right) ^{-1}\right) u+\frac{1}{2}\left[ W,\left( I+p\right) ^{-1}%
\right] u.
\end{equation*}%
We have proven 
\begin{equation*}
\Vert p\Vert _{L_{t}^{\infty }(\mathbb{R}^{+})L_{(x,y)}^{2}}\leqslant \Vert
u\Vert _{L_{t}^{\infty }(\mathbb{R}^{+})L_{(x,y)}^{2}}\leqslant
C_{\varepsilon }.
\end{equation*}%
Together with the fixed time estimate:%
\begin{equation}
\left\Vert kl\right\Vert _{H-S}\leqslant \left\Vert k\right\Vert
_{op}\left\Vert l\right\Vert _{H-S}
\label{Estimate:InsideTheMainUEstimateOperator}
\end{equation}%
these take care of most of the terms in $N(u)$ because $\left( I+p\right)
^{-1}$ and $\left( u\overline{u}-z\right) ^{-1}|_{z\in \Gamma }$ have
uniformly bounded operator norms. In inequality \ref%
{Estimate:InsideTheMainUEstimateOperator}, $\left\Vert \cdot \right\Vert
_{H-S}$ stands for the Hilbert-Schmidt norm and $\left\Vert \cdot
\right\Vert _{op}$ stands for the operator norm. We only need to account for 
$W.$ However, the fact that $\left\vert z\right\vert \leqslant C\Vert u\Vert
_{L_{(x,y)}^{2}}^{2}$on $\Gamma \ $implies 
\begin{equation*}
\Vert W\Vert _{L_{t}^{1}(\mathbb{R}^{+})L_{(x,y)}^{2}}\leqslant C\left(
1+\Vert u\Vert _{L_{t}^{\infty }(\mathbb{R}^{+})L_{(x,y)}^{2}}^{6}\right)
\Vert m\Vert _{L_{t}^{1}(\mathbb{R}^{+})L_{(x,y)}^{2}}\leqslant C.
\end{equation*}%
i.e. $\Vert N(u)\Vert _{L_{t}^{1}(\mathbb{R}^{+})L_{(x,y)}^{2}}\leqslant C.$

Part II.

Using H\"{o}lder, it is not difficult to see the estimate%
\begin{eqnarray*}
&&\left\Vert \int \left( \int v(x-y_{1},x-z)\left\vert \phi _{3}\right\vert
^{2}dz\right) \overline{\phi }(x)\phi (y_{1})u(y_{1},y)dy_{1}\right\Vert
_{L_{t}^{1}(\mathbb{R}^{+})L_{(x,y)}^{2}} \\
&=&\left\Vert \left\{ \int \left\vert \int \left( \int
v(x-y_{1},x-z)\left\vert \phi _{3}\right\vert ^{2}dz\right) \overline{\phi }%
(x)\phi (y_{1})u(y_{1},y)dy_{1}\right\vert ^{2}dxdy\right\} ^{\frac{1}{2}%
}\right\Vert _{L_{t}^{1}(\mathbb{R}^{+})} \\
&\leqslant &\left\Vert \left\{ \int \left\vert \overline{\phi }%
(x)\right\vert ^{2}\left( \int \left( \int v(x-y_{1},x-z)\left\vert \phi
_{3}\right\vert ^{2}dz\right) ^{2}\left\vert \phi (y_{1})\right\vert
^{2}dy_{1}\right) \left( \int \left\vert u(y_{1},y)\right\vert
^{2}dy_{1}\right) dxdy\right\} ^{\frac{1}{2}}\right\Vert _{L_{t}^{1}(\mathbb{%
R}^{+})} \\
&=&\left\Vert \left\{ \int \left( \int v(x-y_{1},x-z)\left\vert \phi
_{3}\right\vert ^{2}dz\right) ^{2}\left\vert \overline{\phi }(x)\right\vert
^{2}\left\vert \phi (y_{1})\right\vert ^{2}dxdy_{1}\right\} ^{\frac{1}{2}%
}\Vert u\Vert _{L_{(x,y)}^{2}}\right\Vert _{L_{t}^{1}(\mathbb{R}^{+})} \\
&\leqslant &\Vert m\Vert _{L_{t}^{1}(\mathbb{R}^{+})L_{(x,y)}^{2}}\Vert
u\Vert _{L_{t}^{\infty }(\mathbb{R}^{+})L_{(x,y)}^{2}} \\
&\leqslant &C.
\end{eqnarray*}%
It remains to show: 
\begin{equation}
\left\Vert \left( \int v(x-y,x-z)\left\vert \phi _{2}\right\vert
^{2}\left\vert \phi _{3}\right\vert ^{2}dydz\right) u\right\Vert _{L_{t}^{1}(%
\mathbb{R}^{+})L_{(x,y)}^{2}}\leqslant C.
\label{Estimate:CanBeUsedReplacingU=Fi}
\end{equation}

Write%
\begin{eqnarray*}
&&\left\Vert \left( \int v(x-y,x-z)\left\vert \phi _{2}\right\vert
^{2}\left\vert \phi _{3}\right\vert ^{2}dydz\right) u\right\Vert
_{L_{(x,y)}^{2}} \\
&=&\left( \int \left\vert u(t,x,y)\right\vert ^{2}\left( \int
v(x-y,x-z)\left\vert \phi _{2}\right\vert ^{2}\left\vert \phi
_{3}\right\vert ^{2}dydz\right) ^{2}dxdy\right) ^{\frac{1}{2}} \\
&\leqslant &C\left( \int \left\vert u(t,x,y)\right\vert ^{2}\left( \int
v_{0}(x-y)v_{0}(x-z)\left\vert \phi _{2}\right\vert ^{2}\left\vert \phi
_{3}\right\vert ^{2}dydz\right) ^{2}dxdy\right) ^{\frac{1}{2}} \\
&&+C\left( \int \left\vert u(t,x,y)\right\vert ^{2}\left( \int
v_{0}(x-y)v_{0}(y-z)\left\vert \phi _{2}\right\vert ^{2}\left\vert \phi
_{3}\right\vert ^{2}dydz\right) ^{2}dxdy\right) ^{\frac{1}{2}} \\
&&+C\left( \int \left\vert u(t,x,y)\right\vert ^{2}\left( \int
v_{0}(x-z)v_{0}(y-z)\left\vert \phi _{2}\right\vert ^{2}\left\vert \phi
_{3}\right\vert ^{2}dydz\right) ^{2}dxdy\right) ^{\frac{1}{2}} \\
&=&I+II+III.
\end{eqnarray*}%
According to the estimate 
\begin{equation*}
\left\vert \int v_{0}(x-y)\left\vert \phi (y)\right\vert ^{2}dy\right\vert
\leqslant C\left\Vert \phi \right\Vert _{L^{6}}^{2}\leqslant Ct^{-2}
\end{equation*}%
we acquire%
\begin{equation*}
I=C\left( \int \left\vert u(t,x,y)\right\vert ^{2}\left( \int
v_{0}(x-y)\left\vert \phi _{2}\right\vert ^{2}dy\right) ^{4}dxdy\right) ^{%
\frac{1}{2}}\leqslant Ct^{-4}\Vert u\Vert _{L_{t}^{\infty }(\mathbb{R}%
^{+})L_{(x,y)}^{2}}\text{ for }t\geqslant 1,
\end{equation*}%
\begin{eqnarray*}
II+III &=&2C\left( \int \left\vert u(t,x,y)\right\vert ^{2}\left( \int
v_{0}(x-y)v_{0}(y-z)\left\vert \phi _{2}\right\vert ^{2}\left\vert \phi
_{3}\right\vert ^{2}dydz\right) ^{2}dxdy\right) ^{\frac{1}{2}} \\
&\leqslant &2C\left( \int \left\vert u(t,x,y)\right\vert ^{2}\left( \int
v_{0}(x-y)\left\vert \phi _{2}\right\vert ^{2}dy\right) ^{2}C\left\Vert \phi
\right\Vert _{L^{6}}^{4}dxdy\right) ^{\frac{1}{2}} \\
&\leqslant &Ct^{-4}\Vert u\Vert _{L_{t}^{\infty }(\mathbb{R}%
^{+})L_{(x,y)}^{2}}\text{ for }t\geqslant 1.
\end{eqnarray*}%
i.e. estimate \ref{Estimate:CanBeUsedReplacingU=Fi}%
\begin{equation*}
\left\Vert \left( \int v(x-y,x-z)\left\vert \phi _{2}\right\vert
^{2}\left\vert \phi _{3}\right\vert ^{2}dydz\right) u\right\Vert _{L_{t}^{1}(%
\mathbb{R}^{+})L_{(x,y)}^{2}}\leqslant C.
\end{equation*}
\end{proof}

\subsection{The Trace $\protect\int d(t,x,x)\ dx$}

Recall that 
\begin{align*}
d(t,x,y)=& \left( i\sinh (k)_{t}+\sinh (k)g^{T}+g\sinh (k)\right) \overline{%
\sinh (k)} \\
-& \left( i\cosh (k)_{t}+[g,\cosh (k)]\right) \cosh (k) \\
-& \sinh (k)\overline{m}\cosh (k)-\cosh (k)m\overline{\sinh (k)}.
\end{align*}%
defined by Formula \ref{Formula:DefinitionOFd}. Rewrite it as%
\begin{eqnarray*}
d(t,x,y) &=&\left( iu_{t}+ug^{T}+gu\right) \overline{u}-\left(
ip_{t}+[g,p]\right) (I+p) \\
&&-u\overline{m}(I+p)-\left( I+p\right) m\overline{u}
\end{eqnarray*}%
because $I$ commutes with everything and $I_{t}=0$.

Notice that if $k_{1}(x,y)\in L_{(x,y)}^{2}$ and $k_{2}(x,y)\in
L_{(x,y)}^{2} $ then 
\begin{equation*}
\int |k_{1}k_{2}|(x,x)dx=\int |\int k_{1}(x,y)k_{2}(y,x)dy|dx\leqslant \Vert
k_{1}\Vert _{L_{(x,y)}^{2}}\Vert k_{2}\Vert _{L_{(x,y)}^{2}}.
\end{equation*}

At this point, we have already shown that $m,iu_{t}+ug^{T}+gu,\ ip_{t}+[g,p]$
and $u\overline{m}\in L_{t}^{1}(\mathbb{R}^{+})L_{(x,y)}^{2}$ and $u,$ $p\in
L_{t}^{\infty }(\mathbb{R}^{+})L_{(x,y)}^{2}$. So except $\left(
ip_{t}+[g,p]\right) I,$ all traces in Formula \ref{Formula:DefinitionOFd}
are well-defined and integrable on $\mathbb{R}^{+}.$

However,%
\begin{equation*}
ip_{t}+[g,p]=W,
\end{equation*}%
for%
\begin{eqnarray*}
W &=&{\frac{1}{2\pi i}}\int\limits_{\Gamma }\left( u\overline{u}-z\right)
^{-1}F\left( u\overline{u}-z\right) ^{-1}\sqrt{I+z}dz \\
F &=&m\overline{u}(I+p)-(I+p)u\overline{m}.
\end{eqnarray*}%
Inside the contour integral of $W$, since $\left( u\overline{u}-z\right)
^{-1}|_{z\in \Gamma }$ has uniformly bounded operator norm and $\left\vert 
\sqrt{I+z}\right\vert \leqslant C\left( 1+\left\Vert u\right\Vert
_{L_{t}^{\infty }(\mathbb{R}^{+})L_{(x,y)}^{2}}\right) $, we are in fact
dealing with%
\begin{equation*}
(Bounded)(H-S)(H-S)(Bounded)
\end{equation*}%
where $H-S$ stands for Hilbert-Schmidt. But $(Bounded)(H-S)$ is
Hilbert-Schmidt. So we are looking at $(H-S)(H-S)$ which has a trace
well-defined and locally integrable in time.

\section{Error Estimates / Proof of Theorem \protect\ref%
{THM:MainTheoremErrorEstimate} $\left( \text{Part II}\right) $}

\label{section:ErrorEstimates}

We finish the proof of Theorem \ref{THM:MainTheoremErrorEstimate} with the
proposition below whose proof consists of classical techniques.

\begin{proposition}
\label{Prop:ErrorEstimateConclusion}Let $\phi $ to be the solution of the
Hartree equation subject to (i), (ii), and (iii). Assume we have 
\begin{eqnarray*}
\Vert \left( i\frac{\partial }{\partial t}+\Delta _{x}\right) \phi \Vert
_{L_{t}^{1}(\mathbb{R}^{+})L_{x}^{2}} &\leqslant &C_{3}\text{ }, \\
\Vert \left( i\frac{\partial }{\partial t}-\Delta _{x}-\Delta _{y}\right)
u\Vert _{L_{t}^{1}(\mathbb{R}^{+})L_{(x,y)}^{2}} &\leqslant &C_{4}\text{ },
\\
\Vert \left( i\frac{\partial }{\partial t}-\Delta _{x}+\Delta _{y}\right)
p\Vert _{L_{t}^{1}(\mathbb{R}^{+})L_{(x,y)}^{2}} &\leqslant &C_{5}~,
\end{eqnarray*}%
then we have the error estimates: 
\begin{eqnarray*}
\int \Vert e^{B}Ve^{-B}\Omega \Vert _{{\mathcal{F}}}\,\,dt &\leqslant &C \\
\int \Vert e^{B}[A,V]e^{-B}\Omega \Vert _{{\mathcal{F}}}\,\,dt &\leqslant &C
\\
\int \Vert e^{B}[A,[A,V]]e^{-B}\Omega \Vert _{{\mathcal{F}}}\,\,dt
&\leqslant &C \\
\int \Vert e^{B}[A,[A,[A,V]]]e^{-B}\Omega \Vert _{{\mathcal{F}}}\,\,dt
&\leqslant &C
\end{eqnarray*}%
where $C$ only depends on $v,$ $\phi ,$ $C_{3}$, $C_{4},$ $C_{5},$ and $%
\left\Vert u_{0}\right\Vert _{L_{(x,y)}^{2}}$.
\end{proposition}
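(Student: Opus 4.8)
The plan is to express each of the four quantities $e^{B}Ve^{-B}\Omega$, $e^{B}[A,V]e^{-B}\Omega$, $e^{B}[A,[A,V]]e^{-B}\Omega$, $e^{B}[A,[A,[A,V]]]e^{-B}\Omega$ explicitly as vectors in $\mathcal{F}$ and bound each Fock component in $L^{2}$. The starting point is the conjugation identity (iii) of the infinitesimal metaplectic representation: $e^{B}a_{x}e^{-B}$ and $e^{B}a_{x}^{\ast}e^{-B}$ are linear combinations of $a$'s and $a^{\ast}$'s with kernel coefficients built from $\cosh(k)$ and $\sinh(k)$, i.e.\ from $\delta+p$ and $u$. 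Since $V$ and the iterated commutators $[A,[A,\cdots]]$ are (by Lemma~\ref{Lemma:Commutators}) explicit polynomials of degree at most six in $a_{x}^{\ast},a_{x}$ with coefficients involving $v_{3}$ and $\phi$, conjugating by $e^{B}$ produces, after expanding, a finite sum of normal-ordered monomials in $a^{\ast},a$ whose coefficient kernels are finite products of $v_{3}$, $\phi$, $u(t,\cdot,\cdot)$ and $p(t,\cdot,\cdot)$ (with some contractions coming from the canonical commutation relations when we normal-order). Acting on $\Omega$, only the purely-creation part survives, so $e^{B}(\text{monomial})e^{-B}\Omega$ lands in a fixed finite set of particle sectors, and its $\mathcal{F}$-norm is the $L^{2}$ norm of the corresponding symmetrized coefficient kernel.

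The second step is to estimate those $L^{2}$ norms, integrated in time. Each kernel is a product of factors each of which is one of: a smooth rapidly-decaying potential factor $v_{0}$ built into $v_{3}$ (which acts as a bounded multiplication/convolution operator in the relevant variables), a copy of $\phi(t,\cdot)$, a copy of $u(t,\cdot,\cdot)$, or a copy of $p(t,\cdot,\cdot)$. The inputs I may assume: $\|u\|_{L_t^{\infty}L^2_{(x,y)}}\le C$ and $\|p\|_{L_t^{\infty}L^2_{(x,y)}}\le C$ (from Theorems~\ref{THM:CorollaryOfPrioriEstimateOfu} and \ref{THM:ExistenceOfU}), the mass/energy conservation of the Hartree flow (so $\|\phi(t)\|_{L^2}=1$ and $\|\nabla\phi(t)\|_{L^2}$, $\|\phi(t)\|_{L^6}$ stay bounded by interpolation), and crucially the dispersive decay $\|\phi(t)\|_{L^6}\le C/t$ for $t\ge 1$ from Theorem~\ref{THM:TheLongTimeBehaviorOfHartree}. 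The structural point quoted in the introduction is that \emph{every} error monomial in the 3-body case carries at least a \emph{pair} of dispersive factors (two $\phi$'s, or a $\phi$ and a $u$/$p$, or two $u$/$p$ factors tied together through a $v_{3}$ which itself is a product of two $v_{0}$'s). Using H\"older in the spatial variables together with Young/Schur bounds for the $v_{0}$ convolutions, I reduce the pointwise-in-$t$ norm of each kernel to a product of $\|\phi(t)\|_{L^6}$-type quantities (at least two of them) times $\|u(t)\|_{L^2}$ or $\|p(t)\|_{L^2}$; the $t^{-2}$-type decay of $\|\phi(t)\|_{L^6}^{2}$ makes each such bound integrable on $[1,\infty)$, and on $[0,1]$ everything is bounded by the uniform estimates. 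The terms not carrying $\phi$ but only $u,p$ (e.g.\ the $p m$-type or contraction terms) are handled using the $L_t^1 L^2$ estimate for $m$ from Lemma~\ref{Lemma:EstimateForM} together with the $L_t^\infty L^2$ bounds. One also needs the inhomogeneous Schr\"odinger bounds $\|(i\partial_t+\Delta)\phi\|_{L_t^1L^2}\le C_3$, $\|(i\partial_t-\Delta_x-\Delta_y)u\|_{L_t^1L^2}\le C_4$, $\|(i\partial_t-\Delta_x+\Delta_y)p\|_{L_t^1L^2}\le C_5$ precisely when a commutator produces a factor that is best controlled via the equation satisfied by $u$ or $p$ rather than by raw $L^2$ size; this is where Lemma~\ref{Lemma:KeyLemmaForErrorTerms} (the $L_t^1$ endpoint-Strichartz estimate) enters, used on the worst term of the form $\|v_{3,1-2,1-3}\,\bar u(t,x_2,x_1)\bar u(t,y_1,z_1)\|_{L_t^1L^2}$.

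The main obstacle, as flagged in the paper, is the single worst monomial — the one appearing in formula~\ref{formula:Worst3-bodyErrorTerm}, carrying two $u$ factors glued by $v_{3}$ — because the error budget only supplies an $L_t^1$ norm in time and there is no $L_t^1$ dispersive estimate for the free Schr\"odinger group. The resolution is not to use dispersive decay of $u$ directly but to invoke Lemma~\ref{Lemma:KeyLemmaForErrorTerms}: write one of the $u$'s via its Duhamel representation against the source $iu_t+ug^T+gu$ (controlled in $L_t^1L^2$ by estimate~\ref{Formula:Eq1ForExistenceOfU}), apply the endpoint Strichartz estimate of Keel--Tao to trade the $L_t^1 L^2$ source for an $L_t^2 L^6$-type control of the propagated piece, and then do Cauchy--Schwarz in time against $\|\phi(t)\|_{L^2_?}$-decay coming from the other factor — this is the ``Cauchy--Schwarz in time done differently'' remark. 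Everything else is bookkeeping: there are finitely many monomials, each with an explicit kernel, and each falls to H\"older plus one of the three standing estimates (uniform $L^2$ size of $u,p$; $L_t^1$ size of $m$; dispersive $L^6$ decay of $\phi$). Summing the finitely many contributions gives the four displayed bounds with $C$ depending only on $v$, $\phi$ (through $C_1,C_2$), $C_3,C_4,C_5$ and $\|u_0\|_{L^2_{(x,y)}}$, which completes the proof.
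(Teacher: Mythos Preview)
Your general framework is right --- conjugate by $e^{B}$ via the metaplectic formula, normal-order, apply to $\Omega$, land in finitely many Fock sectors with explicit kernels --- but your allocation of tools is inverted, and this leaves a genuine gap for $e^{B}Ve^{-B}\Omega$. The operator $V$ carries \emph{no} factors of $\phi$; after conjugation every surviving Fock component is built only from $v$, $u$, and $p$. Your primary mechanism (``reduce the pointwise-in-$t$ norm to a product of $\|\phi(t)\|_{L^{6}}$-type quantities and integrate the $t^{-2}$ decay'') therefore does not apply to any of these terms, not just one. Your fallback (``terms not carrying $\phi$ are handled using the $L_{t}^{1}L^{2}$ estimate for $m$'') is also off: $m$ is an object from the equation for $u$ and does not occur in the expansion of $e^{B}Ve^{-B}\Omega$. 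And a pointwise-in-$t$ bound using only $\|u\|_{L_{t}^{\infty}L^{2}}$, $\|p\|_{L_{t}^{\infty}L^{2}}$ gives boundedness with no decay, so the time integral diverges.

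The paper's route is to use Lemma~\ref{Lemma:KeyLemmaForErrorTerms} as the \emph{main} tool for every term, not as a patch for a single worst monomial. The hypotheses $C_{3},C_{4},C_{5}$ say precisely that each of $\phi,u,p$ is a Duhamel iterate with $L_{t}^{1}L^{2}$ forcing; the lemma then takes any kernel of the form $v(\cdot)f(\cdot)g(\cdot)$ with $f,g\in\{\phi,u,p,\bar u\bar u,\phi\phi,\ldots\}$ and bounds its $L_{t}^{1}L^{2}$ norm by applying endpoint Strichartz to \emph{both} factors (3d $L_{t}^{2}L_{x}^{6}$ on one, 6d $L_{t}^{2}L_{(y_{1},y_{2})}^{3}$ on the other) and then Cauchy--Schwarz in $t$. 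There is no appeal to $\|\phi\|_{L^{6}}$-decay inside the lemma; that decay enters only upstream in establishing $C_{3},C_{4},C_{5}$. Concretely, for $\psi_{6,\delta\delta\delta}=v\,\bar u\bar u\bar u$ one takes $f=\bar u(x_{2},x_{1})$ and $g=\bar u(y_{2},y_{1})\bar u(z_{2},z_{1})$; for $\psi_{4,\delta\delta\delta\delta}$ one takes $f=\bar u(x_{2},x_{1})$ and $g=\bar u(y_{1},z_{1})$; extra $p$'s are peeled off by Cauchy--Schwarz in space and the $L_{t}^{\infty}L^{2}$ bound. Your description of the lemma's mechanism (``Strichartz against $\phi$-decay from the other factor'') should be corrected to ``Strichartz against Strichartz''.
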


\begin{remark}
We can prove%
\begin{equation*}
\Vert \left( i\frac{\partial }{\partial t}+\Delta _{x}\right) \phi \Vert
_{L_{t}^{1}(\mathbb{R}^{+})L_{x}^{2}}\leqslant C.
\end{equation*}%
with the same method to show estimate \ref{Estimate:CanBeUsedReplacingU=Fi}.
\end{remark}

\begin{remark}
Theorem \ref{THM:MorePrioriEstimateOfu} shows that $C_{4}$, $C_{5}$ depends
only on $v,$ $C_{1},\ C_{2}$ and $\left\Vert u_{0}\right\Vert
_{L_{(x,y)}^{2}}$. So $C$ here is determined by $v,$ $C_{1},\ C_{2}$ and $%
\left\Vert u_{0}\right\Vert _{L_{(x,y)}^{2}}$.
\end{remark}

\begin{remark}
For Theorem \ref{THM:MainTheorem}, we take $k(0,x,y)=0$ i.e. $u_{0}=0.$
\end{remark}

Ideally, we would like to prove Proposition \ref%
{Prop:ErrorEstimateConclusion} in complete details. However,%
\begin{eqnarray*}
e^{B}a_{x_{0}}^{\ast }e^{-B} &=&e^{B}%
\begin{pmatrix}
a_{x} & a_{x}^{\ast }%
\end{pmatrix}%
\begin{pmatrix}
0 \\ 
1%
\end{pmatrix}%
e^{-B}=%
\begin{pmatrix}
a_{x} & a_{x}^{\ast }%
\end{pmatrix}%
e^{K}%
\begin{pmatrix}
0 \\ 
1%
\end{pmatrix}
\\
&=&\int \left( u(x_{1},x_{0})a_{x_{0}}+\overline{\cosh (k)}%
(x_{1},x_{0})a_{x_{0}}^{\ast }\right) dx_{1}, \\
e^{B}a_{x_{0}}e^{-B} &=&e^{B}%
\begin{pmatrix}
a_{x} & a_{x}^{\ast }%
\end{pmatrix}%
\begin{pmatrix}
1 \\ 
0%
\end{pmatrix}%
e^{-B}=%
\begin{pmatrix}
a_{x} & a_{x}^{\ast }%
\end{pmatrix}%
e^{K}%
\begin{pmatrix}
1 \\ 
0%
\end{pmatrix}
\\
&=&\int \left( \cosh (k)(x_{2},x_{0})a_{x_{2}}+\overline{u}%
(x_{2},x_{0})a_{x_{2}}^{\ast }\right) dx_{2},
\end{eqnarray*}%
and%
\begin{equation*}
\cosh (k)(x,y)=\delta (x-y)+p(x,y),
\end{equation*}%
their products generate a large number of terms. The fact that we will
always commute the annihilations to the right, e.g. $a_{x_{1}}^{\ast
}a_{y_{2}}a_{z_{2}}^{\ast }=\delta (y_{2}-z_{2})a_{x_{1}}^{\ast
}+a_{x_{1}}^{\ast }a_{z_{2}}^{\ast }a_{y_{2}},$ to avoid $k(x,x)$ or related
traces, produces even more terms. Hence it is impractical to list every
single term in $e^{B}Ve^{-B}\Omega $ etc., instead, we prove a key lemma and
do a typical estimate.

\begin{lemma}
\label{Lemma:KeyLemmaForErrorTerms}(Key Lemma) Let $x_{1},y_{1},y_{2}\in 
\mathbb{R}^{3}$, $x_{2}\in \mathbb{R}^{n_{1}},y_{3}\in \mathbb{R}^{n_{2}}$
with the possibility that $n_{1}$ or $n_{2}$ is zero. Assume $f$, $g$ satisfy%
\begin{eqnarray*}
\Vert \left( i\frac{\partial }{\partial t}\pm \Delta _{x_{1}}\pm \Delta
_{x_{2}}\right) f(t,x_{1},x_{2})\Vert _{L_{t}^{1}(\mathbb{R}^{+})L_{x}^{2}}
&\leqslant &C, \\
\Vert \left( i\frac{\partial }{\partial t}\pm \left( \Delta _{y_{1}}+\Delta
_{y_{2}}\right) \pm \Delta _{y_{3}}\right) g(t,y_{1},y_{2},y_{3})\Vert
_{L_{t}^{1}(\mathbb{R}^{+})L_{y}^{2}} &\leqslant &C.
\end{eqnarray*}%
Moreover suppose $f|_{t=0}$, $g|_{t=0}\in L^{2}.$

Then%
\begin{equation*}
\int dt\left( \int v^{2}(x_{1}-y_{1},x_{1}-y_{2})\left\vert
f(t,x_{1},x_{2})\right\vert ^{2}\left\vert g(t,y_{1},y_{2},y_{3})\right\vert
^{2}dx_{1}dx_{2}dy_{1}dy_{2}dy_{3}\right) ^{\frac{1}{2}}\leqslant C.
\end{equation*}
\end{lemma}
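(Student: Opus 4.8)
The plan is to reduce the estimate to the endpoint Strichartz estimate of Keel and Tao \cite{KeelAndTao} together with an elementary bilinear inequality exploiting the decay of $v_{0}$, and then to close the time integration with a single Cauchy--Schwarz in $t$. First I would remove the cross terms in $v^{2}$: by formula \ref{Formula:DefinitonOfv-epsilon} and $(a+b+c)^{2}\leqslant 3(a^{2}+b^{2}+c^{2})$,
\begin{equation*}
v^{2}(x_{1}-y_{1},x_{1}-y_{2})\leqslant 3\left( v_{0}^{2}(x_{1}-y_{1})v_{0}^{2}(x_{1}-y_{2})+v_{0}^{2}(x_{1}-y_{1})v_{0}^{2}(y_{1}-y_{2})+v_{0}^{2}(x_{1}-y_{2})v_{0}^{2}(y_{1}-y_{2})\right) ,
\end{equation*}
so, using $(\sum T_{i})^{1/2}\leqslant \sum T_{i}^{1/2}$, it suffices to treat each of the three model weights $w$, all of which are products of two translates of $v_{0}^{2}$, a nonnegative function which (since $v_{0}$ is smooth and rapidly decaying) lies in $L^{p}(\mathbb{R}^{3})$ for every $p\in[1,\infty]$. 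The two ingredients are: (a) the endpoint Strichartz bounds $\Vert f\Vert _{L_{t}^{2}(\mathbb{R}^{+})L_{x_{1}}^{6}L_{x_{2}}^{2}}\leqslant C$ and $\Vert g\Vert _{L_{t}^{2}(\mathbb{R}^{+})L_{(y_{1},y_{2})}^{3}L_{y_{3}}^{2}}\leqslant C$; and (b) the pointwise-in-time bilinear inequality $(\int w\,|f(t)|^{2}|g(t)|^{2})^{1/2}\leqslant C\Vert f(t)\Vert _{L_{x_{1}}^{6}L_{x_{2}}^{2}}\Vert g(t)\Vert _{L_{(y_{1},y_{2})}^{3}L_{y_{3}}^{2}}=:a(t)b(t)$. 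Granting these, $\int_{0}^{\infty }a(t)b(t)\,dt\leqslant \Vert a\Vert _{L_{t}^{2}}\Vert b\Vert _{L_{t}^{2}}\leqslant C$.

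For (a), write $f_{0}=f|_{t=0}$ and $F=(1/i)(i\partial _{t}\pm \Delta _{x_{1}}\pm \Delta _{x_{2}})f$, so by Duhamel $f(t)=e^{it(\pm \Delta _{x_{1}}\pm \Delta _{x_{2}})}f_{0}-i\int_{0}^{t}e^{i(t-s)(\pm \Delta _{x_{1}}\pm \Delta _{x_{2}})}F(s)\,ds$. Since $e^{it(\pm \Delta _{x_{1}}\pm \Delta _{x_{2}})}=e^{\pm it\Delta _{x_{1}}}\otimes e^{\pm it\Delta _{x_{2}}}$, with $\Vert e^{\pm it\Delta _{x_{1}}}\Vert _{L_{x_{1}}^{1}\rightarrow L_{x_{1}}^{\infty }}\lesssim |t|^{-3/2}$ and $e^{\pm it\Delta _{x_{2}}}$ unitary on $L_{x_{2}}^{2}$, the propagator obeys the partial dispersive bound $\Vert e^{it(\pm \Delta _{x_{1}}\pm \Delta _{x_{2}})}\psi \Vert _{L_{x_{1}}^{\infty }L_{x_{2}}^{2}}\lesssim |t|^{-3/2}\Vert \psi \Vert _{L_{x_{1}}^{1}L_{x_{2}}^{2}}$, uniformly in the choice of signs. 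Feeding this into the Keel--Tao endpoint theorem applied in the $x_{1}$ variable (with $L_{x_{2}}^{2}$-valued functions) gives, for the $3$-dimensional Schr\"{o}dinger endpoint pair $(2,6)$ together with the easy inhomogeneous estimate (forcing in $L_{t}^{1}L^{2}$, from the homogeneous one and Minkowski), $\Vert f\Vert _{L_{t}^{2}L_{x_{1}}^{6}L_{x_{2}}^{2}}\lesssim \Vert f_{0}\Vert _{L^{2}}+\Vert F\Vert _{L_{t}^{1}L^{2}}\leqslant C$. The same argument for $g$, where the dispersive variable is the pair $(y_{1},y_{2})\in \mathbb{R}^{6}$ and the kernel decays like $|t|^{-3}$, uses the $\mathbb{R}^{6}$ endpoint pair $(2,3)$ and gives $\Vert g\Vert _{L_{t}^{2}L_{(y_{1},y_{2})}^{3}L_{y_{3}}^{2}}\leqslant C$. (If $n_{1}=0$ or $n_{2}=0$ these are the ordinary Strichartz estimates on $\mathbb{R}^{3}$, resp. $\mathbb{R}^{6}$.)

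For (b), set $F(t,x_{1})=\Vert f(t,x_{1},\cdot )\Vert _{L_{x_{2}}^{2}}$ and $G(t,y_{1},y_{2})=\Vert g(t,y_{1},y_{2},\cdot )\Vert _{L_{y_{3}}^{2}}$; integrating out $x_{2}$ and $y_{3}$ reduces (b) to the real-variable inequality $\int w\,F^{2}G^{2}\leqslant C\Vert F\Vert _{L^{6}(\mathbb{R}^{3})}^{2}\Vert G\Vert _{L^{3}(\mathbb{R}^{6})}^{2}$. For $w=v_{0}^{2}(x_{1}-y_{1})v_{0}^{2}(x_{1}-y_{2})$ I would write the left side as $\int F(x_{1})^{2}\Phi (x_{1})\,dx_{1}\leqslant \Vert F\Vert _{L^{6}}^{2}\Vert \Phi \Vert _{L_{x_{1}}^{3/2}}$, with $\Phi (x_{1})=\int v_{0}^{2}(x_{1}-y_{1})v_{0}^{2}(x_{1}-y_{2})G(y_{1},y_{2})^{2}\,dy_{1}dy_{2}$, and then bound $\Vert \Phi \Vert _{L_{x_{1}}^{3/2}}$ by first applying H\"{o}lder in $y_{1}$ (exponents $3$ and $3/2$) and then Young's inequality $L^{1}\ast L^{3/2}\rightarrow L^{3/2}$ in the $y_{2}\mapsto x_{1}$ convolution, obtaining $\Vert \Phi \Vert _{L_{x_{1}}^{3/2}}\leqslant \Vert v_{0}^{2}\Vert _{L^{3}}\Vert v_{0}^{2}\Vert _{L^{1}}\Vert G\Vert _{L^{3}(\mathbb{R}^{6})}^{2}$. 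The weights containing a factor $v_{0}^{2}(y_{1}-y_{2})$ are handled by the same juggle, first convolving $v_{0}^{2}(x_{1}-y_{1})$ against $F(x_{1})^{2}$ in $x_{1}$ and $v_{0}^{2}(y_{1}-y_{2})$ against $G^{2}$ in $y_{2}$; every $L^{p}(\mathbb{R}^{3})$ norm of $v_{0}^{2}$ that appears (only $p=1,3$) is finite because $v_{0}$ is smooth and rapidly decaying.

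The step I expect to be the main obstacle is (b): one must never restrict $g$, an $L^{3}(\mathbb{R}^{6})$-type object, to a lower-dimensional diagonal — a naive use of Minkowski's inequality on $\Phi $ does exactly that — so the estimate of $\Vert \Phi \Vert _{L_{x_{1}}^{3/2}}$ has to be arranged as honest convolutions against the $L^{1}\cap L^{\infty }$ weight $v_{0}^{2}$, and it is this localization that turns the $L_{t}^{1}L_{x}^{2}$ control of the forcings into a usable $L_{t}^{1}$ bound. It is equally essential that the time integration be closed by the $L_{t}^{2}\times L_{t}^{2}$ Cauchy--Schwarz, which forces the use of precisely the $L_{t}^{2}$-endpoint Strichartz spaces $L_{t}^{2}L_{x_{1}}^{6}$ and $L_{t}^{2}L_{(y_{1},y_{2})}^{3}$ — any non-endpoint admissible pair would leave a mismatched power of $t$ — and this is exactly where the endpoint estimate of \cite{KeelAndTao} is indispensable, as anticipated in the introduction.
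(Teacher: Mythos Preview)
Your proof is correct and follows essentially the same route as the paper: expand $v^{2}$ into three model weights, apply H\"older together with the convolution estimate $\Vert \int v_{0}^{2}(x-y)v_{0}^{2}(x-z)h(y,z)\,dydz\Vert_{L^{3/2}_{x}}\leqslant C\Vert h\Vert_{L^{3/2}_{y,z}}$ (which is exactly the paper's ``Claim''), use Cauchy--Schwarz in $t$, and close with the endpoint Strichartz pairs $L^{2}_{t}L^{6}_{x_{1}}$ and $L^{2}_{t}L^{3}_{(y_{1},y_{2})}$. The only cosmetic difference is in how the passive variables $x_{2},y_{3}$ are handled: you invoke a vector-valued (mixed-norm $L^{6}_{x_{1}}L^{2}_{x_{2}}$) endpoint Strichartz estimate via the Keel--Tao machinery, whereas the paper achieves the same thing by a Plancherel substitution $\int dx_{2}\,|e^{it(\pm\Delta_{x_{1}}\pm\Delta_{x_{2}})}f|^{2}=\int d\xi_{2}\,|e^{\pm it\Delta_{x_{1}}}f'_{\xi_{2}}(x_{1})|^{2}$ followed by Minkowski and the scalar $3$d endpoint estimate applied frequency-by-frequency in $\xi_{2}$; the two arguments are equivalent.
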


\begin{remark}
Specializing to the case $n_{1},n_{2}=0,3,$or $6$, we will apply Lemma \ref%
{Lemma:KeyLemmaForErrorTerms} to prove Proposition \ref%
{Prop:ErrorEstimateConclusion}.
\end{remark}

In addition to the endpoint Strichartz estimates \cite{KeelAndTao} which are
necessary, we need the following estimate to prove Lemma \ref%
{Lemma:KeyLemmaForErrorTerms}.

\begin{claim}
\bigskip \label{Lemma:TraceLemmaForErrorTermsEstimates}%
\begin{equation*}
\left\Vert \int v_{0}^{2}(x-y)v_{0}^{2}(x-z)f(y,z)dydz\right\Vert _{L^{\frac{%
3}{2}}(\mathbb{R}^{3},dx)}\leqslant C\left\Vert f\right\Vert _{L^{\frac{3}{2}%
}(\mathbb{R}^{6},dydz)}
\end{equation*}
\end{claim}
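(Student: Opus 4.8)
The plan is to read the left-hand side as an operator built from two convolutions with the fixed kernel $w:=v_0^2$, and to split the two factors of $w$ between Young's inequality and H\"older's inequality. Write $Tf(x):=\int v_0^2(x-y)\,v_0^2(x-z)\,f(y,z)\,dy\,dz$; by the triangle inequality we may assume $f\geqslant 0$, and then all applications of Fubini below are justified by nonnegativity. First perform only the $y$-integration: $Tf(x)=\int v_0^2(x-z)\,G(x,z)\,dz$, where for almost every fixed $z$ the function $G(\cdot,z)=w*f(\cdot,z)$ is a genuine convolution on $\mathbb{R}^3$. It is essential to keep this convolution intact: a naive H\"older estimate in the full variable $(y,z)$ would only yield a useless $L^\infty_x$ bound on an unbounded domain, so the convolution structure in at least one slot must be exploited.

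Next I would apply H\"older in the $z$ variable with exponents $3$ and $3/2$, obtaining $|Tf(x)|\leqslant \|v_0^2\|_{L^3(\mathbb{R}^3)}\,\big(\int |G(x,z)|^{3/2}\,dz\big)^{2/3}$. Taking the $L^{3/2}$ norm in $x$ and using Fubini — the exponent identity $\tfrac23\cdot\tfrac32=1$ is exactly what makes this step close — gives $\|Tf\|_{L^{3/2}(\mathbb{R}^3_x)}\leqslant \|v_0^2\|_{L^3(\mathbb{R}^3)}\,\|G\|_{L^{3/2}(\mathbb{R}^6)}$. Then, for a.e.\ $z$, Young's convolution inequality on $\mathbb{R}^3$ yields $\|G(\cdot,z)\|_{L^{3/2}(\mathbb{R}^3_x)}\leqslant \|v_0^2\|_{L^1(\mathbb{R}^3)}\,\|f(\cdot,z)\|_{L^{3/2}(\mathbb{R}^3_y)}$; raising this to the power $3/2$, integrating in $z$, and applying Fubini gives $\|G\|_{L^{3/2}(\mathbb{R}^6)}\leqslant \|v_0^2\|_{L^1(\mathbb{R}^3)}\,\|f\|_{L^{3/2}(\mathbb{R}^6)}$. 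Combining the two bounds proves the claim with $C=\|v_0^2\|_{L^1(\mathbb{R}^3)}\,\|v_0^2\|_{L^3(\mathbb{R}^3)}=\|v_0\|_{L^2}^2\,\|v_0\|_{L^6}^2$, which is finite because $v_0$ is a smooth, rapidly decaying potential, so $v_0^2\in L^1\cap L^3$. (By the $y\leftrightarrow z$ symmetry of the kernel the two factors could equally well be split the other way.)

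There is no substantial obstacle here: the argument is elementary, and the only care needed is the bookkeeping of the H\"older/Young exponents and the order of integration. The one genuine point — already flagged above — is that one must not collapse the convolution structure, since otherwise the estimate fails on $\mathbb{R}^3$. For the mildly singular potentials of formula \ref{def:singular potential}, where $v_0^2(x)\sim|x|^{-2(1-\varepsilon)}$ with $2(1-\varepsilon)\in(1,2)$ fails to lie in $L^3_{\mathrm{loc}}$, one would instead replace the H\"older step for that factor by the Hardy--Littlewood--Sobolev inequality (the kernel being an admissible Riesz-type kernel), keeping the Young step for the other factor; but since the detailed estimates of this paper are carried out for smooth $v_0$, the Young--H\"older version above is all that is required.
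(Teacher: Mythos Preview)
Your proof is correct and follows essentially the same route as the paper: H\"older in the $z$-variable with exponents $(3,3/2)$ followed by Young's convolution inequality $L^1*L^{3/2}\to L^{3/2}$ in the $y$-variable, yielding the identical constant $C=\|v_0^2\|_{L^1}\|v_0^2\|_{L^3}$. The only difference is cosmetic ordering --- the paper applies H\"older in $z$ pointwise in $(x,y)$ before writing the remaining $y$-integral as a convolution, whereas you first name the convolution $G(x,z)$ and then apply H\"older --- but the two arguments are the same.
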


\begin{proof}
\begin{eqnarray*}
&&\left\Vert \int v_{0}^{2}(x-y)v_{0}^{2}(x-z)f(y,z)dydz\right\Vert _{L^{%
\frac{3}{2}}(\mathbb{R}^{3},dx)} \\
&\leqslant &\left\Vert \int v_{0}^{2}(x-y)\left\Vert v_{0}^{2}\right\Vert
_{L^{3}}\left\Vert f(y,\cdot )\right\Vert _{L^{\frac{3}{2}}}dy\right\Vert
_{L^{\frac{3}{2}}} \\
&\leqslant &\left\Vert v_{0}^{2}\right\Vert _{L^{1}}\left\Vert
v_{0}^{2}\right\Vert _{L^{3}}\left\Vert f\right\Vert _{L^{\frac{3}{2}%
}}=C\left\Vert f\right\Vert _{L^{\frac{3}{2}}}.
\end{eqnarray*}
\end{proof}

We can prove Lemma \ref{Lemma:KeyLemmaForErrorTerms} now.

\begin{proof}
By Duhamel's principle, it suffices to prove%
\begin{eqnarray*}
&&\int dt\bigg(\int \left\vert e^{it(\pm \triangle _{x_{1}}\pm \triangle
_{x_{2}})}f(x_{1},x_{2})\right\vert ^{2}\left\vert e^{it(\pm \left( \Delta
_{y_{1}}+\Delta _{y_{2}}\right) \pm \Delta
_{y_{3}})}g(y_{1},y_{2},y_{3})\right\vert ^{2} \\
&&v^{2}(x_{1}-y_{1},x_{1}-y_{2})dx_{1}dx_{2}dy_{1}dy_{2}dy_{3}\bigg)^{\frac{1%
}{2}}\leqslant C\left\Vert f\right\Vert _{L^{2}}\left\Vert g\right\Vert
_{L^{2}}
\end{eqnarray*}%
because we have 
\begin{eqnarray*}
&&\Vert \left( i\frac{\partial }{\partial t}\pm \Delta _{x_{1}}\pm \Delta
_{x_{2}}\pm \left( \Delta _{y_{1}}+\Delta _{y_{2}}\right) \pm \Delta
_{y_{3}}\right) f(t,x_{1},x_{2})g(t,y_{1},y_{2},y_{3})\Vert _{L_{t}^{1}(%
\mathbb{R}^{+})L_{(x,y)}^{2}} \\
&\leqslant &C
\end{eqnarray*}%
with $f|_{t=0},g|_{t=0}\in L^{2}$ which also guarantees $f,g\in
L_{t}^{\infty }L_{x}^{2}$ by the energy estimate.

The proof is divided into two steps.

Step I: Write the partial Fourier transform to be%
\begin{equation*}
f_{\xi _{2}}^{\prime }(x_{1})=\int e^{ix_{1}\xi _{1}}\hat{f}(\xi _{1},\xi
_{2})d\xi _{1},
\end{equation*}%
then we have%
\begin{eqnarray*}
&&\int dx_{2}\left\vert e^{it(\pm \triangle _{x_{1}}\pm \triangle
_{x_{2}})}f(x_{1},x_{2})\right\vert ^{2} \\
&=&\int dx_{2}\int d\xi _{1}d\xi _{1}^{\prime }d\xi _{2}d\xi _{2}^{\prime
}e^{ix_{1}\left( \xi _{1}-\xi _{1}^{\prime }\right) }e^{it(\pm 1)(\left\vert
\xi _{1}\right\vert ^{2}-\left\vert \xi _{1}^{\prime }\right\vert
^{2})}e^{ix_{2}\left( \xi _{2}-\xi _{2}^{\prime }\right) }e^{it(\pm
1)(\left\vert \xi _{2}\right\vert ^{2}-\left\vert \xi _{2}^{\prime
}\right\vert ^{2})}\hat{f}(\xi _{1},\xi _{2})\overline{\hat{f}(\xi
_{1}^{\prime },\xi _{2}^{\prime })} \\
&=&\int d\xi _{1}d\xi _{1}^{\prime }d\xi _{2}d\xi _{2}^{\prime
}e^{ix_{1}\left( \xi _{1}-\xi _{1}^{\prime }\right) }e^{it(\pm 1)(\left\vert
\xi _{1}\right\vert ^{2}-\left\vert \xi _{1}^{\prime }\right\vert
^{2})}\delta (\xi _{2}-\xi _{2}^{\prime })e^{it(\pm 1)(\left\vert \xi
_{2}\right\vert ^{2}-\left\vert \xi _{2}^{\prime }\right\vert ^{2})}\hat{f}%
(\xi _{1},\xi _{2})\overline{\hat{f}(\xi _{1}^{\prime },\xi _{2}^{\prime })}
\\
&=&\int d\xi _{2}\int d\xi _{1}d\xi _{1}^{\prime }e^{ix_{1}\left( \xi
_{1}-\xi _{1}^{\prime }\right) }e^{it(\pm 1)(\left\vert \xi _{1}\right\vert
^{2}-\left\vert \xi _{1}^{\prime }\right\vert ^{2})}\hat{f}(\xi _{1},\xi
_{2})\overline{\hat{f}(\xi _{1}^{\prime },\xi _{2})} \\
&=&\int d\xi _{2}\left\vert e^{\pm it\triangle _{x_{1}}}f_{\xi _{2}}^{\prime
}(x_{1})\right\vert ^{2}.
\end{eqnarray*}

Step II: Let $\xi _{2},\eta _{3}$ be the phase variables corresponding to $%
x_{2},y_{3}.$ Utilizing H\"{o}lder and Claim \ref%
{Lemma:TraceLemmaForErrorTermsEstimates}, we get 
\begin{eqnarray*}
&&\int dt(\int \left\vert e^{it(\pm \triangle _{x_{1}}\pm \triangle
_{x_{2}})}f(x_{1},x_{2})\right\vert ^{2}\left\vert e^{it(\pm \left( \Delta
_{y_{1}}+\Delta _{y_{2}}\right) \pm \Delta
_{y_{3}})}g(y_{1},y_{2},y_{3})\right\vert ^{2} \\
&&v^{2}(x_{1}-y_{1},x_{1}-y_{2})dx_{1}dx_{2}dy_{1}dy_{2}dy_{3})^{\frac{1}{2}}
\\
&\leqslant &3\int dt\left( \int \left\vert e^{\pm it\triangle
_{x_{1}}}f_{\xi _{2}}^{\prime }(x_{1})\right\vert ^{2}\left\vert e^{\pm
it\left( \Delta _{y_{1}}+\Delta _{y_{2}}\right) }g_{\eta _{3}}^{\prime
}(y_{1},y_{2})\right\vert
^{2}v_{0}^{2}(x_{1}-y_{1})v_{0}^{2}(x_{1}-y_{2})dx_{1}dy_{1}dy_{2}d\xi
_{2}d\eta _{3}\right) ^{\frac{1}{2}} \\
&\leqslant &C\int \left( \int d\eta _{3}\left\Vert \int \left\vert e^{\pm
it\left( \Delta _{y_{1}}+\Delta _{y_{2}}\right) }g_{\eta _{3}}^{\prime
}(y_{1},y_{2})\right\vert
^{2}v_{0}^{2}(x_{1}-y_{1})v_{0}^{2}(x_{1}-y_{2})dy_{1}dy_{2}\right\Vert
_{L_{x_{1}}^{\frac{3}{2}}}\right) ^{\frac{1}{2}} \\
&&\left( \int d\xi _{2}\left\Vert \left\vert e^{\pm it\triangle
_{x_{1}}}f_{\xi _{2}}^{\prime }(x_{1})\right\vert ^{2}\right\Vert
_{L_{x_{1}}^{3}}\right) ^{\frac{1}{2}}dt \\
&\leqslant &C\int dt\left( \int d\xi _{2}\left\Vert e^{\pm it\triangle
_{x_{1}}}f_{\xi _{2}}^{\prime }(x_{1})\right\Vert
_{L_{x_{1}}^{6}}^{2}\right) ^{\frac{1}{2}}\left( \int d\eta _{3}\left\Vert
e^{\pm it\left( \Delta _{y_{1}}+\Delta _{y_{2}}\right) }g_{\eta
_{3}}^{\prime }(y_{1},y_{2})\right\Vert _{L_{(y_{1},y_{2})}^{3}}^{2}\right)
^{\frac{1}{2}}\text{ } \\
&\leqslant &C\left( \int dt\int d\xi _{2}\left\Vert e^{\pm it\triangle
_{x_{1}}}f_{\xi _{2}}^{\prime }(x_{1})\right\Vert
_{L_{x_{1}}^{6}}^{2}\right) ^{\frac{1}{2}}\left( \int dt\int d\eta
_{3}\left\Vert e^{\pm it\left( \Delta _{y_{1}}+\Delta _{y_{2}}\right)
}g_{\eta _{3}}^{\prime }(y_{1},y_{2})\right\Vert
_{L_{(y_{1},y_{2})}^{3}}^{2}\right) ^{\frac{1}{2}} \\
&\leqslant &C\left\Vert f\right\Vert _{L^{2}}\left\Vert g\right\Vert _{L^{2}}%
\text{ }\left( \text{endpoint Strichartz \cite{KeelAndTao}}\right)
\end{eqnarray*}%
The endpoint Strichartz estimates we used in the last line are the 3d $%
L_{t}^{2}L_{x}^{6}$ and the 6d $L_{t}^{2}L_{x}^{3}$ estimates.
\end{proof}

\subsection{Error term $e^{B}Ve^{-B}\Omega $, An Example}

Write 
\begin{eqnarray}
&&e^{B}Ve^{-B}  \notag \\
&=&\int dx_{0}dy_{0}dz_{0}v(x_{0}-y_{0},x_{0}-z_{0})  \notag \\
&&e^{B}a_{x_{0}}^{\ast }e^{-B}e^{B}a_{y_{0}}^{\ast
}e^{-B}e^{B}a_{z_{0}}^{\ast
}e^{-B}e^{B}a_{x_{0}}e^{-B}e^{B}a_{y_{0}}e^{-B}e^{B}a_{z_{0}}e^{-B}
\label{creation and annilation products in eVe}
\end{eqnarray}%
\begin{eqnarray}
&=&\int
dx_{0}dy_{0}dz_{0}dx_{1}dy_{1}dz_{1}dx_{2}dy_{2}dz_{2}v(x_{0}-y_{0},x_{0}-z_{0})
\notag \\
&&\left( u(x_{1},x_{0})a_{x_{1}}+\overline{\cosh (k)}%
(x_{1},x_{0})a_{x_{1}}^{\ast }\right) \left( u(y_{1},y_{0})a_{y_{1}}+%
\overline{\cosh (k)}(y_{1},y_{0})a_{y_{1}}^{\ast }\right)  \notag \\
&&\left( u(z_{1},z_{0})a_{z_{1}}+\overline{\cosh (k)}%
(z_{1},z_{0})a_{z_{1}}^{\ast }\right) \left( \cosh (k)(x_{2},x_{0})a_{x_{2}}+%
\overline{u}(x_{2},x_{0})a_{x_{2}}^{\ast }\right)  \notag \\
&&\left( \cosh (k)(y_{2},y_{0})a_{y_{2}}+\overline{u}%
(y_{2},y_{0})a_{y_{2}}^{\ast }\right) \left( \cosh (k)(z_{2},z_{0})a_{z_{2}}+%
\overline{u}(z_{2},z_{0})a_{z_{2}}^{\ast }\right)  \notag
\end{eqnarray}%
Because we are applying $e^{B}Ve^{-B}$ to $\Omega $, we neglect the terms in
product \ref{creation and annilation products in eVe} which have more
annihilation operators than creation operators. It is also unnecessary to
consider terms ending with $a_{z_{2}}$ or $a_{x_{2}}a_{y_{2}}a_{z_{2}}^{\ast
}$. These facts imply that $e^{B}Ve^{-B}\Omega $ has nonzero elements solely
in its 0th, 2nd, 4th and 6th Fock space slots. To exemplify the use of Lemma %
\ref{Lemma:KeyLemmaForErrorTerms}, we estimate two typical terms: the order
6 term%
\begin{eqnarray*}
&&\int dx_{0}dy_{0}dz_{0}dx_{1}dy_{1}dz_{1}dx_{2}dy_{2}dz_{2} \\
&&v(x_{0}-y_{0},x_{0}-z_{0})\overline{\cosh (k)}(x_{1},x_{0})\overline{\cosh
(k)}(y_{1},y_{0})\overline{\cosh (k)} \\
&&(z_{1},z_{0})\overline{u}(x_{2},x_{0})\overline{u}(y_{2},y_{0})\overline{u}%
(z_{2},z_{0})a_{x_{1}}^{\ast }a_{y_{1}}^{\ast }a_{z_{1}}^{\ast
}a_{x_{2}}^{\ast }a_{y_{2}}^{\ast }a_{z_{2}}^{\ast }
\end{eqnarray*}%
which contributes to the 6th Fock space slot of $e^{B}Ve^{-B}\Omega $ as%
\begin{eqnarray*}
&&\psi _{6}(x_{1},y_{1},z_{1},x_{2},y_{2},z_{2}) \\
&=&\int dx_{0}dy_{0}dz_{0}v(x_{0}-y_{0},x_{0}-z_{0}) \\
&&\overline{\cosh (k)}(x_{1},x_{0})\overline{\cosh (k)}(y_{1},y_{0})%
\overline{\cosh (k)}(z_{1},z_{0}) \\
&&\overline{u}(x_{2},x_{0})\overline{u}(y_{2},y_{0})\overline{u}%
(z_{2},z_{0}),
\end{eqnarray*}%
and an order 4 term 
\begin{eqnarray*}
&&\int
dx_{0}dy_{0}dz_{0}dx_{1}dy_{1}dz_{1}dx_{2}dy_{2}dz_{2}v(x_{0}-y_{0},x_{0}-z_{0})
\\
&&\overline{\cosh (k)}(x_{1},x_{0})\overline{\cosh (k)}(y_{1},y_{0})%
\overline{\cosh (k)}(z_{1},z_{0})\overline{u}(x_{2},x_{0})\cosh
(k)(y_{2},y_{0})\overline{u}(z_{2},z_{0}) \\
&&a_{x_{1}}^{\ast }a_{y_{1}}^{\ast }a_{z_{1}}^{\ast }a_{x_{2}}^{\ast
}a_{y_{2}}a_{z_{2}}^{\ast } \\
&=&\int
dx_{0}dy_{0}dz_{0}dx_{1}dy_{1}dz_{1}dx_{2}dy_{2}dz_{2}v(x_{0}-y_{0},x_{0}-z_{0})
\\
&&\overline{\cosh (k)}(x_{1},x_{0})\overline{\cosh (k)}(y_{1},y_{0})%
\overline{\cosh (k)}(z_{1},z_{0})\overline{u}(x_{2},x_{0})\cosh
(k)(y_{2},y_{0})\overline{u}(z_{2},z_{0}) \\
&&(\delta (y_{2}-z_{2})a_{x_{1}}^{\ast }a_{y_{1}}^{\ast }a_{z_{1}}^{\ast
}a_{x_{2}}^{\ast }+a_{x_{1}}^{\ast }a_{y_{1}}^{\ast }a_{z_{1}}^{\ast
}a_{x_{2}}^{\ast }a_{z_{2}}^{\ast }a_{y_{2}})
\end{eqnarray*}%
which contributes to the 4th Fock space slot of $e^{B}Ve^{-B}\Omega $ as%
\begin{eqnarray*}
\psi _{4}(x_{1},y_{1},z_{1},x_{2}) &=&\int
dx_{0}dy_{0}dz_{0}dy_{2}v(x_{0}-y_{0},x_{0}-z_{0}) \\
&&\overline{\cosh (k)}(x_{1},x_{0})\overline{\cosh (k)}(y_{1},y_{0})%
\overline{\cosh (k)}(z_{1},z_{0}) \\
&&\cosh (k)(y_{2},y_{0})\overline{u}(x_{2},x_{0})\overline{u}(y_{2},z_{0})
\end{eqnarray*}%
neglecting symmetrization and normalization.

\subsubsection{Estimate of $\protect\psi _{6}$, a triple product involving
one $u$}

Via the fact that%
\begin{equation*}
\cosh (k)(x,y)=\delta (x-y)+p(x,y)
\end{equation*}%
we write out the product in $\psi _{6}$ as%
\begin{equation*}
\psi _{6}=\psi _{6,\delta \delta \delta }+\psi _{6,p\delta \delta }+\psi
_{6,pp\delta }+\psi _{6,ppp}
\end{equation*}%
according to the factors of $\cosh $ carried in each term i.e.%
\begin{eqnarray*}
\psi _{6,\delta \delta \delta } &=&\int v(x_{0}-y_{0},x_{0}-z_{0})\delta
(x_{1}-x_{0})\delta (y_{1}-y_{0})\delta (z_{1}-z_{0})\overline{u}%
(x_{2},x_{0})\overline{u}(y_{2},y_{0})\overline{u}%
(z_{2},z_{0})dx_{0}dy_{0}dz_{0} \\
&=&v(x_{1}-y_{1},x_{1}-z_{1})\overline{u}(x_{2},x_{1})\overline{u}%
(y_{2},y_{1})\overline{u}(z_{2},z_{1})
\end{eqnarray*}%
and%
\begin{equation*}
\psi _{6,ppp}=\int v(x_{0}-y_{0},x_{0}-z_{0})\overline{p}(x_{1},x_{0})%
\overline{p}(y_{1},y_{0})\overline{p}(z_{1},z_{0})\overline{u}(x_{2},x_{0})%
\overline{u}(y_{2},y_{0})\overline{u}(z_{2},z_{0})dx_{0}dy_{0}dz_{0}
\end{equation*}%
etc. We proceed to estimate the worst term:%
\begin{eqnarray*}
&&\int dt\left( \int \left\vert \psi _{6,\delta \delta \delta }\right\vert
^{2}dx_{1}dy_{1}dz_{1}dx_{2}dy_{2}dz_{2}\right) ^{\frac{1}{2}} \\
&=&\int dt\left( \int \left\vert v(x_{1}-y_{1},x_{1}-z_{1})\overline{u}%
(x_{2},x_{1})\overline{u}(y_{2},y_{1})\overline{u}(z_{2},z_{1})\right\vert
^{2}dx_{1}dy_{1}dz_{1}dx_{2}dy_{2}dz_{2}\right) ^{\frac{1}{2}} \\
&\leqslant &C
\end{eqnarray*}%
where $\overline{u}(y_{2},y_{1})\overline{u}(z_{2},z_{1})$ takes the place
of $g$ in Lemma \ref{Lemma:KeyLemmaForErrorTerms}.

For terms in $\psi _{6}$ involving $p,$ we deal with them as the following:
By Cauchy-Schwarz on $dx_{0}dy_{0}dz_{0}$, we obtain%
\begin{eqnarray*}
&&\int \left( \int \left\vert \psi _{6,ppp}\right\vert
^{2}dx_{1}dy_{1}dz_{1}dx_{2}dy_{2}dz_{2}\right) ^{\frac{1}{2}}dt \\
&\leqslant &\sup_{t}(\int \left\vert
p(x_{1},x_{0})p(y_{1},y_{0})p(z_{1},z_{0})\right\vert
^{2}dx_{0}dy_{0}dz_{0}dx_{1}dy_{1}dz_{1})^{\frac{1}{2}} \\
&&\int \left( \int \left\vert
v(x_{0}-y_{0},x_{0}-z_{0})u(x_{2},x_{0})u(y_{2},y_{0})u(z_{2},z_{0})\right%
\vert ^{2}dx_{0}dy_{0}dz_{0}dx_{2}dy_{2}dz_{2}\right) ^{\frac{1}{2}}dt
\end{eqnarray*}%
where the first integral is majorized by the energy estimate of $p,$ the
second integral is the same as the one appearing in $\psi _{6,\delta \delta
\delta }$ and can be taken care of by Lemma \ref{Lemma:KeyLemmaForErrorTerms}%
.

\begin{remark}
In the estimate regarding $\psi _{6,ppp},$ we can do Cauchy-Schwarz in
another way:%
\begin{eqnarray*}
&&\int \left( \int \left\vert \psi _{6,ppp}\right\vert
^{2}dx_{1}dy_{1}dz_{1}dx_{2}dy_{2}dz_{2}\right) ^{\frac{1}{2}}dt \\
&\leqslant &\sup_{t}(\int \left\vert
p(x_{1},x_{0})u(y_{2},y_{0})u(z_{2},z_{0})\right\vert
^{2}dx_{0}dy_{0}dz_{0}dx_{1}dy_{2}dz_{2})^{\frac{1}{2}} \\
&&\int \left( \int \left\vert
v(x_{0}-y_{0},x_{0}-z_{0})u(x_{2},x_{0})p(y_{1},y_{0})p(z_{1},z_{0})\right%
\vert ^{2}dx_{0}dy_{0}dz_{0}dy_{1}dz_{1}dx_{2}\right) ^{\frac{1}{2}}dt
\end{eqnarray*}%
which also works by Lemma \ref{Lemma:KeyLemmaForErrorTerms}. Because $%
\left\Vert u\right\Vert \geqslant \left\Vert p\right\Vert $
\end{remark}

\subsubsection{Estimate of $\protect\psi _{4}$, a double product involving
one $u$}

\begin{eqnarray*}
\psi _{4}(x_{1},y_{1},z_{1},x_{2}) &=&\int
dx_{0}dy_{0}dz_{0}dy_{2}v(x_{0}-y_{0},x_{0}-z_{0}) \\
&&\overline{\cosh (k)}(x_{1},x_{0})\overline{\cosh (k)}(y_{1},y_{0})%
\overline{\cosh (k)}(z_{1},z_{0}) \\
&&\cosh (k)(y_{2},y_{0})\overline{u}(x_{2},x_{0})\overline{u}(y_{2},z_{0}) \\
&=&\psi _{4,\delta \delta \delta \delta }+...+\psi _{4,pppp}
\end{eqnarray*}%
where the worst term is%
\begin{eqnarray}
\psi _{4,\delta \delta \delta \delta } &=&\int
dx_{0}dy_{0}dz_{0}dy_{2}v(x_{0}-y_{0},x_{0}-z_{0})
\label{formula:Worst3-bodyErrorTerm} \\
&&\delta (x_{1}-x_{0})\delta (y_{1}-y_{0})\delta (z_{1}-z_{0})\delta
(y_{2}-y_{0})\overline{u}(x_{2},x_{0})\overline{u}(y_{2},z_{0})  \notag \\
&=&v(x_{1}-y_{1},x_{1}-z_{1})\bar{u}(x_{2},x_{1})\overline{u}(y_{1},z_{1}). 
\notag
\end{eqnarray}%
Letting $\overline{u}(y_{1},z_{1})$ be $g$ in Lemma \ref%
{Lemma:KeyLemmaForErrorTerms}, we derive the desired estimate%
\begin{equation*}
\int dt\left( \int dx_{1}dy_{1}dz_{1}dx_{2}\left\vert \psi _{4,\delta \delta
\delta \delta }\right\vert ^{2}\right) ^{\frac{1}{2}}\leqslant C.
\end{equation*}

\subsection{Remark for all other error terms}

At a glance, we can handle all terms using Lemma \ref%
{Lemma:KeyLemmaForErrorTerms}, except 
\begin{equation*}
\int v(x-y,x-z)\phi (x)\phi (y)\phi (z)a_{x}^{\ast }a_{y}^{\ast }a_{z}^{\ast
}dxdydz
\end{equation*}%
in $[A,[A,[A,V]]]$, since all other terms end with $a$ instead of $a^{\ast }$%
. This observation allows the application of Lemma \ref%
{Lemma:KeyLemmaForErrorTerms}. But Lemma \ref{Lemma:KeyLemmaForErrorTerms}
also applies to 
\begin{equation*}
e^{B}\left( \int v(x-y,x-z)\phi (x)\phi (y)\phi (z)a_{x}^{\ast }a_{y}^{\ast
}a_{z}^{\ast }dxdydz\right) e^{-B}\Omega .
\end{equation*}%
because we can let $\phi (x_{1})$ be $f(x_{1})$, $\phi (y_{1})\phi (y_{2})$
be $g(y_{1},y_{2})$

Therefore we have established Proposition \ref{Prop:ErrorEstimateConclusion}
and thus Theorem \ref{THM:MainTheoremErrorEstimate}.

\section{The Long Time Behavior of The Hartree Equation / Proof of Theorem 
\protect\ref{THM:TheLongTimeBehaviorOfHartree}}

\label{section:AppendixHartree}

In this section, we discuss the Hartree equation \ref%
{Equation:DefocusingHartreeEquation}%
\begin{equation*}
i\frac{\partial }{\partial t}\phi +\triangle \phi -\frac{1}{2}\phi \int
v(x-y,x-z)\left\vert \phi (y)\right\vert ^{2}\left\vert \phi (z)\right\vert
^{2}dydz=0
\end{equation*}%
where%
\begin{equation*}
v(x-y,x-z)=v_{0}(x-y)v_{0}(x-z)+v_{0}(x-y)v_{0}(y-z)+v_{0}(x-z)v_{0}(y-z).
\end{equation*}%
We assume the nonnegative regular potential $v_{0}$ decays fast enough away
from the origin and has the property that 
\begin{equation*}
v_{0}(x)=v_{0}(-x).
\end{equation*}

Throughout this section, we write 
\begin{eqnarray*}
A &=&\int v_{0}(x-y)v_{0}(x-z)\left\vert \phi (y)\right\vert ^{2}\left\vert
\phi (z)\right\vert ^{2}dydz \\
B &=&\int v_{0}(x-y)v_{0}(y-z)\left\vert \phi (y)\right\vert ^{2}\left\vert
\phi (z)\right\vert ^{2}dydz \\
C &=&\int v_{0}(x-z)v_{0}(y-z)\left\vert \phi (y)\right\vert ^{2}\left\vert
\phi (z)\right\vert ^{2}dydz,
\end{eqnarray*}%
for convenience i.e. equation \ref{Equation:DefocusingHartreeEquation}
becomes 
\begin{equation}
i\frac{\partial }{\partial t}\phi +\triangle \phi -\frac{1}{2}\left( A\phi
-B\phi -C\phi \right) =0.  \label{Equation:HartreeEquationInAppendix}
\end{equation}

So (ii) becomes%
\begin{equation*}
E(t)|_{t=0}=\left( \frac{1}{2}\int \left\vert \nabla \phi \right\vert ^{2}+%
\frac{1}{6}\int (A+B+C)\left\vert \phi \right\vert ^{2}\right)
|_{t=0}<\infty \text{ }.
\end{equation*}%
and (i)-(iii) implies%
\begin{equation*}
E_{c}(t)=\int t^{2}\left( \left\vert \nabla (e^{i\frac{\left\vert
x\right\vert ^{2}}{4t}}\phi )\right\vert ^{2}+\frac{1}{6}(A+B+C)\left\vert
\phi \right\vert ^{2}\right) <\infty
\end{equation*}

To prove Theorem \ref{THM:TheLongTimeBehaviorOfHartree}, we are going to
argue that%
\begin{equation*}
\dot{E}_{c}(t)\leqslant 0\text{ for }t\geqslant 1
\end{equation*}%
which leads to%
\begin{equation*}
\left\Vert \phi \right\Vert _{L^{6}}\leqslant C\left\Vert \nabla (e^{i\frac{%
\left\vert x\right\vert ^{2}}{4t}}\phi )\right\Vert _{L^{2}}\leqslant
C\left( \frac{E_{c}(t)}{t^{2}}\right) ^{\frac{1}{2}}\leqslant \frac{C}{t}%
\text{ for }t\geqslant 1.
\end{equation*}%
Here are the details of Theorem \ref{THM:TheLongTimeBehaviorOfHartree}.

\subsection{Conservation of Mass, Momentum, and Energy}

First, it is not difficult to see the conservation law of the $L^{2}$ mass 
\begin{equation}
\partial _{t}\rho -\nabla _{j}p^{j}=0  \label{Conservation:L^2}
\end{equation}%
where%
\begin{equation*}
\rho :=\frac{1}{2}\left\vert \phi \right\vert ^{2}
\end{equation*}%
and%
\begin{equation*}
p_{j}:=\frac{1}{2i}(\phi \nabla _{j}\bar{\phi}-\bar{\phi}\nabla _{j}\phi )
\end{equation*}%
because equation \ref{Equation:HartreeEquationInAppendix} is of the form%
\begin{equation*}
i\frac{\partial }{\partial t}\phi +\triangle \phi =F(\left\vert \phi
\right\vert ^{2})\phi .
\end{equation*}%
Times $-\bar{\phi}$ to equation \ref{Equation:HartreeEquationInAppendix}, we
acquire%
\begin{equation*}
-p_{0}+\frac{1}{2}\sigma -\triangle \rho +(A+B+C)\rho =0
\end{equation*}%
where%
\begin{eqnarray*}
p_{0} &:&=\frac{1}{2i}(\phi \bar{\phi}_{t}-\bar{\phi}\phi _{t}) \\
\sigma &:&=tr(\sigma _{jk})=tr(\nabla _{j}\bar{\phi}\nabla _{k}\phi +\nabla
_{k}\bar{\phi}\nabla _{j}\phi )
\end{eqnarray*}%
Moreover, letting%
\begin{eqnarray*}
\lambda &:&=(-p_{0}+\frac{1}{2}\sigma +\frac{1}{3}(A+B+C)\rho )=\triangle
\rho -\frac{2}{3}(A+B+C)\rho \\
e &:&=\frac{1}{2}\sigma +\frac{1}{3}(A+B+C)\rho
\end{eqnarray*}%
produces the conservation law of energy%
\begin{equation}
\partial _{t}e-\nabla _{j}\sigma _{0}^{j}+l_{0}=0
\label{Conservation:Energy}
\end{equation}%
where%
\begin{eqnarray*}
\sigma _{0}^{j} &:&=\phi _{t}\nabla _{j}\bar{\phi}+\bar{\phi}_{t}\nabla
_{j}\phi \\
l_{0} &=&\frac{2}{3}(A+B+C)\rho _{t}-\frac{1}{3}(A+B+C)_{t}\rho .
\end{eqnarray*}%
A direct computation shows that 
\begin{eqnarray*}
\int A_{t}\rho &=&\int v_{0}(x-y)v_{0}(x-z)(\left\vert \phi (y)\right\vert
^{2}\left\vert \phi (z)\right\vert ^{2})_{t}(\frac{\left\vert \phi
(x)\right\vert ^{2}}{2})dxdydz \\
&=&\int v_{0}(x-y)v_{0}(y-z)\left\vert \phi (y)\right\vert ^{2}\left\vert
\phi (z)\right\vert ^{2}(\frac{\left\vert \phi (x)\right\vert ^{2}}{2}%
)_{t}dxdydz \\
&&+\int v_{0}(x-z)v_{0}(y-z)\left\vert \phi (y)\right\vert ^{2}\left\vert
\phi (z)\right\vert ^{2}(\frac{\left\vert \phi (x)\right\vert ^{2}}{2}%
)_{t}dxdydz
\end{eqnarray*}%
and%
\begin{eqnarray*}
\int B_{t}\rho &=&\int v_{0}(x-y)v_{0}(x-z)\left\vert \phi (y)\right\vert
^{2}\left\vert \phi (z)\right\vert ^{2}(\frac{\left\vert \phi (x)\right\vert
^{2}}{2})_{t}dxdydz \\
&&+\int v_{0}(x-z)v_{0}(y-z)\left\vert \phi (y)\right\vert ^{2}\left\vert
\phi (z)\right\vert ^{2}(\frac{\left\vert \phi (x)\right\vert ^{2}}{2}%
)_{t}dxdydz
\end{eqnarray*}%
\begin{eqnarray*}
\int C_{t}\rho &=&\int v_{0}(x-y)v_{0}(y-z)\left\vert \phi (y)\right\vert
^{2}\left\vert \phi (z)\right\vert ^{2}(\frac{\left\vert \phi (x)\right\vert
^{2}}{2})_{t}dxdydz \\
&&+\int v_{0}(x-y)v_{0}(x-z)\left\vert \phi (y)\right\vert ^{2}\left\vert
\phi (z)\right\vert ^{2}(\frac{\left\vert \phi (x)\right\vert ^{2}}{2}%
)_{t}dxdydz
\end{eqnarray*}%
i.e.%
\begin{equation*}
\int (A+B+C)_{t}\rho =2\int (A+B+C)\rho _{t}
\end{equation*}%
which implies $\int l_{0}=0$ i.e. the conservation of energy 
\begin{equation*}
E(t)=\frac{1}{2}\int \left\vert \nabla \phi \right\vert ^{2}+\frac{1}{6}\int
(A+B+C)\left\vert \phi \right\vert ^{2}
\end{equation*}

Similarly, we derive the conservation law of momentum:%
\begin{equation}
\partial _{t}p_{j}-\nabla _{k}\left\{ \sigma _{j}^{j}-\delta _{j}^{k}\lambda
\right\} +l_{j}=0  \label{Conservation:Momentum}
\end{equation}%
where%
\begin{equation*}
l_{j}:=\frac{2}{3}(A+B+C)\rho _{j}-\frac{1}{3}(A+B+C)_{j}\rho .
\end{equation*}

\subsection{Conformal Identity}

At this point, if we multiply conservation law \ref{Conservation:L^2} by $%
\frac{\left\vert x\right\vert ^{2}}{2},$ \ref{Conservation:Momentum} by $%
tx^{j}$ and \ref{Conservation:Energy} by $t^{2}$ and add the resulting
identities, we obtain the conformal identity: 
\begin{equation*}
\partial _{t}e_{c}-\nabla _{j}\tau ^{j}+r=0
\end{equation*}%
where%
\begin{eqnarray*}
e_{c} &:&=(\frac{\left\vert x\right\vert ^{2}}{2})\rho
+tx^{j}p_{j}+t^{2}e=t^{2}\left( \left\vert \nabla (e^{i\frac{\left\vert
x\right\vert ^{2}}{4t}}\phi )\right\vert ^{2}+\frac{1}{3}(A+B+C)\rho \right)
\\
\tau ^{j} &:&=(\frac{\left\vert x\right\vert ^{2}}{2})p^{j}+tx^{k}\sigma
_{k}^{j}+tx^{j}\left( -\triangle \rho +\frac{2}{3}(A+B+C)\rho \right)
+t^{2}\sigma _{0}^{j} \\
r &:&=t^{2}l_{0}+tx^{j}l_{j}-nt\triangle \rho +t(n-1)\frac{2}{3}(A+B+C)\rho .
\end{eqnarray*}

This suggests%
\begin{equation}
\dot{E}_{c}+R_{c}=0  \label{ConformalIdentity}
\end{equation}%
where%
\begin{equation*}
R_{c}:=t\int \left( (n-1)\frac{2}{3}(A+B+C)\rho +x^{j}l_{j}\right) dx.
\end{equation*}

To determine $\dot{E}_{c}$, we calculate%
\begin{eqnarray*}
&&\int x^{j}l_{j}dx \\
&=&\frac{8}{3}\int x^{j}v(x-y,x-z)(\rho _{1})_{j}\rho _{2}\rho _{3}-\frac{8}{%
3}\int x^{j}v(x-y,x-z)\rho _{1}(\rho _{2})_{j}\rho _{3} \\
&=&\frac{8}{3}\int \rho _{3}v(x-y,x-z)x^{j}[(\rho _{1})_{j}\rho _{2}-\rho
_{1}(\rho _{2})_{j}] \\
&=&\frac{16}{3}\int \rho _{3}v(x-y,x-z)x\cdot \nabla _{1-2}(\rho _{1}\rho
_{2}) \\
&=&\frac{8}{3}\int \rho _{3}v(x-y,x-z)(x+y)\cdot \nabla _{1-2}(\rho _{1}\rho
_{2})+\frac{8}{3}\int \rho _{3}v(x-y,x-z)(x-y)\cdot \nabla _{1-2}(\rho
_{1}\rho _{2}) \\
&=&-\frac{8}{3}\int \rho _{1}\rho _{2}\rho _{3}\nabla _{1-2}v(x-y,x-z)\cdot
(x+y)-\frac{8}{3}\int \rho _{1}\rho _{2}\rho _{3}\nabla
_{1-2}v(x-y,x-z)\cdot (x-y) \\
&&-\frac{8}{3}n\int \rho _{1}\rho _{2}\rho _{3}v(x-y,x-z)
\end{eqnarray*}%
where $\nabla _{1-2}=\nabla _{x-y}=\frac{1}{2}(\nabla _{x}-\nabla _{y}).$

Insert formula \ref{Formula:DefinitonOfv-epsilon}%
\begin{equation*}
v(x-y,x-z)=v_{0}(x-y)v_{0}(x-z)+v_{0}(x-y)v_{0}(y-z)+v_{0}(x-z)v_{0}(y-z)
\end{equation*}%
to the above computation, it is%
\begin{eqnarray*}
&&\int x^{j}l_{j}dx \\
&=&-\frac{8}{3}\int \rho _{1}\rho _{2}\rho _{3}v_{0}(x-z)\left( \nabla
_{1-2}v_{0}(x-y)\right) \cdot (x+y)-\frac{4}{3}\int \rho _{1}\rho _{2}\rho
_{3}v_{0}(x-y)\left( \nabla _{x}v_{0}(x-z)\right) \cdot (x+y) \\
&&-\frac{8}{3}\int \rho _{1}\rho _{2}\rho _{3}v_{0}(y-z)\left( \nabla
_{1-2}v_{0}(x-y)\right) \cdot (x+y)+\frac{4}{3}\int \rho _{1}\rho _{2}\rho
_{3}v_{0}(x-y)\left( \nabla _{y}v_{0}(y-z)\right) \cdot (x+y) \\
&&-\frac{4}{3}\int \rho _{1}\rho _{2}\rho _{3}v_{0}(y-z)\left( \nabla
_{x}v_{0}(x-z)\right) \cdot (x+y)+\frac{4}{3}\int \rho _{1}\rho _{2}\rho
_{3}v_{0}(x-z)\left( \nabla _{y}v_{0}(y-z)\right) \cdot (x+y) \\
&&-\frac{2}{3}n\int (A+B+C)\rho \\
&&-\frac{8}{3}\int \rho _{1}\rho _{2}\rho _{3}v_{0}(x-z)\left( \nabla
_{1-2}v_{0}(x-y)\right) \cdot (x-y)-\frac{4}{3}\int \rho _{1}\rho _{2}\rho
_{3}v_{0}(x-y)\left( \nabla _{x}v_{0}(x-z)\right) \cdot (x-y) \\
&&-\frac{8}{3}\int \rho _{1}\rho _{2}\rho _{3}v_{0}(y-z)\left( \nabla
_{1-2}v_{0}(x-y)\right) \cdot (x-y)+\frac{4}{3}\int \rho _{1}\rho _{2}\rho
_{3}v_{0}(x-y)\left( \nabla _{y}v_{0}(y-z)\right) \cdot (x-y) \\
&&-\frac{4}{3}\int \rho _{1}\rho _{2}\rho _{3}v_{0}(y-z)\left( \nabla
_{x}v_{0}(x-z)\right) \cdot (x-y)+\frac{4}{3}\int \rho _{1}\rho _{2}\rho
_{3}v_{0}(x-z)\left( \nabla _{y}v_{0}(y-z)\right) \cdot (x-y).
\end{eqnarray*}

Notice that, in the above calculation. 
\begin{eqnarray*}
1st+3rd &=&0 \\
\left( 4th+11th\right) +\left( 6th+13th\right) &=&0 \\
\left( 2nd+9th\right) +\left( 5th+12th\right) &=&-\frac{8}{3}\int \rho
_{1}\rho _{2}\rho _{3}v_{0}(y-z)\left( \nabla _{x}v_{0}(x-z)\right) \cdot
(x-z).
\end{eqnarray*}%
So $\int x^{j}l_{j}dx$ simplifies to%
\begin{eqnarray*}
\int x^{j}l_{j}dx &=&-\frac{2}{3}n\int (A+B+C)\rho \\
&&-\frac{8}{3}\int \rho _{1}\rho _{2}\rho _{3}v_{0}(x-z)\left( \left( \nabla
v_{0}\right) (x-y)\right) \cdot (x-y) \\
&&-\frac{8}{3}\int \rho _{1}\rho _{2}\rho _{3}v_{0}(y-z)\left( \left( \nabla
v_{0}\right) (x-y)\right) \cdot (x-y) \\
&&-\frac{8}{3}\int \rho _{1}\rho _{2}\rho _{3}v_{0}(y-z)\left( \left( \nabla
v_{0}\right) (x-z)\right) \cdot (x-z).
\end{eqnarray*}
Hence%
\begin{eqnarray*}
R_{c} &=&t\int \left( (n-1)\frac{2}{3}(A+B+C)\rho +x^{j}l_{j}\right) dx \\
&=&-\frac{8}{3}t\int \rho _{1}\rho _{2}\rho _{3}v_{0}(x-z)\left\{
v_{0}(x-y)+\left( \left( \nabla v_{0}\right) (x-y)\right) \cdot (x-y)\right\}
\\
&&-\frac{8}{3}t\int \rho _{1}\rho _{2}\rho _{3}v_{0}(y-z)\left\{
v_{0}(x-y)+\left( \left( \nabla v_{0}\right) (x-y)\right) \cdot (x-y)\right\}
\\
&&-\frac{8}{3}t\int \rho _{1}\rho _{2}\rho _{3}v_{0}(y-z)\left\{
v_{0}(x-z)+\left( \left( \nabla v_{0}\right) (x-z)\right) \cdot
(x-z)\right\} .
\end{eqnarray*}%
When $v_{0}$ decays fast enough, we have 
\begin{equation*}
R_{c}\geqslant 0,
\end{equation*}%
or in other words 
\begin{equation*}
\dot{E}_{c}\leqslant 0\text{ for }t\geqslant 1,
\end{equation*}%
which implies $E_{c}(t)$ does not increase as claimed.

\section{Acknowledgements}

The author's thanks go to Professors Matei Machedon, Manoussos Grillakis and
Dionisios Margetis for discussions related to this work, and to the
anonymous referees for their many insightful comments and helpful
suggestions.

\end{document}